\tikzstyle{black dot}=[fill=black, draw=black, shape=circle, minimum size=3pt, inner sep=0pt]
\tikzstyle{black dot small}=[fill=black, draw=black, shape=circle, minimum size=3pt, inner sep=0pt]
\tikzstyle{big white circle}=[fill=white, draw=black, shape=circle, minimum width=0.75cm]
\tikzstyle{white dot big}=[fill=white, draw=black, shape=circle, inner sep=1pt]
\tikzstyle{white dot}=[fill=white, draw=black, shape=circle, minimum size=3pt, inner sep=0pt]
\tikzstyle{flat box}=[fill=white, draw=black, shape=rectangle, minimum width=2.5cm, minimum height=0.5cm]
\tikzstyle{square}=[fill=white, draw=black, shape=rectangle]
\tikzstyle{flat box 2}=[fill=white, draw=black, shape=rectangle, minimum height=0.5cm, minimum width=1.0cm]
\tikzstyle{mid arrow}=[-, postaction={on each segment={mid arrow}}]
\tikzstyle{end arrow}=[->, >=latex]
\tikzstyle{red mid arrow}=[-, draw={rgb,255: red,214; green,42; blue,51}, postaction={on each segment={mid arrow}}, line width=1pt]
\tikzstyle{blue}=[-, draw={rgb,255: red,23; green,37; blue,167}, line width=1pt, dashed]
\tikzstyle{blue mid arrow}=[-, draw={rgb,255: red,23; green,37; blue,167}, postaction={on each segment={mid arrow}}, line width=1pt]
\tikzset{
  on each segment/.style={
    decorate,
    decoration={
      show path construction,
      moveto code={},
      lineto code={
        \path [#1]
        (\tikzinputsegmentfirst) -- (\tikzinputsegmentlast);
      },
      curveto code={
        \path [#1] (\tikzinputsegmentfirst)
        .. controls
        (\tikzinputsegmentsupporta) and (\tikzinputsegmentsupportb)
        ..
        (\tikzinputsegmentlast);
      },
      closepath code={
        \path [#1]
        (\tikzinputsegmentfirst) -- (\tikzinputsegmentlast);
      },
    },
  },
  mid arrow/.style={postaction={decorate,decoration={
        markings,
        mark=at position .5 with {\arrow[#1]{stealth}}
      }}},
}
\let\emph\undefined
\newcommand{\emph}[1]{\textsl{#1}}
\numberwithin{equation}{section}
\numberwithin{equation}{section}
\newtheoremstyle{style1}
  {13pt}
  {13pt}
  {}
  {}
  {\normalfont\bfseries}
  {.}
  {.5em}
  {}
\theoremstyle{style1}
\newtheorem{definition}[equation]{Definition}
\newtheorem{example}[equation]{Example}
\definecolor{Blue}  {rgb} {0.282352,0.239215,0.803921}
\definecolor{Green} {rgb} {0.133333,0.545098,0.133333}
\definecolor{Red}   {rgb} {0.803921,0.000000,0.000000}
\definecolor{Violet}{rgb} {0.580392,0.000000,0.827450}
\newcounter{commentnumber}
\DeclareMathOperator{\Ad}{Ad}
\DeclareMathOperator{\im}{im}
\DeclareMathOperator{\e}{e}
\newcommand\Hilb{{\mathscr{H}\mathsf{ilb}}}
\newcommand{\Bord}[1]{{\mathscr{B}\mathsf{ord}_2^{\hspace{1pt} #1}}}
\newcommand{\ctimes}{ \circlearrowleft }
\newcommand\Cb            {\mathbb{C}}
\newcommand\Rb            {\mathbb{R}}
\newcommand\Zb            {\mathbb{Z}}
\newcommand\Ac            {\mathcal{A}}
\newcommand\Cc            {\mathcal{C}}
\newcommand\Dc            {\mathcal{D}}
\newcommand\Fc            {\mathcal{F}}
\newcommand\Hc            {\mathcal{H}}
\newcommand\Pc            {\mathcal{P}}
\newcommand\Sc            {\mathcal{S}}
\newcommand\Wc            {\mathcal{W}}
\newcommand\Zc            {\mathcal{Z}}
\newcommand\funZ            {\mathcal{Z}}
\newtheoremstyle{style2}
  {13pt}
  {13pt}
  {\slshape}
  {}
  {\normalfont\bfseries}
  {.}
  {.5em}
  {}
\theoremstyle{style2}
\newtheorem{lemma}[equation]{Lemma}
\newtheorem{theorem}[equation]{Theorem}
\newtheorem{proposition}[equation]{Proposition}
\newcommand{\tr}{\mathrm{tr}}
\newcommand{\R}{\mathbb{R}}
\newcommand{\C}{\mathbb{C}}
\newcommand{\Z}{\mathbb{Z}}
\newcommand{\N}{\mathbb{N}}
\newcommand{\Bscr}{\mathscr{B}}
\newcommand{\sfH}{{\mathsf{H}}}
\newcommand{\End}{\mathsf{End}}
\newcommand{\Aut}{\mathsf{Aut}}
\newcommand{\Out}{\mathsf{Out}}
\newcommand{\Com}{\mathsf{Com}}
\newcommand{\Cl}{C\ell^2}
\newcommand{\Hom}{\mathsf{Hom}}
\newcommand{\Tr}{\text{Tr}}
\newcommand{\id}{\text{id}}
\newcommand{\iu}{\mathrm{i}}
\newcommand{\dd}{\mathrm{d}}
\newcommand{\Bun}{\mathsf{Bun}}
\newcommand{\BunG}{\mathsf{Bun}_G}
\begin{document}

\renewcommand{\thefootnote}{\fnsymbol{footnote}}

\begin{flushright}
\small
{\sf EMPG--19--17} \\
\textsf{Hamburger Beitr\"age zur Mathematik Nr. 795}\\
{\sf ZMP--HH/19--11} \\
\end{flushright}

\vspace{10mm}

\begin{center}
	\textbf{\LARGE{Symmetry defects and orbifolds \\[3mm] of two-dimensional Yang-Mills theory}}\\
	\vspace{1cm}
	{\large  Lukas M\"uller$^{\,a,}$}\footnote{Email: \ {\tt lm78@hw.ac.uk}}, \ \ {\large Richard J. Szabo$^{\,a,b,}$}\footnote{Email: \ {\tt r.j.szabo@hw.ac.uk}}  \ \ and \ \ {\large L\'{o}r\'{a}nt Szegedy$^{\,c,}$}\footnote{Email: \ {\tt lorant.szegedy@uni-hamburg.de}}

\vspace{5mm}

{\em $^a$ Department of Mathematics\\
Heriot-Watt University\\
Colin Maclaurin Building, Riccarton, Edinburgh EH14 4AS, U.K.}\\
and {\em Maxwell Institute for Mathematical Sciences, Edinburgh, U.K.}\\
and {\em The Higgs Centre for Theoretical Physics, Edinburgh, U.K.}
\\[3mm]

{\em $^b$ Dipartimento di Scienze e Innovazione
  Tecnologica}\\ {\it Universit\`a del Piemonte Orientale}\\
{\it Viale T. Michel 11, 15121 Alessandria, Italy}\\ and
{\it Arnold--Regge Centre, Via P. Giuria 1, 10125 Torino, Italy}
\\[3mm]

{\em $^c$ Fachbereich Mathematik\\ 
Universit\"at Hamburg\\
Bundesstra\ss e 55, D\,--\,20146 Hamburg, Germany}
\\[7pt]

\end{center}

\vspace{1cm}

\begin{abstract}
\noindent
We describe discrete symmetries of two-dimensional Yang-Mills theory with gauge group $G$ associated to 
outer automorphisms of $G$, and their corresponding defects. We show
that the gauge theory partition function with
defects can be computed as a path integral over the space of twisted
$G$-bundles, and calculate it exactly. We argue that 
its weak-coupling limit computes the
symplectic volume of the moduli space of flat twisted $G$-bundles on a surface. 
Using the defect network approach to generalised orbifolds, we gauge
the discrete symmetry and
construct the corresponding orbifold theory, which is again
two-dimensional Yang-Mills theory but with gauge group given by an
extension of $G$ by outer automorphisms. With the help of the orbifold
completion of the topological defect bicategory of two-dimensional
Yang-Mills theory, we describe the reverse orbifold using a Wilson
line defect for the discrete gauge symmetry.
We present our results using two complementary approaches: in the
lattice regularisation of the path integral, and in the functorial approach to
area-dependent quantum field theories with defects via regularised Frobenius algebras.  
\end{abstract}

\newpage

\tableofcontents

\bigskip

\setcounter{footnote}{0}
\renewcommand{\thefootnote}{\arabic{footnote}}

\section{Introduction}
\label{sec:intro}

Defects are impurities, discontinuities, boundary conditions, or other ways of modifying a 
quantum field theory which are localised on submanifolds of various
codimensions of the spacetime. In general,
defects can meet on other defects of higher codimensions. 
They are an important tool in the study of non-perturbative features
in quantum field theories and condensed matter systems, and of
symmetry protected topological phases.
The structure of the collection of all defects is closely
related to higher categories, see \cite{Davydov:2011dt,Kapustin} for a discussion in the case of topological field
theories.   
Defects are extended observables in quantum field theories, so that one can compute a partition function
or correlation function in the presence of a defect network. A prominent class of examples 
are the Wilson line observables in gauge theories.   

A particularly simple class of defects are topological defects which can be continuously deformed
without changing any physical observables. Two topological defects can be brought close together
such that their contribution to the partition function can be effectively described by another defect. This defines an operation
for topological defects called fusion. 
A defect $\mathtt{D}$ is called invertible 
if there exists another defect $\mathtt{D}^{-1}$ such that the fusion of  $\mathtt{D}$ with $\mathtt{D}^{-1}$ is the trivial 
defect $1$; every quantum field theory admits a trivial defect $1$ 
which does not change the value of the partition function.
       
Invertible topological defects of codimension 
one, or domain walls, are closely related to symmetries of the field theory~\cite{Fuchs:2015sym}. Given a symmetry we can construct a topological
domain wall which acts on fields passing through it by applying the symmetry. 
On the other hand, one can recover the action of the symmetry on fields by wrapping the defect around
a field insertion. The description of symmetries and their
corresponding background gauge fields via topological domain walls makes it possible
to describe their action on other quantities of the quantum field theory such as boundary conditions 
or other (non-topological) defects~\cite{Tachikawa:2017g}.\footnote{This description appears in much earlier literature on conformal field theory, where it is an immediate consequence of the condition that the topological defect commutes with both copies of the Virasoro algebra (see e.g.~\cite{Petkova}).}

In this paper we illustrate the relationship between defects and symmetries for a simple 
class of symmetries of two-dimensional Yang-Mills theory. Yang-Mills
theory on a Riemann surface has a long and rich history as an exactly
solvable quantum gauge theory which is the first example of a
non-abelian gauge theory that can be reformulated
as a (topological) string theory (see~\cite{Cordes:1995ym} for a
review). Mathematically it has served as a tool for studying the
topology of various moduli spaces of interest in geometry and
dynamical systems, such as the
moduli spaces of flat connections~\cite{Atiyah:1982fa,Witten:1991gt,Witten:1992xu}, the Hurwitz moduli spaces of branched
coverings~\cite{Cordes:1995ym,Kokenyesi:2016cun}, and the 
principal moduli spaces of holomorphic
differentials~\cite{Griguolo:2004uz,Griguolo:2004jp}. In the following
we will study how some of these features are modified in the
presence of domain walls, which in two dimensions can be thought of as
symmetry twist branch cuts on the surface~\cite{Barkeshli:2014cna}, corresponding to outer automorphisms of
the gauge group.\footnote{Similar defects are constructed
  in~\cite{Fuchs:2015bra} in the context of
  three-dimensional Dijkgraaf-Witten theories, and have also been
  proposed to have physical realisations in certain topological states
of matter, see for example~\cite{Barkeshli:2012pr}.}

Recently it was shown that defects in area-dependent two-dimensional quantum field theories
can be studied in terms of `regularised Frobenius algebras' and their
bimodules~\cite{Runkel:2018aqft}. Let $G$ be a compact semi-simple Lie group.
The square-integrable functions on $G$ form the regularised Frobenius algebra underlying the two-dimensional
Yang-Mills theory with gauge group $G$. In \cite{Runkel:2018aqft} invertible defects are constructed algebraically  
from outer automorphisms of $G$. For example, the only non-trivial
outer automorphism of $G=SU(n)$ for $n>2$ is the complex
conjugation automorphism $g\longmapsto\bar g$; further examples of groups with non-trivial outer automorphisms
are displayed in Table~\ref{table:OutG}.
\begin{center}
\begin{table}[htb]
\begin{center}
	\begin{tabular}{|c||c|c|c|c|}
		\hline {$G$} & {$SU(n)$ , $n>2$}& {$SO(2n)$ , $n>4$} & {$SO(8)$} &
                                                                     {$E_6$}
		\\[4pt] \hhline{|=|=|=|=|=|}
	$\Out(G)$ & $\Zb_2$ & $\Zb_2$ & $S_3$ & $\Zb_2$
		\\\hline
	\end{tabular}
\end{center}
\caption{\small The compact connected simple
  Lie groups $G$ with non-trivial outer automorphism groups
  $\Out(G)$. Here $\Zb_2$ is the abelian cyclic group of order~two and $S_3$
  is the non-abelian symmetric group of degree~three.}
\label{table:OutG}
\end{table}
\end{center}

One motivation for the present investigation is to give a physical 
interpretation of these defects, which is the content of the first part
of this paper (Section~\ref{Sec: Physics}). We show in Section~\ref{Sec:Symmetry} that for an outer automorphism
$\varphi\colon G\longrightarrow G$ there is a symmetry of two-dimensional Yang-Mills theory 
sending a gauge field described by a principal bundle with connection to the associated $G$-bundle
for the group homomorphism $\varphi$ with its induced connection. We
proceed to show that the partition function in the presence of a network of the associated defects on a closed oriented
surface $\Sigma$ can 
be computed as a path integral over the space of `twisted bundles'.
Let $\Bun_{G}^\nabla(\Sigma)$ be the 
space of principal
$G$-bundles with connection on $\Sigma$, $\Out(G)$ the (finite) group of outer
automorphisms of $G$
and $G\rtimes \Out(G)$ the semi-direct product of groups.\footnote{Generally, $\Out(G)$ is defined as the quotient of the group 
of automorphisms $\Aut(G)$ by the normal subgroup of inner automorphisms, and hence is not 
a subgroup of $\Aut(G)$. However, the construction of outer automorphisms from
symmetries of Dynkin diagrams, reviewed in Section~\ref{Sec:Lattice}, gives an
embedding of $\Out(G)\xhookrightarrow{ \ ~~ \ } \Aut(G)$. We implicitly use this embedding 
throughout the paper, for example when defining the semi-direct product $G\rtimes \Out(G)$.}
There is a natural group homomorphism $G\rtimes \Out(G)\longrightarrow \Out(G)$ which 
induces a map $\Bun_{G\rtimes \Out(G)}^\nabla(\Sigma)\longrightarrow \Bun^\nabla_{\Out(G)}(\Sigma)$.     
In Section~\ref{Sec:Defect} we construct an $\Out(G)$-bundle $D\longrightarrow\Sigma$ from the defect network and
define the space of $D$-twisted $G$-bundles with connection on $\Sigma$
as the fibre of $D$ for the map
$\Bun_{G\rtimes \Out(G)}^\nabla(\Sigma)\longrightarrow \Bun^\nabla_{\Out(G)}(\Sigma)$. 
Concretely, a twisted bundle on $\Sigma$ is given 
by a $G\rtimes \Out(G)$-bundle together with a gauge transformation from the induced 
$\Out(G)$-bundle to $D$ which specifies where the branch cuts on
$\Sigma$ are located.       

We present exact calculations of the partition function in the presence of defect networks 
using the lattice regularisation of two-dimensional Yang-Mills
theory in Section~\ref{Sec:Lattice}. This shows in particular 
that the defects corresponding to the symmetry induced by $\varphi$ agree with the 
defects constructed
in~\cite{Runkel:2018aqft}.
The weak-coupling limit of two-dimensional Yang-Mills theory can be used to compute 
the symplectic volume of 
the moduli space of flat connections on $\Sigma$~\cite{Witten:1991gt}. Similarly, we argue in 
Section~\ref{Sec:Zero} that the weak-coupling limit of two-dimensional Yang-Mills theory
in the presence of defects computes the symplectic volume of the moduli space of 
flat twisted bundles on $\Sigma$, introduced for example in~\cite{Meinrenken:Convexity}.

The description of symmetries via defects is particularly convenient in the context
of orbifold theories.\footnote{Orbifolds of two-dimensional conformal field theories
  with respect to an outer automorphism group symmetry were studied
  long ago, see for example~\cite{Schellekens:1990xy,Birke:1999ik}.}
The partition function of the orbifold theory
can be computed by evaluating the original theory on a sufficiently fine defect network 
labeled with symmetry defects.\footnote{By `sufficiently fine' we mean
a defect network that can be obtained as the dual graph of a
triangulation of $\Sigma$.} This requires the introduction of point defects 
--- called `junction fields' ---
on which three defect lines can meet.
We use the defect approach to orbifolds~\cite{FFSR:2009orb,Carqueville:2012orb,Brunner:2013orb,Carqueville:2017orb} 
in Section~\ref{Sec:Orbifold}
to show that the orbifold theory is the Yang-Mills theory on $\Sigma$ based on the gauge group 
$G\rtimes \Out(G)$. An advantage of the defect approach to orbifolds is that it
also works for suitable non-invertible defects. We use this generalised orbifold 
construction to show in Section~\ref{sec:reverseorb} that the orbifold of the Yang-Mills theory with gauge 
group $G\rtimes \Out(G)$, with respect to the Wilson line defects induced from 
the regular 
representation $L^2(\Out(G))$ of $\Out(G)$,
is the Yang-Mills theory on $\Sigma$ with gauge group
$G$; these defects are invertible, and hence correspond to symmetry defects, only when
$\Out(G)$ is an abelian group. This may be thought of as a duality between these two quantum Yang-Mills theories with defects.

The second part of this paper (Section~\ref{Sec:Funct}) is concerned with a mathematically 
rigorous formulation of the 
orbifold construction performed in the first part. We aim to show that this is a good illustration of the power of the approach to 
two-dimensional area-dependent quantum field theories via functorial field theories
and regularised Frobenius algebras. A regularised 
Frobenius algebra consists of a Hilbert space $A$ equipped with families
of bounded linear operators $\mu_a\colon A\otimes A\longrightarrow A$, $\eta_a\colon \C 
\longrightarrow A$, $\Delta_a \colon A\longrightarrow A\otimes A$ and $\varepsilon_a \colon A \longrightarrow \C$
parameterised by a positive real number $a\in \R_{>0}$. These maps are required to be continuous 
in an appropriate sense and to satisfy parameterised versions of the
usual relations for 
Frobenius algebras. We give the full definition in Section~\ref{Sec:State-Sum-Intro}. 

The example relevant to this paper is the Hilbert space of square-integrable functions 
$A=L^2(G)$ on a compact semi-simple Lie group $G$, which becomes a regularised Frobenius algebra 
via the morphisms
\begin{align}
\hspace{-5mm} \eta_a(1)& =
           \sum_{\alpha\in\widehat{G}}\,\dim\alpha\,\exp\Big(-a\,\frac{C_2(\alpha)}{2}\Big)\,
           \chi_\alpha \ , 
& \mu_a(f\otimes g)&= \eta_a(1)\ast(f\ast g) \ , \nonumber \\[4pt]
\hspace{-5mm} \varepsilon_a(f)&= \big(\eta_a(1)\ast f\big)(1) \ , & \Delta_a f &=
                                             \Delta\big(\eta_a(1)\ast f\big)
                                             \  , \label{eq:L2G-RFA-3}
\end{align}   
where the sum runs over all isomorphism classes $\alpha$ of
irreducible representations of $G$ of dimension $\dim\alpha$,
$\chi_\alpha:G\longrightarrow\C$ denotes the corresponding character and $C_2(\alpha)$ is the 
value of the quadratic Casimir operator $C_2$ in the representation
$\alpha$. 
Here
\begin{align}
(f* g)(x)= \int_{G}\, \dd y \ f\big(x\,y^{-1}\big)\,g(y) \qquad \mbox{and}
           \qquad (\Delta f )(x,y)& = f(x\,y)
\end{align}
are the usual convolution product and coproduct on $L^2(G)$.
The regularised Frobenius algebra $L^2(G)$ is the input for the state sum construction of two-dimensional Yang-Mills theory
in~\cite{Runkel:2018aqft}, which makes the lattice regularisation of
Section~\ref{Sec: Physics} precise. 
The centre of $L^2(G)$ is the commutative regularised Frobenius algebra of class functions $C\ell^2(G)$ on $G$, which describes two-dimensional Yang-Mills theory without defects.

In this approach, defects between different Yang-Mills theories
correspond to dualisable bimodules between the regularised Frobenius algebras $L^2(G)$
and $L^2(G')$. Such a bimodule consists of a Hilbert space $M$ together with a family of maps 
$\rho_{a,b}\colon L^2(G) \otimes M \otimes L^2(G')\longrightarrow M$ parameterised 
by two positive real numbers $a,b \in \R_{>0}$, satisfying a parameterised version of the
usual bimodule relations; we refer again to Section~\ref{Sec:State-Sum-Intro} for more details. 
The defect described in the first part of the paper corresponds to the bimodule 
$L_\varphi= L^2(G)$ with action twisted by the outer automorphism $\varphi$ as
\begin{align}
\rho_{a,b}\colon L^2(G)\otimes L^2(G) \otimes L^2(G) &\longrightarrow
                                                       L^2(G) \ , \\
f\otimes h \otimes g &\longmapsto \mu_a\big(f\otimes\mu_b(h\otimes \varphi^*g)\big) \ .
\label{eq:twisted-bimodule-action}
\end{align}     

These observations are the starting point for the content of second part of the paper.
After a brief review of the state sum construction of area-dependent quantum field 
theories from regularised Frobenius algebras and their bimodules in
Section~\ref{Sec:State-Sum-Intro}, we make the 
mathematical structure of topological defects in two-dimensional Yang-Mills theories 
precise by introducing an idempotent complete bicategory of topological defects
in Section~\ref{Sec:Defect-Bicategory}.
The input for a generalised orbifold construction in this framework is a strongly
separable symmetric Frobenius algebra in an endomorphism category inside the topological defect 
bicategory. In Section~\ref{Sec:bimodule-from-Hopf} we construct such a Frobenius 
algebra from 
the defects which is isomorphic to a Frobenius algebra built from a bimodule
structure on the square-integrable functions $L^2(G\rtimes \Out(G))$ on the semi-direct 
product $G\rtimes \Out(G)$. In Section~\ref{Sec:Orbifolding-Out-G} we compute the corresponding 
orbifold theory using the state sum construction and show, using the abstract
orbifold completion of the defect bicategory \cite{Carqueville:2012orb}, that the backwards orbifold 
can be performed using Wilson line defects in Section~\ref{Sec:Backwards-Orbifold}.      
                       
\subsection*{Acknowledgments}

We thank Sanjaye Ramgoolam, Ingo Runkel and Lennart Schmidt for
helpful discussions and correspondence.
This work was supported by the COST Action MP1405 ``Quantum Structure of Spacetime'', funded by the European Cooperation in Science and Technology (COST).
The work of L.M. was supported by the Doctoral Training Grant ST/N509099/1 from the UK Science and Technology Facilities Council (STFC).
The work of R.J.S. was supported in part by the STFC Consolidated
Grant ST/P000363/1 ``Particle Theory at the Higgs Centre'' and the
Istituto Nazionale di Fisica Nucleare (INFN, Torino).
L.S. gratefully acknowledges the Max Planck Institute for Mathematics
in Bonn for financial support and hospitality. 
L.S. is partially supported by the DFG Research Training Group 1670 “Mathematics
Inspired by String Theory and Quantum Field Theory”.
 
\section{Discrete symmetries and defects of Yang-Mills theory}
\label{Sec: Physics}

Let $G$ be a compact semi-simple Lie group with Lie algebra
$\mathfrak{g}$, and 
$\varphi\colon G \longrightarrow G$
an outer automorphism of $G$. In this section we construct a groupoid homomorphism
$\varphi \colon \BunG^\nabla(\Sigma) \longrightarrow
\BunG^\nabla(\Sigma)$ on the topological groupoid of principal $G$-bundles with connection for every closed
oriented surface $\Sigma$. We show in
Section~\ref{Sec:Symmetry} that the Yang-Mills action functional on $\Sigma$ is invariant under this 
transformation, so that $\varphi$ defines a symmetry of the gauge theory. In Section~\ref{Sec:Defect} we study the
corresponding defects in terms of twisted bundles, and calculate the
partition functions exactly
using a lattice
regularisation of the quantum gauge theory in Section~\ref{Sec:Lattice}.  
The partition function for 
a given defect configuration localises in the weak-coupling limit onto
the moduli space of flat twisted $G$-bundles, similarly to the
untwisted case~\cite{Witten:1991gt}. In this limit the
partition function computes the symplectic volume of this moduli space, defined for example in~\cite{Meinrenken:Convexity}. We will
use the combinatorial quantisation of two-dimensional Yang-Mills theory to make a precise conjecture for this volume in some simple cases in Section~\ref{Sec:Zero}. In Section~\ref{Sec:Orbifold} 
we compute the corresponding orbifold theory, and subsequently the
reverse orbifold theory in Section~\ref{sec:reverseorb}. 

\subsection{Description of the \texorpdfstring{$\Out(G)$}{Out(G)}-symmetry}\label{Sec:Symmetry} 

Let $\pi:P\longrightarrow \Sigma$ be a 
principal $G$-bundle on $\Sigma$, and let
$\Ad(P)=(P\times\mathfrak{g})/G$ be the vector bundle on $\Sigma$ with
fibre $\mathfrak{g}$, where $G$ acts on $P$ by the
principal bundle action and on $\mathfrak{g}$ by the adjoint
action. Let $A \in
\Omega^1(P;\mathfrak{g})$ be a connection on $P$; its curvature $F_A=\dd A+A\wedge A$ is a
two-form on $\Sigma$ valued in the adjoint bundle associated with $P$:
$F_A\in\Omega^2(\Sigma;\Ad(P))$. The group of gauge transformations is
the automorphism group $\Aut(P)$ of $P$ consisting of $G$-equivariant
diffeomorphisms $g:P\longrightarrow P$ with $\pi\circ g=\pi$. 

We define a new bundle $
\varphi (P)$ with the same underlying total space $P$ and projection $\pi$, but 
with $G$-action $P\times G\longrightarrow P$ modified by pre-composing with $\varphi^{-1}$.
This can also be regarded as the induced $G$-bundle $P\times_\varphi G$ for the 
Lie group automorphism $\varphi\colon G \longrightarrow G$.
By differentiating $\varphi$ at 
the identity we get a Lie algebra automorphism $\varphi_*\colon \mathfrak{g}\longrightarrow
\mathfrak{g}$. 
The symmetry $\varphi$ acts on the connection $A$ 
by mapping it to $\varphi_*(A)$ where 
$\varphi_* \colon \mathfrak{g}\longrightarrow \mathfrak{g}$ acts only on             
the Lie algebra part of the one-form $A\in \Omega^1(P; \mathfrak{g})$;
this is the induced connection on the bundle $P\times_\varphi G$, see
for example~\cite[Section~1]{Freed:1995cs}. 

We use a local trivialisation to show that this is again a principal 
$G$-bundle with connection.
Let $\{{U}_i\}$ be an open cover of $\Sigma$ such that $(P,A)$ can be 
described by transition functions 
$g_{ij}\colon U_{ij}\longrightarrow G$ on overlaps $U_{ij}:={U}_i\cap{U}_j $ and local
one-forms $A_i \in \Omega^1({U}_i; \mathfrak{g})$. On the overlaps
$U_{ij}$ the one-forms $A_i$
are required
to satisfy 
\begin{align}\label{Eq: Condition connection}
A_j= \Ad_{g_{ij}}(A_i)+ g_{ij}^*\theta = g_{ij}^{-1}\,A_i\, g_{ij} + g^{-1}_{ij}\,\dd g_{ij}
\end{align}
where $\theta$ is the Maurer-Cartan one-form on $G$ and the second equality 
holds for matrix Lie groups. 
Under the automorphism, $g_{ij}$ is mapped to $ \varphi \circ g_{ij}\colon U_{ij}
\longrightarrow G$. We check that $\varphi_*(A_i)$ satisfy 
\eqref{Eq: Condition connection} with respect to the new transition functions:
\begin{align}
\Ad_{\varphi\circ g_{ij}}\big(\varphi_*(A_i)\big)+(\varphi\circ g_{ij})^*\theta &= \varphi_*\big(\Ad_{g_{ij}}(A_i)\big)+(\varphi\circ g_{ij})^*\theta \\[4pt]
&= \varphi_*\big(\Ad_{g_{ij}}(A_i)+ g_{ij}^*\theta\big) \\[4pt]
&= \varphi_*(A_j) \ ,
\end{align}
where the first equality follows from differentiating the equality
$\varphi(g^{-1})\,\varphi(\,\cdot\,)\,\varphi(g)= \varphi\big(g^{-1}\,(\,\cdot\,)\, g\big)$ of group 
automorphisms of $G$ and the second equality from the definition of
the Maurer-Cartan one-form.
The groupoid homomorphism acts on a gauge transformation $\xi_i:U_i \longrightarrow G$ by
composition with $\varphi$.

In the case of an inner automorphism of $G$, this would just describe the 
action of a global gauge transformation on the physical fields. 
For later use, we note that the action on the parallel transport in $P$
described by an element $g\in G$ with respect to a given local 
trivialisation is given by applying $\varphi$
to $g$.\footnote{This can be seen by applying the chain rule to the function 
$\varphi \circ g\colon [0,1]\longrightarrow G$, where $g\colon [0,1]\longrightarrow G$
is a solution to the differential equation describing the parallel transport in $P$.}

Let us now equip the smooth oriented surface $\Sigma$ with a
Riemannian metric such that $\Sigma$ has total area 
\begin{align}
a:=\int_\Sigma\, \dd\mu \ ,
\end{align}
where $\dd\mu=\star\,1$ is the Riemannian measure defined by the
corresponding Hodge duality operator $\star\,$. The metric induces the Hodge operator
$\star\,\colon \Omega^2(\Sigma;\Ad(P))\longrightarrow
\Omega^0(\Sigma;\Ad(P))$ acting only on the differential form part. We also equip the Lie
algebra $\mathfrak{g}$ with an invariant quadratic form, which we may
assume without loss of generality to be a suitable multiple, as described
in~\cite{Witten:1991gt}, of the Killing form on $\mathfrak{g}$
denoted by $\text{Tr}_{\mathfrak{g}}$. Then Yang-Mills theory on $\Sigma$ is
defined by the $\Aut(P)$-invariant action functional of
$(P,A)\in\BunG^\nabla(\Sigma)$ given by 
\begin{align}\label{eq:YMaction}
S_{\textrm{\tiny YM}}(P,A)\coloneqq \frac{1}{4e^2}\,\int_\Sigma\,
  \text{Tr}_{\mathfrak{g}}\big(F_A\wedge \star\,F_A \big) \ ,
\end{align}
where $e$ is the gauge coupling constant. 

The Yang-Mills action functional transforms under $\varphi$ according to
\begin{align}
\text{Tr}_{\mathfrak{g}}\big(F_A\wedge \star\, F_A\big) \xrightarrow{ \ \varphi \ }
  \text{Tr}_{\mathfrak{g}}\big(\varphi_*(F_A)\wedge \star\, \varphi_* (F_A) \big) \ ,
\end{align}
 where $\varphi_*$ again acts only on the Lie algebra part. 
The invariance of the action functional $S_{\textrm{\tiny YM}}(P,A)$ then follows from the fact that the Killing form is 
preserved by Lie algebra automorphisms. 
This shows that $\varphi$ induces a symmetry of the gauge
theory at the classical level. 

However,
this does not necessarily extend to a symmetry at the quantum
level. To define the quantum gauge theory, we note that the tangent
space at any point $(P,A)\in\BunG^\nabla(\Sigma)$ can be identified with
$\Omega^1(\Sigma;\Ad(P))$, and thus an
$\Aut(P)$-invariant symplectic form on $\BunG^\nabla(\Sigma)$ can be
defined by~\cite{Atiyah:1982fa}
\begin{align}\label{eq:SymplecticForm}
\omega(A_1,A_2)=\frac1{4\pi^2}\, \int_\Sigma\, \text{Tr}_{\mathfrak{g}}\big(A_1\wedge
  A_2\big) \ ,
\end{align}
for any two
$\Ad(P)$-valued one-forms $A_1$ and $A_2$ on $\Sigma$. This formally
induces an $\Aut(P)$-invariant symplectic measure on the
infinite-dimensional space
$\BunG^\nabla(\Sigma)$ which
we denote by $\mathscr{D}(P,A)$. 
Two-dimensional quantum Yang-Mills theory is then defined by 
the partition function which is given as the formal Euclidean path integral
\begin{align}\label{Eq: Partition function}
Z_{\textrm{\tiny YM}}\big(\Sigma,G,e^2\,a\big) := \int_{\BunG^\nabla
  (\Sigma)}\, \mathscr{D}(P,A) \ \exp\big(- S_{\textrm{\tiny YM}}(P,A)\big) 
  \ .
\end{align} 
The partition function
is invariant under area-preserving diffeomorphisms of the Riemann
surface $\Sigma$, and so depends on $e$ and the metric of $\Sigma$ only through the combination
$e^2\,a$~\cite{Witten:1991gt,Cordes:1995ym}. It is therefore possible
to set $e=1$ without loss of generality and consider all amplitudes of the gauge theory as functions of the
area $a$. In this sense two-dimensional Yang-Mills theory is a mild
variant of a topological field theory that is an
example of an `area-dependent quantum field theory'.

Now the same arguments used to establish invariance of the classical
Yang-Mills action functional under the symmetry $\varphi$ show that
the symplectic form \eqref{eq:SymplecticForm} is preserved by
$\varphi$. Thus the partition function \eqref{Eq: Partition function}
is {\it formally} invariant under the symmetry $\varphi$. However, the
definition of the formal path integral \eqref{Eq: Partition function}
requires a suitable regularisation to make it mathematically well-defined, and it may
be that there is no regularisation which preserves the symmetry; in
such a case the symmetry is anomalous and the partition function is not
invariant. We will show in Section~\ref{Sec:Lattice} below, using the
lattice regularisation of two-dimensional Yang-Mills theory, that
indeed the quantum gauge theory is also invariant under the symmetry
induced by the outer automorphism $\varphi:G\longrightarrow
G$. Different prescriptions for defining the path integral in
\eqref{Eq: Partition function} will differ by a
renormalisation ambiguity depending on the topology and area of
$\Sigma$ as~\cite{Witten:1991gt,Witten:1992xu}
\begin{align}\label{eq:ambiguity}
\Delta S =
\upsilon_1\, \chi(\Sigma)+\upsilon_2\, e^2\, a
\end{align}
 for arbitrary constants
$\upsilon_1,\upsilon_2\in\R$, where $ \chi(\Sigma)$ is the Euler
characteristic of $\Sigma$; this respects the invariance under
area-preserving diffeomorphisms and just multiplies the partition
function by a constant factor $\exp(-\Delta S)$. The parameters
$\upsilon_1,\upsilon_2$ depend only on the gauge group $G$ and
the renormalisation scheme, but not on the area or topology of the
surface $\Sigma$.

At this stage though we can already see how 
the symmetry acts on the Hilbert space of 
wavefunctions. 
The quantum Hilbert space of the gauge theory on a Cauchy
circle $S^1$ in $\Sigma$ consists of gauge-invariant functions from 
the collection of principal bundles with connection on $S^1$ to $\C$. 
The only gauge-invariant quantity that one can associate to a $G$-bundle with 
connection over $S^1$ is 
the conjugacy class of its holonomy around the circle. Hence the state
space is given by the Hilbert space
\begin{align}
Z_{\textrm{\tiny YM}}\big(S^1,G\big) = C\ell^2(G):=L^2(G)^{\Ad(G)}
\end{align}
 of class functions on $G$. 
The collection of characters $\chi_\alpha$ of
unitary irreducible 
representations $\alpha$ of $G$ provide a natural basis for the
state space. The symmetry acts unitarily on this Hilbert space by sending a class function $f\colon G \longrightarrow \C$ to the function $f\circ\varphi^{-1}$.

The Hamiltonian $H_{\textrm{\tiny YM}}$ of the gauge theory associated
to any foliation of the surface $\Sigma$ is given in terms of 
the quadratic Casimir operator $C_2$ 
 by~\cite{Cordes:1995ym,Witten:1992xu}
\begin{align}\label{eq:Hamiltonian}
H_{\textrm{\tiny YM}}=\frac{e^2}2\, L\, C_2+e^2\,L\,\upsilon_2 \ ,
\end{align}
where $L$ is the length of the Cauchy circle. The Hamiltonian
operator \eqref{eq:Hamiltonian} is the generator of time
translations. The time evolution operator $\exp(-T\,H_{\textrm{\tiny YM}})$ is parameterised by the
elapsed time $0\leq
T\leq \frac aL$, and its action on the
character basis is given by
\begin{align}
\exp(-T\,H_{\textrm{\tiny YM}})\,\chi_\alpha = \exp\bigg(-e^2\,L\,T\, \Big(
  \frac{C_2(\alpha)}2+\upsilon_2 \Big) \bigg) \ \chi_\alpha \ .
\end{align}        
The symmetry maps $\chi_\alpha$ to $\chi_{\varphi^*\alpha}$, and thus commutes with the time evolution operator since 
the invariance of the Killing form implies
$C_2(\alpha)=C_2(\varphi^*\alpha)$: The action of $\varphi$ on the
Casimir operator $C_2$ can be interpreted as a change
of basis in the Lie algebra corresponding to the Lie algebra automorphism
$\varphi_*$, which 
preserves the Killing form and hence maps dual coordinates 
to dual coordinates.
As a result, 
it only changes the choice of basis for the evaluation of the Casimir 
operator and leaves its value invariant.

\subsection{Defects and twisted bundles}\label{Sec:Defect}

Every symmetry of a quantum field theory comes with a corresponding invertible codimension one topological defect,
such that passing a field through the defect corresponds to the action of the 
symmetry on the field, as illustrated in Figure~\ref{Fig:Sym_Defect}. 
\begin{figure}[htb]
\small
\begin{center}
\begin{overpic}[scale=1]
{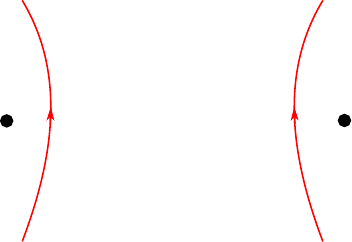}
\put(0,66){$\Psi$}
\put(157,66){$\varphi\cdot\Psi$}
\put(70,55){{\Large$\longleftrightarrow$}}
\end{overpic}
\end{center}
\caption{\small A defect corresponding to the symmetry $\varphi$,
  indicated by the directed lines. A field insertion $\Psi$ moving through 
the defect corresponds to the action of $\varphi$ on $\Psi$.}
\label{Fig:Sym_Defect}
\normalsize
\end{figure}
In general, different defects can join at lower-dimensional 
submanifolds. In the following we give a geometric description
of the partition function of two-dimensional Yang-Mills theory with
gauge group $G$ in the presence of an arbitrary network of defects
corresponding to the symmetry discussed in Section~\ref{Sec:Symmetry}.

A \emph{defect network} on the surface $\Sigma$
is determined by a triangulation of $\Sigma$, a choice of an orientation, and an 
element in $\Out (G)$ for every edge of the triangulation; the defect
network represents the dual triangulation. Since $\Out(G)$ is a finite
group, we require that around every 
vertex the product of all three elements in $\Out (G)$ is $\id_G$ (taking the
orientation into account as explained below). The
elements of $\Out(G)$ describe the types of defects corresponding to 
the edges. Edges labeled with $\id_G$ are interpreted as the absence of any 
defects (the trivial defects). This condition thus
expresses the fact that a defect with label $\varphi$ and a defect
with label $\varphi'$ can fuse at a codimension two junction to give a defect with label
$\varphi\,\varphi'$, and it allows us to express the labels as a
consistent configuration of domain walls throughout $\Sigma$.

Fix a defect network on $\Sigma$. It defines a principal
$\Out(G)$-bundle on $\Sigma$ as follows: 
For every face $i$ of the triangulation we pick an open neighbourhood
${U}_i$ which is only ``slightly bigger'' than the face.  
If there is an edge of the triangulation between ${U}_i$ and 
${U}_j$, we set the transition function from the left of the defect
to the right equal to the label of the corresponding edge. This is possible 
since we have picked an orientation for every edge. If the intersection of
${U}_i$ and ${U}_j$ is non-empty but there is no edge between
them, then we can go from ${U}_i$ to ${U}_j$ by passing through
a finite number of neighbourhoods for which there exists an edge for any
transition. The transition function $U_{ij}\longrightarrow\Out(G)$ is then uniquely fixed by the
cocycle condition. The consistency condition implies that this defines an 
$\Out(G)$-bundle, which we call $D$. Since $\Out(G)$ is a finite group,
$D$ carries a unique flat connection. The parallel transport from the 
centre of one face to an adjacent centre can be described by the action
of the element labeling the edge between the two faces. Hence passing to
the dual triangulation provides the holonomy description of the bundle
$D$ on $\Sigma$. 

The semi-direct product $G \rtimes \Out(G)$ is the group with 
elements $(g,\varphi)\in G\times \Out(G)$ and multiplication 
\begin{align}
(g,\varphi)\,{\textrm{\tiny$\bullet$}}\, (g',\varphi'):=\big(g\,\varphi(g'),\varphi\,
  \varphi'\big) \ . 
\end{align}
The projection onto the second factor 
$G \rtimes \Out(G) \longrightarrow \Out(G)$ is a group homomorphism and 
hence induces a map $\Bun_{G\rtimes \Out(G)}^\nabla(\Sigma) \longrightarrow
\Bun_{\Out(G)}^\nabla(\Sigma)$. A \emph{$D$-twisted $G$-bundle with connection}
is a $G \rtimes \Out(G)$-bundle with connection such that 
the induced $\Out(G)$-bundle is isomorphic to $D$. We denote by
$\Bun^\nabla_{G\downarrow D}(\Sigma)$ the space of $D$-twisted
$G$-bundles with connection on $\Sigma$.
 A more precise
but abstract definition states that the space of $D$-twisted 
$G$-bundles with connection is the homotopy fibre of the 
map $\Bun_{G\rtimes \Out(G)}^\nabla(\Sigma) \longrightarrow
\Bun_{\Out(G)}^\nabla(\Sigma)$. An object in this space also comes with
a particular choice of isomorphism which we suppress from the present
discussion. In the case of discrete gauge groups $G$, this definition reduces 
to the notion of relative bundles used in \cite{Fuchs:2013dw} to describe defects in three-dimensional Dijkgraaf-Witten theories.     

Twisted bundles implement the defects corresponding to the symmetry 
introduced in Section~\ref{Sec:Symmetry}. 
We can describe a twisted bundle with respect to the same open 
cover $\{U_i\}$ used to define the $\Out(G)$-bundle $D$. The transition functions for a
twisted bundle consist 
of pairs $(g_{ij}, \varphi_{ij})$, where $g_{ij}\colon
{U}_{ij}\longrightarrow G$ and 
$\varphi_{ij}$ are fixed to be the transition functions of $D$ 
(up to isomorphism). Hence $g_{ij}$ are the only free parameters we 
can choose. On triple overlaps $U_{ijk}=U_i\cap U_j\cap U_k$ the cocycle condition 
\begin{align}
(g_{ki}, \varphi_{ki})= (g_{kj},\varphi_{kj})\,{\textrm{\tiny$\bullet$}}\,(g_{ji},\varphi_{ji})
\end{align}  
implies that $g_{ki}= g_{kj}\,\varphi_{kj}(g_{ji})$, which can be interpreted
in the language of defects as the symmetry acting on the transition function
$g_{ji}$ when it passes through the defect labeled by $\varphi_{kj}$. 
A connection can be described locally by one-forms 
$A_i \in \Omega^1({U}_i;\mathfrak{g})$. 
This is the same data as required for a connection on a $G$-bundle.
However, its transformation rule is twisted. For instance, if all
$g_{ij}$ are trivial, then on $U_{ij}$ the connection one-forms
are required
to satisfy
\begin{align}
A_i = \Ad_{(1,\varphi_{ij})}( A_j) = {\varphi_{ij}}_*(A_j) \ .
\end{align} 
This is exactly the action of the symmetry on the connection one-forms
introduced in Section~\ref{Sec:Symmetry}. 
Gauge transformations of twisted bundles are described by elements $\xi_i \in 
\Omega^0(U_i;G)$ which, via the embedding $G\xhookrightarrow{ \ ~~ \ } G\rtimes \Out(G)$, induce a gauge transformation of the corresponding 
$G\rtimes \Out(G)$-bundle, or in other words gauge transformations of the 
$G\rtimes \Out(G)$-bundle which induce the trivial gauge transformation of
$D$.\footnote{Actually, we work with $G\rtimes \Out(G)$-gauge transformations that under the map 
$\Bun^\nabla_{G\rtimes \Out(G)}(\Sigma) \longrightarrow \Bun^\nabla_{\Out(G)}(\Sigma) $ relate the two identifications with $D$.}
A gauge transformation $\xi_i$ acts on the transition functions via 
\begin{align}
g_{ij}(x)\xrightarrow{ \ \xi_i \ } \xi_i(x)\,
g_{ij}(x)\,\varphi_{ij}\big(\xi_i(x)^{-1}\big)
\end{align}
for all $x\in U_{ij}$.

The Yang-Mills action functional for a twisted bundle is given by the
same formula \eqref{eq:YMaction} for the Yang-Mills action functional of the
corresponding $G\rtimes \Out(G)$-bundle. This action functional locally agrees 
with the Yang-Mills action functional for $G$-bundles. 

To define the corresponding quantum gauge theory, we note again that a
tangent vector to an arbitrary point $(P,A)\in\Bun^\nabla_{G\downarrow D}(\Sigma)$ is an
$\Ad(P)$-valued
one-form on $\Sigma$, where here $\Ad(P)$ is the vector bundle
associated to $P$ by the $G\rtimes \Out(G)$-action on the Lie algebra $\mathfrak{g}$. Given two tangent
vectors $A_1$ and $A_2$, we can define an $\Aut(P)$-invariant symplectic pairing by the same
formula \eqref{eq:SymplecticForm}. 
The partition function of two-dimensional quantum Yang-Mills theory in the background
of a defect network described by an $\Out(G)$-bundle $D$ on $\Sigma$
is then physically 
defined as the formal path integral
\begin{align}\label{Eq: Path integral definition defect}
Z_{\textrm{\tiny YM}}\big(\Sigma,G,e^2\,a;D\big) :=
  \int_{\Bun^\nabla_{G\downarrow D}(\Sigma)}\, \mathscr{D}(P,A) \ 
  \exp\big(- S_{\textrm{\tiny YM}}(P,A)\big) \ ,
\end{align}   
where the integration is taken over the space of $D$-twisted $G$-bundles
with connection up to gauge
transformations and the measure $\mathscr{D}(P,A)$ is induced by the symplectic 
form. This can be regarded as a part of the partition function of the
Yang-Mills theory with gauge group $G\rtimes \Out(G)$. We will come back to this point in 
Section~\ref{Sec:Orbifold}. 

\subsection{Combinatorial quantisation of defect Yang-Mills theory}\label{Sec:Lattice}

We will now study the symmetries and defects introduced in Sections~\ref{Sec:Symmetry} and~\ref{Sec:Defect}
using the lattice formulation of two-dimensional Yang-Mills theory, as
reviewed for example in~\cite{Witten:1991gt,Cordes:1995ym}. 
To discretise the calculation 
we fix a cell decomposition consisting of an embedded graph which covers the compact oriented surface $\Sigma$ into vertices $\Sigma^{\textrm{\tiny(0)}}$, edges $\Sigma^{\textrm{\tiny(1)}}$,
and faces $\Sigma^{\textrm{\tiny(2)}}$. As the faces are contractible, the only remainders of a principal $G$-bundle on
$\Sigma$ are 
its fibre over every vertex, which we can trivialise. A gauge transformation
is therefore described by a map $\xi:\Sigma^{\textrm{\tiny(0)}}\longrightarrow G$. A connection on the
bundle is described by its parallel transport, which is a group element $g_\gamma\in G$
for every edge $\gamma\in\Sigma^{\textrm{\tiny(1)}}$; if the edge $\gamma$
joins vertex $x$ to vertex $y$, then a gauge transformation $\xi$ acts on
$g_\gamma$ by $g_\gamma\longmapsto \xi_y\,g_\gamma\,\xi_x^{-1}$. The
curvature of a connection is a gauge-invariant map
$\mathcal{U}:\Sigma^{\textrm{\tiny(2)}} \longrightarrow G$. 
In the lattice formulation the only gauge-invariant quantity one can 
construct for a face $w\in \Sigma^{\textrm{\tiny(2)}}$ is the (conjugacy class of the) holonomy 
around the face:
\begin{align}
\mathcal{U}_w= \prod_{\gamma\in\partial w}\, g_\gamma \ ,
\end{align} 
where the product runs over
the boundary edges $\gamma$ of $w$; here we choose an ordering
for the multiplication of edges similarly
to~\cite{Runkel:2018aqft}. 

The path integral can now be defined as an 
integral over the product group $G^{\times|\Sigma^{\textrm{\tiny(1)}}|}$ with respect to its
Haar measure, induced from the normalised invariant Haar measure $\dd g$ on $G$, which is 
mathematically well-defined. 
We still need to understand what to integrate. For this, note that formally
we can rewrite the path integral involving an arbitrary local
functional $\mathcal{L}(P,A)$ of the gauge fields as
\begin{align}
\int_{\Bun_G^\nabla (\Sigma)} \, \mathscr{D}(P,A) \ 
  \exp\Big(-\int_\Sigma\,
  \mathcal{L}(P,A)\Big) 
= \int_{\Bun_G^\nabla (\Sigma)} \, \mathscr{D}(P,A) \ \prod_{w\in \Sigma^{\textrm{\tiny(2)}}} \,
  \exp\Big(-\int_{w}\, \mathcal{L}(P,A)\Big) \ .
\end{align}  
Hence the integrand is a product over the faces $w$ of the cell decomposition
of $\Sigma$.
We further introduce a local measure on the
discretisation by giving an area $a_w$ for every face 
$w$. The integration factor is a local function ${\mit\Gamma}(\mathcal{U}_w,e^2\,a_w)$
depending on the holonomy and area associated to $w$.
The correct choice for this local factor which computes the Yang-Mills partition function is~\cite{Migdal:1975re,Witten:1991gt} 
\begin{align}\label{eq:mitGammadef}
{\mit\Gamma}\big(\mathcal{U}_w,e^2\,a_w\big) \coloneqq \e^{-\upsilon_1} \, \sum_{\alpha\in\widehat{G}} \, \dim\alpha
  \ \chi_\alpha(\mathcal{U}_w) \, \exp\bigg(-e^2\,a_w\, \Big(
  \frac{C_2(\alpha)}2 +\upsilon_2\Big) \bigg) \ ,
\end{align}       
where the sum runs over all isomorphism classes of unitary irreducible
representations $\alpha$ of the gauge group $G$ of dimension
$\dim\alpha$ and with character $\chi_\alpha$, and $C_2(\alpha)$ is the value of the quadratic Casimir operator
$C_2$ in the
representation $\alpha$.
This factor describes the wavefunction of two-dimensional Yang-Mills theory
on a disk~\cite{Migdal:1975re}  which is determined by the heat kernel
corresponding to the
Hamiltonian \eqref{eq:Hamiltonian}. In general it involves the
constants $\upsilon_1,\upsilon_2\in\R$ from \eqref{eq:ambiguity}
depending on the renormalisation scheme, where we used
$ \chi(w)=1$.\footnote{The explicit expressions for the partition functions below may then be
  alternatively derived by using standard glueing
rules from topological field theory, as in~\cite{Runkel:2018aqft}, and the additivity of the Euler
characteristic under disjoint unions, together with the fact that
pairs of pants have Euler characteristic $-1$.}

The partition function on $\Sigma$ in this lattice regularisation is now defined as 
\begin{align}
Z_{\textrm{\tiny YM}}\big(\Sigma,G, e^2\,a\big) \coloneqq
  \int_{G^{\times|\Sigma^{\textrm{\tiny(1)}}|}} \
  \prod_{\gamma\in \Sigma^{\textrm{\tiny(1)}}} \, \dd g_{\gamma} \ 
  \prod_{w\in \Sigma^{\textrm{\tiny(2)}}} \, {\mit\Gamma}\big(\mathcal{U}_w,e^2\,a_w\big) \ ,
\end{align}
where $a=\sum_{w\in\Sigma^{\textrm{\tiny(2)}}} \, a_w$.
This partition function is independent of the chosen cell decomposition of 
$\Sigma$~\cite{Witten:1991gt} and hence agrees with its continuum limit where the lattice
discretisation becomes finer and finer:
the heat kernel defines a
renormalisation group-invariant amplitude on the faces so that the
partition function is invariant under subdivision of the lattice. This
feature is special to Yang-Mills theory in two dimensions, and
for this reason the lattice regularisation of the quantum gauge theory actually computes
the partition function \eqref{Eq: Partition function} exactly (up to
the undetermined constants $\upsilon_1$ and~$\upsilon_2$). 

At this stage we can come back
to the question of whether $\varphi\in\Out(G)$ induces a symmetry of
the gauge theory at the quantum level. 
Since the pullback of the Haar measure along $\varphi$ is invariant and 
normalised, it follows from the uniqueness of the Haar measure that the
integration measure is preserved under $\varphi$. The action on the group element 
associated to an edge is given by applying $\varphi$ to it, since this 
describes the action on the parallel transport. 
Hence the integration factor transforms as
\begin{align}
{\mit\Gamma}\big(\mathcal{U}_w,e^2\,a_w\big) \xrightarrow{ \ \varphi \ } & \e^{-\upsilon_1}
  \, \sum_{\alpha\in\widehat{G}}\, \dim\alpha \ \chi_\alpha\big(\varphi(\mathcal{U}_w)\big)\, \exp\bigg(-e^2\,a_w\,\Big(\frac{ C_2(\alpha)}2+\upsilon_2\Big)\bigg)\\
& \hspace{2cm} =\e^{-\upsilon_1} \, \sum_{\alpha\in\widehat{G}}\,
  \dim\alpha \
  \chi_{\varphi^*\alpha}(\mathcal{U}_w)\, \exp\bigg(-e^2\,a_w\,\Big(
  \frac{C_2(\alpha)}2+\upsilon_2 \Big)\bigg) \ .
\end{align}  
Now notice that the dimensions of the representations $\alpha$ and $\varphi^* \alpha$ are the same.
The value of the quadratic Casimir operator $C_2$ is also the same 
in both representations, as discussed at the end of Section~\ref{Sec:Symmetry}.
Combining everything we get\footnote{This is also demonstrated
  in~\cite[Lemma~5.14]{Runkel:2018aqft} from an algebraic perspective,
  where it is shown that the outer automorphism $\varphi$ induces an
  isomorphism of commutative regularised Frobenius algebras.}
\begin{align}
{\mit\Gamma}\big(\mathcal{U}_w,e^2\,a_w\big) \xrightarrow{ \ \varphi \ }
  \e^{-\upsilon_1}\, \sum_{\alpha\in\widehat{G}}\, \dim
  \varphi^*\alpha \ \chi_{\varphi^*\alpha}(\mathcal{U}_w)\,
  \exp\bigg(-e^2\,a_w\,\Big( \frac{C_2(\varphi^*\alpha)}2+\upsilon_2
  \Big)\bigg)= {\mit\Gamma}\big(\mathcal{U}_w,e^2\,a_w \big)
\end{align} 
where in the last step we used the fact that $\varphi^*$ is a bijection on the set 
of isomorphism classes of irreducible representations. Since the lattice 
regularisation of the quantum gauge theory agrees with the continuum limit, this shows that $\varphi$
induces an actual symmetry at the quantum level. 

Let us now explicitly compute the partition function for a defect corresponding to 
the symmetry in terms of the combinatorial data of the discretisation
of $\Sigma$. The cellular description provided in this section is dual
to that of the triangulation used to define a defect network in
Section~\ref{Sec:Defect}; here defects correspond to turning edges of
the cell decomposition into symmetry twist branch cuts on $\Sigma$. A defect or domain wall corresponding to a symmetry can be 
implemented by performing the path integral over field 
configurations which change by the symmetry when 
passing through the domain wall. 
In the lattice gauge theory approach this has a simple implementation:
When calculating the holonomy around a face we count an edge $\gamma$ labeled by
$g_\gamma\in G$ which passes through a defect corresponding to $\varphi$ as $g_\gamma$ to the 
left of the defect and as $\varphi(g_\gamma)$ to the right of the
defect. A straightforward calculation shows that contractible defects
do not change the value of the partition function, so in the following
we focus on defects which wrap around non-contractible cycles of the
surface $\Sigma$. 

As a warm up, let us begin by calculating the partition function on a
genus one surface, which is a torus $\Sigma_1=T^2$, with a single
non-contractible defect line labeled by $\varphi\in\Out(G)$. We pick a cell decomposition of
$T^2$ with two edges and the defect as illustrated in Figure~\ref{Fig:
  Decomposition}.
\begin{figure}[htb]
\small
\begin{center}
\includegraphics[scale=0.8]{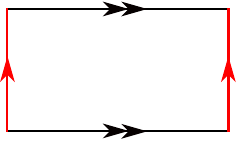}
\end{center}
\caption{\small Cell decomposition of $T^2$ into a rectangle with parallel edges
  identified. The defect is along the vertical edge.}\label{Fig: Decomposition}
\normalsize
\end{figure}    
For the path integral we have to specify the parallel transport along
two edges. The partition function can then be easily computed to give
\begin{align}
Z_{\textrm{\tiny YM}}\big(T^2,G,e^2\,a; \varphi\big)&= \int_{G\times
                                                      G} \, \dd g \
                                                      \dd h \ 
                                                \sum_{\alpha\in\widehat{G}}\,
                                                \dim\alpha \ 
                                                \chi_\alpha\big(\varphi(h)^{-1}\,g^{-1}\,h\,g\big)
                                                      \ \exp\bigg(-e^2\,a\,
                                                \Big(\frac{C_2(\alpha)}2+\upsilon_2\Big)\bigg)
                                                \\[4pt]
&= \int_{G} \, \dd h \ \sum_{\alpha\in\widehat{G}} \, \chi_\alpha\big({\varphi
  (h)}^{-1}\big) \ \chi_\alpha(h) \ \exp\bigg(-e^2\,a\, \Big(
  \frac{C_2(\alpha)}2+\upsilon_2\Big)\bigg) \\[4pt]
&= \int_{G} \, \dd h \ \sum_{\alpha\in\widehat{G}} \,
  \chi_{\varphi^*\alpha}(h^{-1}) \ \chi_\alpha(
  h) \ \exp\bigg(-e^2\,a\,
  \Big(\frac{C_2(\alpha)}2+\upsilon_2\Big)\bigg) \\[4pt]
&=  \sum_{\stackrel{\scriptstyle
  \alpha\in\widehat{G}}{\scriptstyle\alpha = \varphi^*\alpha}} \,
  \exp\bigg(-e^2\,a\, \Big(\frac{C_2(\alpha)}2 + \upsilon_2\Big)\bigg) \ ,
\end{align} 
where we used $ \chi(T^2)=0$ and $\chi_\alpha(1)=\dim\alpha$, together with the orthonormality and
fusion relations for the characters:
\begin{align}
\int_G \, \dd g \ \chi_\alpha (A\,g) \, \chi_\beta(g^{-1}\,B) &=
                                                                \delta_{\alpha,\beta}
                                                                \
                                                                \frac{1}{\dim
                                                                \alpha}
                                                                \
                                                                \chi_\alpha(A\,B)
                                                                \ , \\[4pt]
\int_G \, \dd g \ \chi_\alpha(A\,g\,B\,g^{-1}) &= \frac{1}{\dim
                                                 \alpha} \
                                                 \chi_\alpha(A)\,\chi_\alpha(B)
                                                 \ , \label{Eq: characters}
\end{align} 
with $A,B\in G$.
Setting $e=1$ for the renormalisation scheme with $\upsilon_2=0$, this reproduces the 
result of \cite{Runkel:2018aqft}. Representations
$\alpha\in\widehat{G}$ for which $\varphi^*\alpha=\alpha$ are called
\emph{fixed point representations} of the automorphism $\varphi$ in~\cite{Fuchs:1996twc}.

This calculation can be generalised to an 
arbitrary connected Riemann surface $\Sigma_p$ of genus $p>1$ and area $a$ containing $p$ defects, as illustrated
in Figure~\ref{Fig:Decomposition2}, labeled by group outer automorphisms $\varphi_1, \dots,
\varphi_p\in\Out(G)$.
\begin{figure}[htb]
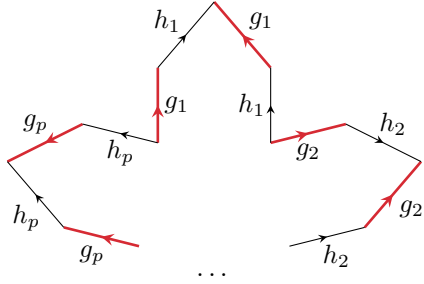

\small
\begin{center}
\tikzfig{defect}
\end{center}
\caption{\small Cell decomposition of $\Sigma_p$ into a $4p$-gon with
  edges of the same label
  identified. The defects are labeled by $h_i$.}\label{Fig:Decomposition2}
\normalsize
\end{figure}
Using the integral formulas \eqref{Eq: characters}, we then compute
\begin{align}
&Z_{\textrm{\tiny YM}}\big(\Sigma_p, G, e^2\,a ;  \varphi_1 , \dots ,
  \varphi_p\big) \label{eq:Sigmapcalc} \\ &\hspace{2cm} = \e^{\upsilon_1\,(2p-2)}\,
                     \int_{G^{\times 2p}} \ \prod_{j=1}^p \, \dd g_j
  \ \dd h_j \ 
                     \sum_{\alpha\in\widehat{G}}\, \dim\alpha \ 
                     \chi_\alpha\Big(\prod_{i=1}^p \,
                     g_i\,h_i\,\varphi_i(g_i^{-1})\,h_i^{-1}\Big)
                                            \nonumber \\
& \hspace{10cm} \times \ 
                     \exp\bigg(-e^2\,a\,\Big(\frac{C_2(\alpha)}2+\upsilon_2\Big)
                     \bigg) \nonumber \\[4pt] 
&\hspace{2cm} = \e^{\upsilon_1\,(2p-2)}\,\int_{G^{\times(2p-1)}} \, \dd g_p \
  \prod_{j=1}^{p-1}\, \dd g_j \ \dd h_j \ 
  \sum_{\alpha\in\widehat{G}} \, \chi_\alpha\Big( g_p \, \prod_{i=1}^{p-1}\,
  g_i\,h_i\,\varphi_i(g_i^{-1})\,h_i^{-1}\Big) \
  \chi_{\varphi_p^*\alpha}(g_p^{-1}) \nonumber \\ 
&\hspace{10cm} \times \ \exp\bigg(-e^2\,a\,
  \Big(\frac{C_2(\alpha)}2+\upsilon_2\Big)\bigg) \\[4pt]
&\hspace{2cm} =  \e^{\upsilon_1\,(2p-2)}\,\int_{G^{\times(2p-2)}}
  \ \prod_{j=1}^{p-1}\, \dd g_j \ \dd h_j \ 
  \sum_{\alpha\in\widehat{G}} \, \frac{\delta_{\alpha, \varphi_p^*
  \alpha}}{\dim \alpha} \ \chi_\alpha\Big(\prod_{i=1}^{p-1}\,
  g_i\,h_i\,\varphi_i(g_i^{-1})\, h_i^{-1}\Big) \nonumber \\ 
&\hspace{10cm} \times \ 
  \exp\bigg(-e^2\,a\,\Big(\frac{C_2(\alpha)}2+\upsilon_2\Big)\bigg) \nonumber
\end{align}
where we used $ \chi(\Sigma_p)=2-2p$. Proceeding inductively in this
way then
finally gives
\begin{align}\label{eq:ZYMgenusp}
&Z_{\textrm{\tiny YM}}\big(\Sigma_p, G, e^2\,a ; \varphi_1 , \dots ,
  \varphi_p\big) \\
&\hspace{4cm} =
  \e^{\upsilon_1\,(2p-2)}\,
  \sum_{\stackrel{\scriptstyle\alpha\in\widehat{G}}{\scriptstyle
  \alpha=\varphi_1^*\alpha= \cdots = \varphi_p^*\alpha}} \, (\dim
  \alpha)^{2-2p} \ \exp\bigg(-e^2\,a\,\Big( \frac{C_2(\alpha)}2
  + \upsilon_2\Big) \bigg) \nonumber
\end{align}  
for the Yang-Mills partition function on an oriented Riemann surface $\Sigma_p$
of genus $p$ with defects labeled by
$\varphi_1,\dots,\varphi_p\in\Out(G)$ around non-contractible
cycles of $\Sigma_p$. When all defects are trivial, $\varphi_i=\id_G$ for
$i=1,\dots,p$, this combinatorial expression is just the usual
Migdal-Rusakov heat kernel expansion for the partition function of Yang-Mills theory
on
$\Sigma_p$~\cite{Migdal:1975re,Rusakov:1990rs,Witten:1991gt,Blau:1992ym,Cordes:1995ym}. In
general it agrees with the computation
of~\cite[Proposition~5.17]{Runkel:2018aqft} for the particular defect
configuration at hand.

In at least simple cases, the partition function \eqref{eq:ZYMgenusp} can be computed
explicitly using the combinatorics of Dynkin diagrams. For this,
recall that outer automorphisms of a semi-simple Lie algebra $\mathfrak{g}$  
are in one-to-one correspondence with automorphisms of the underlying
Dynkin diagram. Let $(C_{i,j})_{i,j \in I_r}$ be the
Cartan matrix encoded by the corresponding Dynkin diagram, where
$I_r=\{1,\dots , r \}$ and $r$
is the rank of $\mathfrak{g}$. An automorphism of the Dynkin diagram is a bijective
map $\varphi \colon I_r \longrightarrow I_r$ which preserves the
entries of the Cartan matrix:
$C_{i,j}=C_{\varphi(i),\varphi(j)}$ for all $i,j\in I_r$. Associated to
the Dynkin diagram is a Cartan-Weyl basis of Chevalley generators $\{
H_i, E^\pm_i \}_{i\in I_r}$ of $\mathfrak{g}$ in which $\varphi$ induces the outer automorphism 
\begin{align}
\varphi \colon \mathfrak{g} \longrightarrow \mathfrak{g} \ , \quad
\big(H_i,E^{\pm}_i\big) \xrightarrow{ \ \varphi \ }
  \big(H_{\varphi(i)} ,E^\pm_{\varphi(i)} \big)
\end{align}
of $\mathfrak{g}$.
This in turn induces an isomorphism between the group of symmetries of the
underlying Dynkin diagram and the group of 
outer automorphisms of the Lie algebra $\mathfrak{g}$, see for
example~\cite{Fuchs:1996twc}. Let $\mathfrak{h}\subset \mathfrak{g}$
be the Cartan subalgebra spanned by $H_i$. Then the automorphism
$\varphi$ induces an action on the weight space $\mathfrak{h}^*$ given
by pullback
\begin{align}
\varphi^*:\mathfrak{h}^* \longrightarrow \mathfrak{h}^* \ , \quad
\lambda(x) \xrightarrow{ \ \varphi^* \ } (\varphi^*\lambda) (x) \coloneqq  \lambda\big(\varphi^{-1}(x)\big) \ .
\end{align}
A weight vector $\lambda$ is \emph{symmetric} if
$\varphi^*\lambda=\lambda$. If $\alpha$ is 
an irreducible representation of $\mathfrak{g}$ with highest weight 
vector $v$ of weight $\lambda_\alpha$, then $v$ is also a highest weight 
vector for the representation $\varphi^*\alpha$ with weight
$\varphi^*\lambda_\alpha$~\cite[Section 4]{Fuchs:1996twc}. For the
calculation of the partition function \eqref{eq:ZYMgenusp}, we thus have to restrict to representations with symmetric highest weight. 

The highest weights corresponding to irreducible representations of $\mathfrak{g}$ can be uniquely expressed as 
\begin{align}
\lambda = \sum_{i=1}^r\, n_i \, \omega_i \ ,
\end{align} 
where $\omega_i$ are the fundamental weights and $n_i \in \N_0$ for 
$i=1,\dots,r$. The non-negative integers $n_i$ are the \emph{Dynkin labels} of the 
corresponding representation. The action of $\varphi$ on the fundamental weights is given by $\varphi^*\omega_i = \omega_{\varphi(i)}$. Hence a representation $\alpha$ is symmetric if and only if the corresponding Dynkin labels are invariant under the transformation
\begin{align}
[n_1,\dots,n_r] \xrightarrow{ \ \varphi \ } [n_{\varphi^{-1}(1)},\dots, n_{\varphi^{-1}(r)}] \ .
\end{align}
The Dynkin labels can be used to concretely calculate the partition
function \eqref{eq:ZYMgenusp} for a genus $p$ surface $\Sigma_p$
containing defects: The dimension and quadratic Casimir invariant of
an irreducible unitary representation with highest weight $\lambda$
are given by
\begin{align}
\dim\lambda = \prod_{\alpha\in \mathfrak{R}_+} \,
  \frac{(\lambda+\rho,\alpha)_{\mathfrak{g}^*}}{(\rho,\alpha)_{\mathfrak{g}^*}}
  \qquad \mbox{and} \qquad C_2(\lambda) = (\lambda+2\rho,\lambda)_{\mathfrak{g}^*} \ ,
\end{align}
where $\mathfrak{R}_+\subset\mathfrak{h}^*$ is the system of positive roots of the Lie
algebra $\mathfrak{g}$, $\rho=\frac12\,\sum_{\alpha\in\mathfrak{R}_+}\,\alpha$ is the Weyl vector, and the
invariant bilinear form $(\,\cdot\,,\,\cdot\,)_{\mathfrak{g}^*}$ on
$\mathfrak{g}^*$ is
induced by the Killing form of $\mathfrak{g}$.

\begin{example}
The Lie group $G=SU(3)$ has rank $r=2$ and hence every irreducible representation can 
be labeled by a pair of non-negative integers $[n,m]$. $SU(3)$ admits
only one non-trivial
outer automorphism $\varphi$ corresponding to complex conjugation, which 
acts on the Dynkin labels by interchanging $n$ and $m$. Hence
$\Out\big(SU(3)\big)=\Z_2$ and
symmetric representations are real representations which
are of the form $[n,n]$. The dimension of the representation $[n,n]$ is 
$\dim [n,n]=(n+1)^3$, and the value of its quadratic Casimir invariant
is $C_2\big([n,n]\big)=n\,(n+2)$. Consider the defect network from
Figure~\ref{Fig:Decomposition2} with at least one non-trivial defect label
$\varphi$. Then the partition function \eqref{eq:ZYMgenusp} reads as
\begin{align}\label{Eq: SU(3)}
Z_{\textrm{\tiny YM}}\big(\Sigma_p,SU(3),e^2\,a;\varphi\big)=
  \e^{\upsilon_1\,(2p-2)} \, \sum_{n=0}^\infty \, (n+1)^{6-6p} \
  \exp\bigg(-e^2\,a\,\Big( \frac{n\,(n+2)}2 + \upsilon_2\Big) \bigg) \ .
\end{align}    
\end{example}

\subsection{The moduli space of flat twisted bundles}\label{Sec:Zero}

Let $D$ be an $\Out(G)$-bundle on a surface $\Sigma$. 
We denote by $\mathscr{M}_G^D(\Sigma)$ the moduli space of flat $D$-twisted $G$-bundles
on $\Sigma$, or in other words pairs $(P,A)\in \Bun^\nabla_{G\downarrow D}(\Sigma)$ with
  $F_A=0$, up to gauge transformations. 
Up to equivalence we can describe $D$ by a group homomorphism on the
fundamental group of $\Sigma$:
$\kappa_D\colon \pi_1(\Sigma)\longrightarrow \Out(G)$. 
A flat $D$-twisted $G$-bundle $P$ on $\Sigma$ can then be described by a group 
homomorphism $\phi'_P\colon \pi_1(\Sigma)\longrightarrow G\rtimes \Out(G)$
which lifts the group homomorphism~$\kappa_D$ in the sense that the
diagram
\[
\begin{tikzcd}
 & & G\rtimes\Out(G) \ar[dd] \\
 & & \\
 \pi_1(\Sigma) \ar[uurr, "\phi_P'"] \ar[rr, swap,"\kappa_D"] &  & \Out(G)
\end{tikzcd}
\]
commutes, where the vertical arrow is the projection to the second
factor.

Equivalently, this can be described by a map $\phi_P\colon \pi_1(\Sigma) \longrightarrow G$ satisfying
\begin{align}
\phi_P(\gamma_1 * \gamma_2)=\phi_P(\gamma_1)\,
  \kappa_D(\gamma_1)\big(\phi_P(\gamma_2)\big) \ , 
\end{align}
where $\gamma_1\ast\gamma_2$ denotes the concatenation of paths on $\Sigma$
between representatives of the corresponding homotopy classes. 
Let $\Hom_{\kappa_D}\big(\pi_1(\Sigma),G\big)$ denote the space of all such
twisted group homomorphisms; for any $\phi\in \Hom_{\kappa_D}\big(\pi_1(\Sigma),G\big)$ and any
homotopy class of 
paths $[\gamma]\in\pi_1(\Sigma)$, $\phi(\gamma)$ is the holonomy of a
flat $D$-twisted $G$-connection along $\gamma$. 
Gauge transformations correspond to the action of the Lie group $G$ on this space via the twisted conjugation 
\begin{align}
g\cdot\phi \colon \pi_1(\Sigma) \longrightarrow G \ , \quad
\gamma    \longmapsto (g\cdot\phi)(\gamma)=g\, \phi(\gamma) \,
  \kappa_D(\gamma)\big(g^{-1} \big) \ . 
\end{align}
The moduli space $\mathscr{M}_G^D(\Sigma)$ can be identified with the 
quotient $\Hom_{\kappa_D}\big(\pi_1(\Sigma),G\big)/ G$ by this $G$-action. 

In the local triangulation
description of defect networks from Section~\ref{Sec:Defect}, a flat $D$-twisted $G$-bundle
 is the same as a
$D$-twisted $G$-local system on $\Sigma$, as defined for example
in~\cite{Labourie2013}; one may also characterise it as a groupoid homomorphism from
the fundamental groupoid of $\Sigma$ to the classifying groupoid
of $G\rtimes\Out(G)$-bundles.
It is possible to generalise this description to surfaces
$\Sigma$ with
boundary circles by using a subgroupoid of the fundamental groupoid,
and hence to describe moduli spaces of flat $D$-twisted
$G$-connections on $\Sigma$ with holonomies on the boundary components in
prescribed twisted conjugacy classes of $G$, see~\cite{Meinrenken:Convexity} 
for further details. 

To relate this moduli space to the quantum gauge theory defined in 
Section~\ref{Sec:Defect}, we note that the weak-coupling limit $e\longrightarrow 0$ of 
the Yang-Mills action functional \eqref{eq:YMaction} is either $0$ or it diverges to $+\infty$, and hence the path integral \eqref{Eq: Path integral definition defect}
localises onto gauge field configurations with vanishing action
functional, or equivalently with vanishing curvature $F_A=0$; these
are precisely the flat 
twisted bundles. Hence the path integral formally reduces to an integral
over $\mathscr{M}_G^D(\Sigma)$. 
Since the integration measure
$\mathscr{D}(P,A)$ is formally induced by the
infinite-dimensional symplectic structure \eqref{eq:SymplecticForm}, it is natural to conjecture that 
the partition function \eqref{Eq: Path integral definition defect} computes the symplectic volume of 
$\mathscr{M}_G^D(\Sigma)$ in the weak-coupling limit, where the symplectic
two-form on $\mathscr{M}_G^D(\Sigma)$ is inherited from \eqref{eq:SymplecticForm}. This argument is
completely analogous to that given in the case of ordinary Yang-Mills
theory in~\cite{Witten:1991gt}.  

To describe the weak-coupling limit more
precisely as a topological field theory, it is useful 
to consider an equivalent formulation of the quantum Yang-Mills theory in the presence of defects.
For this, recall that a $D$-twisted $G$-bundle on $\Sigma$ can be described by a $G\rtimes \Out(G)$-bundle
$P$. The curvature $F_A$ of the connection on the bundle $P$ is a
two-form on $\Sigma$ with values in the associated
$\mathfrak{g}$-bundle $\Ad(P)$.
We introduce an auxiliary scalar field $\phi$ on $\Sigma$
with values in $\Ad(P)$, and consider the action functional
\begin{align}
S(P,A,\phi)= - \iu \, \int_{\Sigma}\, \Tr_{\mathfrak{g}}\big(\phi \, F_A\big)
  -\frac{e^2}{2} \, \int_{\Sigma}\, \dd\mu \ \Tr_{\mathfrak{g}}\big(\phi^2\big) 
  \ .
\end{align} 
The field $\phi$ can only be defined after the $D$-twisted bundle
$(P,A)$ is fixed, and the corresponding path integral
\begin{align}\label{eq:BFtheory}
\int_{\Bun^\nabla_{G\downarrow D}(\Sigma)} \, \mathscr{D}(P,A) \
  \int_{\Omega^0(\Sigma;\Ad(P))} \, \mathscr{D}\phi \ 
  \exp\big(-S(P,A,\phi)\big) 
\end{align}
is taken over all $D$-twisted bundles with connections and 
$\phi\in\Omega^0(\Sigma;\Ad(P))$, where the measure on the space
$\Omega^0(\Sigma;\Ad(P)$ is induced by the metric on $\Ad(P)$ given by
\begin{align}
\|\phi\|^2 := -\frac1{4\pi^2}\, \int_\Sigma\,
\dd\mu \ \Tr_{\mathfrak{g}}\big(\phi^2\big) \ .
\end{align}
Performing the Gaussian path integral over $\phi$ (or equivalently
eliminating $\phi$ by its Euler-Lagrange equation) 
shows that the quantum field theory defined by \eqref{eq:BFtheory} is
equivalent to the quantum field theory defined by \eqref{Eq: Path
  integral definition defect} for two-dimensional Yang-Mills theory in
the presence of symmetry defects. The path integral
\eqref{eq:BFtheory} is subject to the same two-parameter
renormalisation ambiguity \eqref{eq:ambiguity}, which multiplies it by
the factor
$\exp(-\Delta S)$.

The advantage of this reformulation is that it is straightforward now to
take the $e\longrightarrow 0$ limit, 
which is described by the topological field theory with action
functional 
\begin{align}
S_{0}(P,A,\phi)=  - \iu \, \int_{\Sigma}\, \Tr_{\mathfrak{g}}\big(\phi \, F_A \big) \ .
\end{align}
At $e=0$, the invariance under the group of area-preserving
diffeomorphisms is promoted to full diffeomorphism invariance, and the
ambiguity \eqref{eq:ambiguity} is reduced to a one-parameter ambiguity
depending only on the topology of the surface $\Sigma$.
This looks similar to the quantum field theory describing the weak-coupling limit of 
Yang-Mills theory without defects~\cite{Witten:1991gt,Witten:1992xu}.
The difference is that here $\phi$ takes values in a different bundle and that the path
integral is taken over twisted bundles rather than ordinary
bundles. Integrating over $\phi$ in \eqref{eq:BFtheory} at $e=0$
produces a formal delta-functional $\delta(F_A)$, and the
remaining path integral over $\Bun^\nabla_{G\downarrow D}(\Sigma)$ therefore localises on the locus $F_A=0$, which by
definition is the moduli space $\mathscr{M}_G^D(\Sigma)$ of flat
$D$-twisted $G$-bundles on~$\Sigma$. In the usual untwisted
  case~\cite{Witten:1991gt}, the argument showing that the resulting path integral
  measure induces the correct symplectic volume form on the moduli space $\mathscr{M}_G(\Sigma)\simeq\Hom\big(\pi_1(\Sigma),G\big)/ G $ of flat
  $G$-connections on $\Sigma$ uses a careful application of Faddeev-Popov gauge
  fixing and the triviality of analytic torsion on oriented 
  surfaces, together with a judicious choice of $\upsilon_1$. It should be possible to extend these arguments to the
  twisted case.

Putting everything together, we conjecture that the
symplectic volume of $\mathscr{M}_G^D(\Sigma)$ can be given a gauge theory
interpretation via the formula
\begin{align}
{\rm Vol}\big(\mathscr{M}_G^D(\Sigma)\big)= \e^{-\upsilon_1\,  \chi(\Sigma)} \
  \lim\displaylimits_{e\rightarrow 0} \, Z_{\textrm{\tiny YM}}\big(\Sigma,G,e^2\,a;D\big) \ .
\end{align} 
Since the undetermined parameter $\upsilon_1\in\R$ depends only on $G$ and
the renormalisation scheme, but not on $\Sigma$, the ratio
\begin{align}
\frac{{\rm Vol}\big(\mathscr{M}_G^D(\Sigma)\big)}{{\rm Vol}\big(\mathscr{M}_G(\Sigma)\big)} =
  \lim\displaylimits_{e\rightarrow 0} \, \frac{Z_{\textrm{\tiny
  YM}}\big(\Sigma,G,e^2\,a;D\big)}{Z_{\textrm{\tiny YM}}\big(\Sigma,G,e^2\,a\big)}
\end{align}
is independent of the choice of the renormalisation scheme. This ratio can thus
be computed explicitly using the lattice regularisation of
Section~\ref{Sec:Lattice}, and used to make concrete predictions for
the symplectic volume ${\rm Vol}\big(\mathscr{M}_G^D(\Sigma)\big)$; in
the lattice formulation, a connection
is flat if $\mathcal{U}_w=1$ for every face $w\in\Sigma^{\textrm{\tiny(2)}}$.

\begin{example}
Let us look again at the simplest non-trivial example of gauge group
$G=SU(3)$. From \eqref{Eq: SU(3)} we deduce that the volume of $\mathscr{M}_{SU(3)}^D(\Sigma_p)$
is independent of the choice of non-trivial $\Z_2$-bundle
$D\longrightarrow \Sigma_p$. 
For genus $p\geq 2 $ the weak-coupling limit $e \longrightarrow 0$ exists and
the series sums to give
the value of the Riemann zeta-function $\zeta(6p-6)$. It follows that
the symplectic volume in the presence of defects is
\begin{align}
{\rm Vol}\big(\mathscr{M}_{SU(3)}^D(\Sigma_p)\big)=\lim\displaylimits_{e \rightarrow 0} \,
  Z_{\textrm{\tiny YM}}\big(\Sigma_p,SU(3),e^2\,a;\varphi\big) = \e^{\upsilon_1\,(2p-2)} \, \zeta(6p-6) \ .
\end{align} 
The undetermined parameter $\upsilon_1$ can in principal be determined
from the results for untwisted bundles;
in~\cite{Witten:1991gt,Witten:1992xu} the constant $\upsilon_1$ is
evaluated by a direct computation of the Reidemeister
torsion. However, in the present case we are not able to determine
{\it a priori} the value
of $\upsilon_1$, so we cannot make a more
concrete prediction for the symplectic volume at this stage. 
\end{example}

\subsection{The orbifold Yang-Mills theory}\label{Sec:Orbifold}

Given a discrete symmetry of a quantum field theory one can try to
gauge the symmetry, or in other words construct a corresponding 
orbifold theory; the orbifold field theory is constructed by
taking the quotient by the symmetry group and projecting the Hilbert space onto the invariant states.
In this paper we focus on the defect approach to orbifolds~\cite{FFSR:2009orb,Brunner:2013orb}
and show that the orbifold theory corresponding to the symmetry introduced in Section \ref{Sec:Symmetry} is the Yang-Mills theory 
with gauge group $G\rtimes \Out(G)$. We can further naturally
twist the orbifold theory by a two-cocycle $c\in H^2(\Out(G);U(1))$
representing the inclusion of discrete torsion. The resulting orbifold
theory will then be a two-dimensional Yang-Mills theory based on the structure group 
$G\rtimes \Out(G)$ with a topological Dijkgraaf-Witten term~\cite{DijkgraafWitten} for the
finite group $\Out(G)$ added to the Yang-Mills action functional
\eqref{eq:YMaction}; this corresponds to coupling the Yang-Mills
theory to a two-dimensional symmetry protected topological phase,
which is specified by the two-cocycle $c$ and protected by the $\Out(G)$-symmetry. 

Let ${\tt D}_\varphi$ denote the defect corresponding to an outer automorphism 
$\varphi\in \Out(G)$. We construct the orbifold defect as the superposition
\begin{align}
{\tt P}_G =  \sum_{\varphi\in \Out(G)} \, {\tt D}_\varphi \ ,
\end{align}
corresponding to a superposition of $\Out(G)$-bundles over $\Sigma$.
The partition function of the orbifold theory on a Riemann surface $\Sigma$ can
be constructed by picking a triangulation of $\Sigma$ and computing the partition
function of the original Yang-Mills theory in the presence of a defect network
where every edge of the triangulation is labeled with ${\tt P}_G$. The intersections
need to be labeled by `junction fields' which introduce an 
appropriate normalisation; we will explain this in more detail in 
Section~\ref{Sec:Funct}. 
In practice this reduces to a sum over all consistent defect labels of the triangulation
with a normalisation factor $\frac1{|\Out(G)|^V}$, where $V$ is the
number of vertices of the triangulation.
Recall from Section~\ref{Sec:Defect} that, for a fixed defect configuration, the
path integral is taken over a subspace of $\Bun_{G\rtimes \Out(G)}^\nabla(\Sigma)$.
The sum over all labels for defect lines reduces to a sum over all possible 
$\Out(G)$-bundles. As a consequence, the partition function of the orbifold theory 
can be interpreted as an integral over the entire space 
$\Bun_{G\rtimes \Out(G)}^\nabla(\Sigma)$. Dividing the result by $|\Out(G)|^V$
takes care of the fact that in Section~\ref{Sec:Defect} we only divided out 
$G$-gauge transformations; the additional normalisation correctly takes 
care of the discrete part. 
This indicates that that the partition function
of the orbifold theory agrees with the partition function of
Yang-Mills theory on $\Sigma$ with gauge group $G\rtimes \Out(G)$.
By adding a two-dimensional Dijkgraaf-Witten term for $\Out(G)$
into the sum we can also construct a twisted version of this orbifold
Yang-Mills theory with coupling to an $\Out(G)$-symmetry protected topological
phase.   

We now turn our attention to the state space of the orbifold theory, which can be constructed 
by first adding twisted sectors to the original state space to get
\begin{align}\label{eq:Hprime}
\Hc_G'= \bigoplus_{\varphi\in \Out(G)} \, Z_{\textrm{\tiny
  YM}}\big(S^1,G;\varphi\big) \ ,
\end{align}
where $Z_{\textrm{\tiny YM}}\big(S^1,G;\varphi\big)$ is the state space on a
Cauchy circle $S^1$ in $\Sigma$ in the presence of a point defect labeled by
$\varphi$. This is the Hilbert space of gauge-invariant functions on the space of twisted bundles 
over $S^1$. The only gauge-invariant quantity that can be constructed
from a twisted bundle on $S^1$ is its holonomy
$\mathcal{U}$, which transforms under a gauge transformation
corresponding to $g\in G$ as $\mathcal{U}\longmapsto g\, \mathcal{U}\,
\varphi(g^{-1})$. 
This shows that the state space for each twisted sector
$\varphi\in\Out(G)$ is given by
\begin{align}
Z_{\textrm{\tiny YM}}\big(S^1,G;\varphi\big)= \big\lbrace f\in L^2(G)
  \, \big| \, f(g)= f\big(h \, g \,
  \varphi(h^{-1})\big) \ \mbox{for all}\ g,h \in G \big\rbrace \ .
\end{align}

The Hilbert space $Z_{\textrm{\tiny YM}}\big(S^1,G;\varphi\big)$ has a
natural basis given by `twining characters', which we describe explicitly 
following~\cite{Fuchs:1996twc} for the special case of semi-simple Lie
groups, see also~\cite{Zerouali:2018tw}. Let $\omega \colon
G\longrightarrow G$ be an outer automorphism of the Lie group $G$ constructed from an
automorphism of the corresponding Dynkin diagram, and let $\pi_\alpha \colon \mathfrak{g}
\longrightarrow \End(V_\alpha)$ be a corresponding fixed point unitary irreducible highest weight
representation: $\omega^*\alpha = \alpha$. 
Then by Schur's lemma there exists a unitary automorphism $T_\alpha^\omega \colon
V_\alpha \longrightarrow V_\alpha$ such that the diagram
\begin{equation}
\begin{tikzcd}
V_\alpha \ar[dd, "T_\alpha^\omega",swap] \ar[rr,"\pi_\alpha(\omega(v))"] & & V_\alpha \ar[dd, "T_\alpha^\omega"] \\ 
 & & \\
V_\alpha \ar[rr,"\pi_\alpha(v)",swap] & & V_\alpha
\end{tikzcd}
\end{equation} 
commutes for all $v\in V_\alpha$. Since $\omega$ only
permutes the generators of $\mathfrak{g}$, it preserves the highest
weight space. Requiring $T_\alpha^\omega$ to be the identity on the
highest weight space thus fixes it uniquely. 
Then $T_{\omega_2}^\alpha \, T_{\omega_1}^\alpha= T_{\omega_1\,
  \omega_2}^\alpha$ for any two automorphisms $\omega_1, \omega_2\in \Out(G)$.      

Exponentiating the representation we get a corresponding
representation $\pi_\alpha \colon G\longrightarrow \End(V_\alpha)$ of
the group $G$. The twining characters
$\chi_\alpha^\omega:G\longrightarrow \C$ can now be defined by 
\begin{align}
\chi^\omega_\alpha (g) \coloneqq \tr^{\phantom{\dag}}_{V_\alpha} \big(\pi_\alpha(g) \, T_\omega^\alpha\big) 
\end{align}
for $g\in G$.
They satisfy the twisted conjugation invariance 
\begin{align}
\chi_\alpha^\omega\big(h\,g\,\omega(h^{-1})\big) &= \tr^{\phantom{\dag}}_{V_\alpha}
                                                 \Big(\pi_\alpha\big(h\,g\,\omega(h^{-1})\big)
                                                 \,
                                                 T_\omega^\alpha\Big)
  \\[4pt]
&= \tr^{\phantom{\dag}}_{V_\alpha} \big(\pi_\alpha(h\,g) \, T_\omega^\alpha \, \pi_\alpha(h^{-1})
  \, {T_\omega^\alpha}^\dag \, T_\omega^\alpha\big) \\[4pt]
&= \tr^{\phantom{\dag}}_{V_\alpha} \big(\pi_\alpha(g) \, T_\omega^\alpha\big) \\[4pt]
&= \chi_\alpha^\omega(g) \ ,
\end{align}
for all $g,h\in G$.
Using the orthogonality of the matrix element functions it is easy to show that the 
twining characters span the Hilbert space $Z_{\textrm{\tiny YM}}\big(S^1,G;\omega\big)$ and satisfy a
generalisation of the orthonormality and fusion relations \eqref{Eq: characters} given by (see also~\cite{Zerouali:2018tw})
\begin{align}
\int_G\,
  \dd g \ \chi_\alpha^\omega(A\,g)\,\chi_\beta^{\omega'}(g^{-1}\,B) &= \delta_{\alpha,\beta} \ \frac{1}{\dim \alpha} \
          \chi_\alpha^{\omega' \, \omega}(A\,B) \ , \\[4pt]
\int_G\, \dd g \ \chi_\alpha^\omega(A\,g\,B\,g^{-1}) & =\frac{1}{\dim \alpha} \
                                              \chi_\alpha(A) \, \chi_\alpha^\omega(B)
                                              \ , \label{Eq: twining Characters}
\end{align}
for $A,B\in G$.

We now note that \eqref{eq:Hprime}
is not the Hilbert space of the orbifold theory, because there is
a natural $\Out (G)$-action on $\Hc'_G$ and the Hilbert space
of the orbifold theory is the subspace of invariants. An outer automorphism $\omega\in \Out(G)$
maps $f\in Z_{\textrm{\tiny YM}}\big(S^1,G;\varphi\big)$ to
$\omega\cdot f \in Z_{\textrm{\tiny YM}}\big(S^1,G;\omega\, \varphi\,
\omega^{-1}\big)$ via the linear map 
corresponding to the defect network illustrated in
Figure~\ref{Fig:Defect-action}.
\begin{figure}[htb]
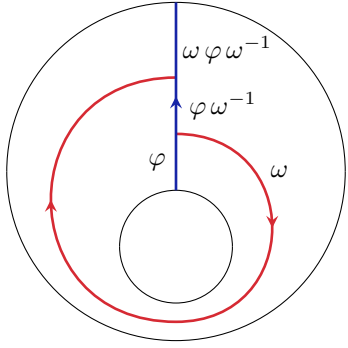

\small
\begin{center}
\tikzfig{defect-action}
\end{center}
\caption{\small The action of $\omega\in\Out(G)$ on a defect labeled
  by $\varphi$.}
\label{Fig:Defect-action}
\normalsize
\end{figure} 
A straightforward computation using \eqref{Eq: twining Characters} and 
the lattice regularisation
shows that the action is given by $\omega\cdot f \coloneqq
f\circ\omega^{-1}$.\footnote{See 
  Proposition~\ref{prop:orbifold-projector} below for a rigorous proof of this
  statement.}
To describe the space of invariants, we pick a representative $\sf C$
for every
conjugacy class of $\Out(G)$. For an automorphism $\varphi\in \Out(G)$, we denote by $\Com(\varphi)$
the commutant of $\varphi$ in $\Out(G)$, or in other words the subgroup of $\Out(G)$ commuting with $\varphi$;
the action of $\Com(\varphi)$ on $\Hc'_G$ preserves $Z_{\textrm{\tiny
    YM}}\big(S^1,G;\varphi \big)$.
A description of the state space of the orbifold theory which depends on the choice
of conjugacy class $\sf C$ is then given by
\begin{align}
\mathcal{H}_{G,\sf C} = \bigoplus_{\varphi\in {\sf C}} \, Z_{\textrm{\tiny YM}}\big(S^1,G;\varphi\big)^{\Com(\varphi)} \ ,
\end{align}
where we denote by $Z_{\textrm{\tiny
    YM}}(S^1,G;\varphi)^{\Com(\varphi)}$ the subspace of
invariants in $Z_{\textrm{\tiny
    YM}}(S^1,G;\varphi)$
with respect to the $\Com(\varphi)$-action. 

Now we argue that the Hilbert space $\Hc_{G,\sf C}$ is the space
$\Cl\big(G\rtimes \Out(G)\big)$ of class functions on the group
$G\rtimes\Out(G)$, confirming that the
orbifold theory is indeed the Yang-Mills theory based on $G\rtimes \Out(G)$.
For this, we note that a conjugation-invariant function on
$G\rtimes\Out(G)$ is completely determined
by its values on elements of $G\times {\sf C}\subset G\rtimes \Out(G)$. 
Hence we can describe any function $f\in \Cl\big(G\rtimes \Out(G)\big)$ by a family of functions
$f_\varphi\colon G\longrightarrow \C$ labeled by the elements
$\varphi\in\sf C$. The value of a function $f_\varphi(g)$ on $g\in G$ transforms under conjugation with respect 
to elements of the form $(h, 1)\in G\rtimes\Out(G)$ as 
$f_\varphi(g) \longmapsto f_\varphi\big(h\,g\,\varphi(h^{-1}) \big)$. 
This shows that $f_\varphi\in Z_{\textrm{\tiny
    YM}}\big(S^1,G;\varphi\big)$. The function $f_\varphi$ 
is further required to be invariant under conjugation by elements of the form
$(1, \omega)\in G\rtimes\Out(G)$ with $\omega\in \Com(\varphi)$, which induces the transformation
$f_\varphi\longmapsto f_\varphi\circ\omega$. Hence
$f_\varphi\in Z_{\textrm{\tiny
    YM}}(S^1,G;\varphi)^{\Com(\varphi)}$, as required.
See Proposition~\ref{prop:orbifold-projector} below for an explicit
description of the inverse map $Z_{\textrm{\tiny
    YM}}(S^1,G;\varphi)^{\Com(\varphi)}\longrightarrow
\Cl(G\rtimes\Out(G))$.   

Generally there are obstructions to the construction of an orbifold
theory for a quantum field theory with a discrete symmetry. However, all of
these obstructions vanish in the case considered in this paper, since we can construct the orbifold theory
explicitly. This is reminescent of the situation for finite
gauge groups where classical symmetries can be described via group extensions
by the symmetry group, and the field theory can be gauged if the action
functional of the 
original gauge theory can be lifted to the extension~\cite{Kapustin:Symmetries,MSGauge}.
In the continuous case considered here the extension is given by the
semi-direct product
\begin{align}
1\longrightarrow G \longrightarrow G\rtimes \Out(G)\longrightarrow \Out(G) \longrightarrow 1 \ ,
\end{align}  
and a lift of the action functional is provided by the action
functional of the Yang-Mills theory
with gauge group $G\rtimes \Out(G)$. 

\subsection{The reverse orbifold Yang-Mills theory}\label{sec:reverseorb}

It is possible to return back to the original Yang-Mills theory via a generalised orbifold construction.
We briefly sketch the construction here and refer to
Section~\ref{Sec:Funct} for further mathematical details.
We denote by $W_c$ a set of representatives for the 
isomorphism classes $c\in\widehat{\Out(G)}$ of irreducible representations of $\Out(G)$.
Via pullback by the homomorphism $G\rtimes \Out(G)\longrightarrow \Out(G)$, these induce representations of $G\rtimes \Out(G)$
and hence Wilson line defects in the Yang-Mills theory with gauge group
$G\rtimes \Out(G)$, which we denote again by $W_c$. These defects 
are invertible only if $\Out(G)$ is an abelian group, so that $W_c$
are all one-dimensional vector spaces. 
But they are always topological defects and, in particular, they are trivial for contractible 
loops, since the $\Out(G)$ part of the holonomy of any $G\rtimes \Out(G)$-connection around a contractible loop is trivial. 

To reverse the orbifold construction we use the defect 
\begin{align}\label{eq:reversedefect}
\Ac_G=\bigoplus_{c\in\widehat{\Out(G)}} \, W_c^{\oplus w_c} \simeq {L}^2\big(\Out(G)\big) \ ,
\end{align}
where $w_c=\dim W_c$.
Only when $\Out(G)$ is an abelian group does this defect come from a
symmetry, as in that case \eqref{eq:reversedefect}
decomposes into a direct sum of invertible defects. For non-abelian
groups $\Out(G)$ we need the generalised orbifold construction 
of \cite{FFSR:2009orb,Brunner:2013orb} to go backwards.
The reason why the choice of defect \eqref{eq:reversedefect} works is that the character of the regular representation ${L}^2(\Out(G))$ of $\Out(G)$ is given by
$\chi_{{L}^2(\Out(G))}(\kappa)= |\Out(G)| \, \delta_{\kappa,\id_G}$
for $\kappa\in\Out(G)$. Inserting a Wilson 
loop corresponding to $\Ac_G$ into the path integral for a Riemann surface $\Sigma$ localises the integration domain to $G\rtimes \Out(G)$-bundles with trivial $\Out(G)$ holonomy around the inserted loop.
If at least one Wilson loop for every generator of the fundamental 
group $\pi_1(\Sigma)$ labeled by $\Ac_G$ is inserted into the path integral,
then the $\Out(G)$ part of all bundles contributing to the partition function
is trivial and hence the partition function reduces to the partition
function \eqref{Eq: Partition function} of Yang-Mills theory
with gauge group $G$. We will prove this rigorously in Section~\ref{Sec:Funct}
using the orbifold completion of the topological defect bicategory of
two-dimensional Yang-Mills theories. 

We conclude by describing the reverse orbifold Yang-Mills theory in the lattice
regularisation of Section~\ref{Sec:Lattice}.
To compute the orbifold gauge theory we have to evaluate the 
partition function in the presence of a sufficiently dense
defect network labeled by \eqref{eq:reversedefect} with appropriate junction fields inserted. 
The junction fields correspond to the pointwise multiplication of functions
and the comultiplication
\begin{align*}
\Delta \colon {L}^2\big(\Out(G)\big) & \longrightarrow
                                       {L}^2\big(\Out(G)\big)\otimes
                                       {L}^2\big(\Out(G)\big) \simeq 
                                       {L}^2\big(\Out(G)\times
                                       \Out(G)\big) \ , \\
\big[\kappa\mapsto f(\kappa)\big] &\longmapsto
            \big[(\kappa_1,\kappa_2)\mapsto\delta_{\kappa_1,\kappa_2}\,
            f(\kappa_1) \big] \ .
\end{align*}  
These maps are homomorphisms of $\Out(G)$-representations and hence induce 
homomorphisms between the corresponding representations of $G\rtimes \Out(G)$. 
The associated junction fields are then given by
\begin{align}
\sum_{\kappa \in \Out(G)}\, \kappa \otimes \kappa \otimes \delta_{\kappa,\id_G} \ &\in \
 \C\big[\Out (G)\big]\otimes \C\big[\Out (G)\big] \otimes {L}^2\big(\Out
                                                                            (G)\big)
                                                                            \
  , \\[4pt] \label{Eq. Junction fields}
\sum_{\kappa \in \Out(G)}\, \kappa \otimes \delta_{\kappa,\id_G} \otimes \delta_{\kappa,\id_G} \ & \in  \
\C\big[\Out (G)\big]\otimes {L}^2\big(\Out (G)\big)\otimes {L}^2\big(\Out (G)\big) \ , 
\end{align}
where we identify the complex vector space $\C[\Out(G)]$ generated by the
elements of $\Out(G)$ with the dual of 
${L}^2(\Out (G))$. 

To compute the partition function in the presence of a defect network $D$
containing only trivalent vertices with one or two ingoing edges, we proceed as follows.
We pick a triangulation agreeing with the defect network and integrate over all 
lattice gauge fields as
\begin{align}
W_{\textrm{\tiny YM}}\big(\Sigma,G\rtimes\Out(G), e^2\,a;D\big) \coloneqq
  \frac{1}{|\Out(G)|^{N}} \ \sum_{(\kappa_\gamma)\in\Out(G)^{\times|\Sigma^{\textrm{\tiny(1)}}|}} \ \int_{G^{\times|\Sigma^{\textrm{\tiny(1)}}|}} \ & \prod_{\gamma\in
  \Sigma^{\textrm{\tiny(1)}}} \, \dd g_{\gamma} \
  \Wc_D\big((g_\gamma,\kappa_\gamma)\big) \\ & \times \ \prod_{w\in
  \Sigma^{\textrm{\tiny(2)}}} \,
  {\mit\Gamma}\big(\mathcal{U}_w,e^2\,a_w\big) \ ,
\end{align}
where $a=\sum_{w\in\Sigma^{\textrm{\tiny(2)}}}\, a_w$, the sum is
over (flat) $\Out(G)$-bundles on the triangulation of the 
surface $\Sigma$, $N$ is an appropriate normalisation power, and
${\mit\Gamma}(\mathcal{U}_w,e^2\,a_w)$ is the same local function
\eqref{eq:mitGammadef} as for the Yang-Mills theory based on the gauge
group $G$,\footnote{As in Section~\ref{Sec:Lattice}, this is the case
  because locally both gauge theories agree.} but now with the
holonomies $\mathcal{U}_w$ computed for the $G\rtimes \Out(G)$-bundle with parallel transport $(g_\gamma,\kappa_\gamma)$ along
the edges $\gamma\in\partial w$.
The quantity $\Wc_D\big((g_\gamma,\kappa_\gamma)\big) $ is the value of the corresponding Wilson line observable for the $G\rtimes \Out(G)$-bundle described
by the elements $(g_\gamma,\kappa_\gamma)\in G\rtimes\Out(G)$ for
$\gamma\in \Sigma^{\textrm{\tiny(1)}}$, which can be computed as follows:
Combining the junction fields \eqref{Eq. Junction fields} for all vertices defines an element in 
$\C[\Out(G)]^{\otimes|\Sigma^{\textrm{\tiny(1)}}|}\otimes {L}^2(\Out(G))^{\otimes|\Sigma^{\textrm{\tiny(1)}}|}$. 
To produce a complex number from this we act on the elements of
$\C[\Out(G)]$ with the group element of the corresponding edge, and
then apply to it the function in $L^2(\Out(G))$ corresponding to the
endpoint of the edge; this defines $\Wc_D$.   
The form of the junction fields \eqref{Eq. Junction fields} implies that an edge $\gamma_{x,y}$
between two vertices $x,y\in\Sigma^{\textrm{\tiny(0)}}$ induces a delta-function between the sums for the
different vertices of the form $\delta_{\kappa_{\gamma_{x,y}}\, \kappa_x, \kappa_y}$. This implies that $\Wc_D$ is 
non-zero if and only if the parallel transport around every loop has
trivial part in $\Out(G)$. 
In this case we can apply a gauge transformation to set all $\kappa_\gamma=\id_G$. 
Restricting to elements with all $\kappa_\gamma=\id_G$ cancels the factor $|\Out(G) |^N$
in the partition function $W_{\textrm{\tiny YM}}$. Hence we are left with the partition function for 
Yang-Mills theory with gauge group $G$, showing that the reverse orbifold
theory is the Yang-Mills theory we started with:
\begin{align}
W_{\textrm{\tiny YM}}\big(\Sigma,G\rtimes\Out(G), e^2\,a;D\big) =
Z_{\textrm{\tiny YM}}\big(\Sigma,G, e^2\,a\big) \ .
\end{align}

\section{Generalised orbifold of functorial defect Yang-Mills theory}\label{Sec:Funct}

In this section we gauge the $\Out(G)$-symmetry of two-dimensional Yang-Mills theory
using the generalised orbifold construction 
\cite{FFSR:2009orb,Carqueville:2012orb,Brunner:2013orb,Carqueville:2017orb} 
of a functorial defect quantum field theory.
We start by recalling the notion of area-dependent quantum field theories and their state sum constructions in
Section~\ref{Sec:State-Sum-Intro}. Then we give a detailed description of the
bicategory of topological defects of two-dimensional Yang-Mills theories in Section~\ref{Sec:Defect-Bicategory}, and in Section~\ref{Sec:bimodule-from-Hopf} we discuss the regularised Frobenius algebras constructed from a Lie group and its outer automorphism group. 
Finally, in Section~\ref{Sec:Orbifolding-Out-G} we gauge the
$\Out(G)$-symmetry, and using an orbifold equivalence in the orbifold
completion of the topological defect bicategory we give the defect for the reverse orbifold in Section~\ref{Sec:Backwards-Orbifold}.

\subsection{State sum area-dependent quantum field theory with defects}\label{Sec:State-Sum-Intro}

We begin by briefly reviewing the state sum construction of two-dimensional area-dependent quantum field theory with defects \cite{Runkel:2018aqft}. We define area-dependent quantum field theories in the spirit of \cite{Atiyah:1988tft,Segal:1988cft,Segal:1988mf} as symmetric monoidal functors from a bordism category into an appropriate target category. Then we recall some details of the state sum construction, 
and discuss transmissive defects which are the topological defects in area-dependent theories. 

An \emph{area-dependent quantum field theory} is a symmetric monoidal functor from the 
category of two-dimensional bordisms with area $\Bord{\textrm{area}}$ to the category of Hilbert spaces $\Hilb$.
In the former category the objects are disjoint unions of oriented
circles, and the morphisms are oriented bordisms up
to diffeomorphism together with
a positive (and possibly zero for cylinders) real number assigned to each connected component, which we think of as an area. The morphism sets naturally come with a topology induced by the areas of the connected components of surfaces. 
In the category of Hilbert spaces one can choose many different topologies on morphism sets,
but for our purposes the strong operator topology will be relevant. 
For an area-dependent quantum field theory, in addition to being symmetric monoidal, 
we require the bounded linear operators assigned to bordisms to be continuous in the area parameters. 

Area-dependent quantum field theories (without defects) are completely
defined by a variation of the notion of a Frobenius algebra,
analogously to two-dimensional topological field theories. A \emph{regularised Frobenius algebra} consists of a Hilbert space $A$ together with families of maps
$\mu_a:A\otimes A\longrightarrow A$ (products), $\eta_a:\Cb\longrightarrow A$ (units), 
$\Delta_a:A\longrightarrow A\otimes A$ (coproducts) and $\varepsilon_a:A\longrightarrow \Cb$ (counits) which are
continuous in the parameter $a\in\Rb_{>0}$ with respect to the strong operator topology. 
These are required to satisfy parameterised versions of 
associativity, unitality, coassociativity, and counitality:
\begin{align}
	&
	\begin{aligned}
	\mu_{a}\circ( \mu_b\otimes \id_A )&=
	\mu_{a'}\circ( \id_A\otimes \mu_{b'})\ ,
	\quad&
	\mu_{a}\circ\left( \eta_b\otimes \id_A \right)&=
	\mu_{a'}\circ\left( \id_A\otimes \eta_{b'} \right)=:P_{a+b}\ ,\\[4pt]
	\left( \Delta_b\otimes \id_A \right)\circ\Delta_{a}&=
	\left( \id_A\otimes \Delta_{b'} \right)\circ\Delta_{a'}\ ,
	\quad&
	\left( \varepsilon_b\otimes \id_A \right)\circ\Delta_{a}&=
	\left( \id_A\otimes \varepsilon_{b'} \right)\circ\Delta_{a'}=P_{a+b}\ ,
	\end{aligned}
	\label{eq:RFA-rel1}
\end{align}
and of the Frobenius relation
\begin{align}
\Delta_a\circ\mu_b=
	(\id_A\otimes\mu_{b'})\circ(\Delta_{a'}\otimes\id_A)=
	(\mu_{b'}\otimes\id_A)\circ(\id_A\otimes\Delta_{a'}) \ ,
	\label{eq:RFA-rel2}
\end{align}
for all parameters $a,a',b,b'\in\Rb_{>0}$ with $a+b=a'+b'$,
where the map $P_a:A\longrightarrow A$ satisfies
\begin{align}
\lim_{a\to0} \, P_a=\id_A
\end{align}
in the strong operator topology. 

We will heavily rely on the graphical calculus for (strict) symmetric
monoidal categories, in order to simplify the presentation of our calculations. 
We present a morphism $f:A\longrightarrow B$ as 
\begin{align}
\tikzfig{graphcalc}
\end{align}
and the identity morphisms with a straight line 
\begin{align}
\tikzfig{graphcalc-id}
\end{align}
Composition corresponds to stacking,
the tensor product of objects and morphisms is 
\begin{align}
\tikzfig{graphcalc-tensor} 
\end{align}
and the symmetric braiding is denoted by a crossing 
\begin{align}
\tikzfig{graphcalc-sym}
\end{align}
For more details on this graphical calculus, see for example~\cite{Kassel:1994qg}.

The structure maps of a  regularised Frobenius algebra $A$ are presented as 
\begin{align}
	\tikzfig{RFA-str-maps}
	\label{eq:RFA-str-maps}
\end{align}
and the relations \eqref{eq:RFA-rel1} and \eqref{eq:RFA-rel2} are
\begin{align}
	\tikzfig{RFA-relations}
	\label{eq:RFA-relations}
\end{align}

A regularised Frobenius algebra $A$ is \emph{commutative} if $\mu_a\circ\sigma_{A,A}=\mu_a$
for every $a\in\Rb_{>0}$:
\begin{align}
\tikzfig{RFA-comm}
\end{align}
Area-dependent quantum field theories are classified by commutative  regularised Frobenius algebras:
the underlying Hilbert space is the value of the quantum field theory on the circle $S^1$, and 
the structure maps are the values on the generating morphisms of $\Bord{\textrm{area}}$, which are the cups, caps and pairs of pants. For further details see \cite[Section~3.2]{Runkel:2018aqft}.

An \textsl{area-dependent quantum field theory with defects} is a symmetric monoidal functor from the category
of two-dimensional bordisms with area and defects. In this bordism category we endow manifolds with a stratification, which is a collection of immersed manifolds of lower dimension. The surface components are assigned individual areas and the functor is required to be continuous in all of these area parameters.

The category of bordisms with area and defects comes with three label sets $D_2$, $D_1$ and $D_0$, which respectively label the submanifolds of dimension two, one and zero. The elements of $D_2$ are 
called phases, the elements of $D_1$ are called domain walls or defect conditions, and
the elements of $D_0$ are called junction field labels. For more details see for example \cite{Davydov:2011dt} and \cite[Section~3.3]{Runkel:2018aqft}.

Consider an area-dependent quantum field theory with defects $\funZ$.
A defect line labeled with $x\in D_1$ is \textsl{transmissive} if the value of $\funZ$
on surfaces involving defects labeled with $x$ depends only on the sum of the areas of the surface components separated by the defect; in other words, area can be transmitted through the defect line. These are the topological defects in area-dependent quantum field theories.
When only considering topological defects, the sets $D_0$, $D_1$ and $D_2$ can be organised
into a bicategory using the functor $\funZ$, see Section~\ref{Sec:Defect-Bicategory} below for further
details.

One way to construct examples of area-dependent quantum field theory with defects is using the 
`state sum construction'. Here one works with an appropriate cell decomposition of the surface; for example, faces are allowed to be intersected by defect lines (without junctions) 
at most once, and junctions and faces should intersect at most once.

The set labeling surface components $D_2$ is a set of {strongly separable symmetric 
 Frobenius algebras}.
A regularised Frobenius algebra $A$ is \emph{symmetric} if the natural bilinear pairings
$\varepsilon_a\circ\mu_b:A\otimes A\longrightarrow \Cb$ are symmetric:
$\varepsilon_{a}\circ\mu_{b}\circ\sigma_{A,A}=\varepsilon_{a}\circ\mu_{b}$,
and it is \emph{strongly separable} if there exist algebra homomorphisms $\tau_a:A\longrightarrow A$ which satisfy
$\tau_a\circ\mu_b\circ\Delta_c=\mu_b\circ\Delta_c\circ\tau_a=P_{a+b+c}$
for every $a,b,c\in\Rb_{>0}$. Using such Frobenius algebras ensures
that the state sum construction will be independent of the choice of
cell decomposition.
The examples of  Frobenius algebras considered in this paper are strongly separable symmetric with $\tau_a=P_a$.

Before we can describe the set $D_1$, we need to define bimodules.
A \textsl{bimodule} over  regularised Frobenius algebras $A$ and $B$ is a Hilbert space $X$ together with a family of maps $\rho^X_{a,b}:A\otimes X \otimes B\longrightarrow X$ (the two-sided actions), which we 
denote by 
\begin{align}
\tikzfig{bimodule-action}
\end{align}
satisfying a parameterised version of associativity, and the map
\begin{align}
\tikzfig{q-def}
\end{align}
 satisfies
 \begin{align}
 \lim_{a,b\to0} \, Q_{a,b}^X=\id_X \ .
 \end{align}
One can similarly define left and right modules, and commuting left and right actions
define a bimodule. The converse is not true in general, but the
bimodules considered in this paper are in fact left and right modules,
with corresponding morphisms $Q_a^X$, and hence in the following we only consider such bimodules.

An $A$--$B$-bimodule $X$ is \textsl{dualisable} if there exists a
$B$--$A$-bimodule $\bar{X}$ together with two families of morphisms 
$\beta_{a,b}^X:X\otimes\bar{X}\longrightarrow\Cb$ and 
$\gamma_{a,b}^X:\Cb\longrightarrow\bar{X}\otimes X$, which we denote as
\begin{align}
 \tikzfig{bimodule-duality-morph} 
 \end{align}
 that satisfy the duality relations
\begin{align}
	\tikzfig{bimodule-duality-relation}
	\label{eq:bimodule-duality-relation}
\end{align}
and which are compatible with the action:
\begin{align}
	\tikzfig{bimodule-duality-compatibility}
	\label{eq:bimodule-duality-compatibility}
\end{align}
The set labeling defect lines $D_1$ is a set of dualisable bimodules over the  regularised Frobenius algebras in $D_2$. 
A bimodule is \textsl{transmissive} if the action depends only on the sum of the parameters.
Transmissive bimodules correspond to transmissive defect lines.

In order to give the set $D_0$, we need some more notions.
Let $X$ be an $A$--$B$-bimodule and $Y$ a $B$--$C$-bimodule for regularised Frobenius algebras $A$, $B$ and $C$.
The \textsl{relative tensor product} $X\otimes_B Y$ of $X$ and $Y$
is an $A$--$C$-bimodule which is a coequaliser of the morphisms 
\begin{align}
\tikzfig{lr-action}
\end{align}
If $A$, $B$ and $C$ are strongly separable symmetric Frobenius algebras then the 
relative tensor product is the image of the idempotent 
\begin{align}
\tikzfig{rel-tensor-prod-idempot}
\end{align}
which exists for dualisable bimodules, and the action is given by 
\begin{align}
	\tikzfig{rel-tensor-prod-action}
	\label{eq:rel-tensor-prod-action}
\end{align}
where $\pi$ and $\iota$ are the projection and embedding of the image of the idempotent.
For bimodules which are left and right modules as well, the limit in \eqref{eq:rel-tensor-prod-action} exists as we are allowed to set $b_i=0$.
The fusion of defect lines in the state sum construction corresponds to the 
relative tensor product of bimodules~\cite[Theorem~4.20]{Runkel:2018aqft}.

Similarly, we define the \textsl{cyclic tensor product} $\ctimes_A X$ of an
$A$--$A$-bimodule $X$ 
by identifying the two actions. Instead of giving details here,
we just note that the idempotent with image $\ctimes_A X$ is given by 
\begin{align}
\tikzfig{cyclic-tensor-prod-idempot}
\end{align}
and refer to \cite{Runkel:2018aqft} for further details.

Consider a boundary circle of a surface with defect lines, some of which start or end on
this circle. By \cite[Theorem~4.19]{Runkel:2018aqft}, 
the state space assigned to this circle is
\begin{align}
	\funZ(S^1,A_1,\dots,A_n;X_1,\dots,X_n)=\
	\ctimes_{A_1}X_{1}^{\epsilon_1}\otimes_{A_2}X_{2}^{\epsilon_2}\otimes_{A_3} \dots\otimes_{A_n}X_{n}^{\epsilon_n}\ ,
	\label{eq:state-sum-state-space}
\end{align}
where $A_i\in D_2$, $X_i\in D_1$, and $\epsilon_i\in\left\{ \pm \right\}$ 
depending on the orientation of the $i$-th defect, with
$X_i^{+}=X_i$ and $X_i^{-}=\bar{X}_i$ the dual of $X_i$.

The set $D_0$ of junction field labels is given by families of elements in the state spaces \eqref{eq:state-sum-state-space} which are invariant under the action of cylinders
over the circles, which are cylinders with parallel defect lines. We give more detail on this in Section~\ref{Sec:Defect-Bicategory} below.

Now we sketch what the state sum construction assigns to a surface $\Sigma:S\longrightarrow T$ with defect lines.
We pick a cell decomposition of the surface such that the defect lines intersect only edges and they intersect each edge at most once. We require that every face contains at most one junction of defect lines. 
Then we define $\funZ(\Sigma)$ in two steps.
First we consider the surface $\Sigma'$ obtained from $\Sigma$ by cutting out small disks near 
the junctions, and we regard the new boundary components as ingoing.
Then we compose $\funZ(\Sigma')$ with $\id_{\funZ(S)}$ tensored with the elements from $D_0$ that label the junction fields.

It remains to show how to define $\funZ(\Sigma')$, which is
the value of the functor $\funZ$ on surfaces without junctions of defects. To each face we assign the morphism
\begin{align}
	\tikzfig{face-and-its-weight}
	\label{eq:face-and-its-weight} 
\end{align}
Then we use the duality morphisms of the bimodules, and the morphisms 
$\varepsilon_{a_1}\circ\mu_{a_2}$, to contract legs corresponding to inner edges 
according to the cell decomposition (which describes how faces are
glued together along the edges indicated by dashed lines)
and to define ingoing edges.
Finally we compose with the corresponding projections and embeddings to the state space.

As a detailed computation, consider the cylinder with parallel defect
lines illustrated in Figure~\ref{fig:cylinder-defects}.
\begin{figure}[htb]
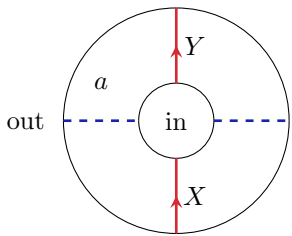

\small
	\centering
	\tikzfig{cylinder-defects}
	\caption{\small A cylinder of total area $a$ with two parallel defect lines
          labeled by $X,Y\in D_1$. 
	The dashed lines indicate a cell decomposition of this surface.}
	\label{fig:cylinder-defects}
\normalsize
\end{figure}
For the two faces we have the two morphisms from \eqref{eq:face-and-its-weight},
for the two dashed edges we contract the legs using the morphisms $\varepsilon_a\circ\mu_{a'}$
and we pull down two legs using the duality morphisms. Finally we
compose with the embedding $\iota$ and projection $\pi$ onto the
cyclic and relative tensor products of $X$ and $Y$ to get
\begin{align}
	\tikzfig{cylinder-defects-computation}
	\label{eq:cylinder-defects-computation}
\end{align}
In the first step we used the properties of the duality morphisms, and in the second step
the definitions of $\iota$ and $\pi$.
Here and in the following we do not write the area parameters
explicitly in order to streamline the presentation, and since we can distribute the area parameters among the morphisms arbitrarily.
We will also not write the morphisms $Q_{a,b}^X$ explicitly.
In Section~\ref{Sec:Orbifolding-Out-G} we give computations which involve defect junctions as well.

\subsection{The defect bicategory of Yang-Mills theory}
\label{Sec:Defect-Bicategory}

For the remainder of this paper we focus on the area-dependent quantum
field theory $\Zc=\mathcal{Z}_{\textrm{\tiny YM}}$ corresponding to
two-dimensional Yang-Mills theory, as defined in
Section~\ref{sec:intro}; in this case the state sum construction
provides a rigorous implementation of the lattice regularisation of 
Section~\ref{Sec: Physics}. Compared to Section~\ref{Sec: Physics}, in
the following we set the gauge coupling constant to $e=1$ without loss
of generality. The weak-coupling limit, which determines a topological
field theory, is then equivalent to the zero
area limit $a\longrightarrow0$.

We define a bicategory of topological defects $\Bscr_{\textsf{\tiny YM}}$ in the spirit of \cite{Davydov:2011dt,Carqueville:2012orb}.
This bicategory has as objects regularised Frobenius algebras of the form $A=L^2(G)$ where $G$ is a 
compact semi-simple Lie group. The 1-morphisms $X:A\longrightarrow B$ are transmissive bimodules with duals 
and the composition is given by the relative tensor product. 
The 2-morphisms $X\Longrightarrow Y$ for $X,Y:A\longrightarrow B$ are given by the set of families 
of maps $\{\phi_{a}:\Cb\longrightarrow\ \ctimes_A Y\otimes_B \bar{X}\}_{a\in\Rb_{>0}}$ 
which are invariant under the action of cylinders:\footnote{This
  definition is an incarnation of the operator-state correspondence of
local quantum field theory.}
\begin{align}
	\Cc_b \circ\phi_{a}=\phi_{a+b} \ ,
	\label{eq:invariant-family}
\end{align}
where $\Cc_a$ is the value of $\Zc_{\textrm{\tiny YM}}$ on the
cylinder illustrated in Figure~\ref{fig:cylinder-defects} and it is computed in \eqref{eq:cylinder-defects-computation}.
We write $\sfH^{\mathrm{inv}}(Y \otimes_B \bar{X})$ for this set of invariant families.
For any morphism $\phi:\Cb\longrightarrow\ \ctimes_A Y\otimes_B
\bar{X}$ we can define a family $\phi_a:=\Cc_a \circ\phi$, and
in this case $\phi_a\longrightarrow\phi$ in the limit $a\longrightarrow0$. 
However, there exist invariant families for which this limit does not
exist; an example of such a family is
\begin{align}
\big\{\eta^{C\ell^2(G)}_a:\Cb\longrightarrow\ \ctimes_{L^2(G)}L^2(G)\otimes_{L^2(G)} L^2(G)
\simeq \Cl(G)\big\}_{a\in\Rb_{>0}} \ .
\end{align}

Vertical composition of 2-morphisms is given by the pair of pants 
\begin{align} \label{eq:pair-of-pantsc}
	\tikzfig{pants-vertical}
\end{align}
of total area $c$. Explicitly, the vertical composition of $\left\{ \phi_a \right\}_{a\in\Rb_{>0}}:X\Longrightarrow Y$
and $\left\{ \varpi_b \right\}_{b\in\Rb_{>0}}:Y\Longrightarrow Z$ for
$X,Y,Z:A\longrightarrow B$ is the family
\begin{align}
	\left\{ \phi_a \right\}_{a\in\Rb_{>0}}
	\circ^\mathrm{ver}
	\left\{ \varpi_b \right\}_{b\in\Rb_{>0}}:=
	\big\{ \mu_c^\mathrm{ver}\circ(\phi_a\otimes \varpi_b)
  \big\}_{a+b+c\in\Rb_{>0}}:X\Longrightarrow Z \ ,
	\label{eq:vertical-composition}
\end{align}
where $\mu_c^\mathrm{ver}$ is the value of $\Zc_{\textrm{\tiny YM}}$
on the pair of pants \eqref{eq:pair-of-pantsc}.
The unit of this product is given by the value of $\Zc_{\textrm{\tiny YM}}$ on a disk crossed by a defect line.

Horizontal composition of 2-morphisms is given by acting with the pair of pants
\begin{align} \label{eq:pair-of-pantsh}
	\tikzfig{pants-horizontal}
\end{align}
of total area $c$. Explicitly, the horizontal composition of $\left\{ \phi_a \right\}_{a\in\Rb_{>0}}:X\Longrightarrow Y$ and 
$\left\{ \phi'_b \right\}_{b\in\Rb_{>0}}:X'\Longrightarrow Y'$ for
$X,Y:A\longrightarrow B$ and $X',Y':B\longrightarrow C$ is the family
\begin{align}
	\left\{ \phi_a \right\}_{a\in\Rb_{>0}}
	\circ^\mathrm{hor}
	\left\{ \phi'_b \right\}_{b\in\Rb_{>0}}:=
	\big\{ \mu_c^\mathrm{hor}\circ(\phi_a\otimes\phi'_b) \big\}_{a+b+c\in\Rb_{>0}}
	:X\otimes_{B}X'\Longrightarrow Y\otimes_{B}Y' \ ,
	\label{eq:horizontal-composition}
\end{align}
where $\mu_c^\mathrm{hor}$ is the value of $\Zc_{\textrm{\tiny YM}}$
on the pair of pants \eqref{eq:pair-of-pantsh}.
The unit of this composition is the value of $\Zc_{\textrm{\tiny YM}}$
on a disk with trivial defect line.

\begin{lemma}
	The morphisms $\mu_c^\mathrm{ver}$ and $\mu_c^\mathrm{hor}$
	for the vertical and horizontal compositions are given by
	\begin{align}
		\tikzfig{mu-vertical-horizontal}
		\label{eq:mu-vertical-horizontal}
	\end{align}
	The unit of $\mu_c^\mathrm{ver}$ is
        $\big\{\pi\circ\mathrm{coev}_a^X:\Cb\longrightarrow\ \ctimes_B
        \bar{X}\otimes_A X\big\}_{a\in\Rb_{>0}}$, while the unit of
        $\mu_c^\mathrm{hor}$ is $\big\{\iota\circ\eta^A_a:\Cb\longrightarrow\ \ctimes_A A\otimes_A A
        \big\}_{a\in\Rb_{>0}}$.
	\label{lem:mu-vertical-horizontal}
\end{lemma}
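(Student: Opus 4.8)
The plan is to compute the two morphisms directly from the state sum construction recalled in Section~\ref{Sec:State-Sum-Intro}, since by definition $\mu_c^\mathrm{ver}$ and $\mu_c^\mathrm{hor}$ are the values of $\Zc_{\textrm{\tiny YM}}$ on the pairs of pants \eqref{eq:pair-of-pantsc} and \eqref{eq:pair-of-pantsh} equipped with the indicated defect lines. First I would fix a minimal cell decomposition of each pair of pants in which every face is met by at most one defect line and no face contains a junction --- concretely two faces suffice, with the defect lines crossing the two dashed inner edges exactly as in Figure~\ref{fig:cylinder-defects}. To each face one assigns the morphism \eqref{eq:face-and-its-weight}, contracts the inner edges using the pairings $\varepsilon_{a_1}\circ\mu_{a_2}$ for the phase algebras together with the duality morphisms $\beta^X,\gamma^X$ of the bimodules, and finally composes with the projection $\pi$ and embedding $\iota$ onto the relevant cyclic and relative tensor products. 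This is the same bookkeeping already carried out for the cylinder in \eqref{eq:cylinder-defects-computation}, with one extra leg, so most of the computation can be imported verbatim.

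The simplification to the normal form displayed in \eqref{eq:mu-vertical-horizontal} then proceeds by repeatedly applying the Frobenius relations \eqref{eq:RFA-relations}, the bimodule associativity, and the duality relations \eqref{eq:bimodule-duality-relation} together with the action compatibility \eqref{eq:bimodule-duality-compatibility}, using crucially that all algebras involved are strongly separable symmetric with $\tau_a=P_a$, so that the idempotents whose images define $\ctimes$ and $\otimes_B$ can be slid along the defect lines and partially absorbed. Concretely, for $\mu_c^\mathrm{ver}$ the $Y$- and $\bar Y$-legs of the two incoming state spaces are capped off against one another by a $\beta^Y$, the $X$- and $Z$-legs are transmitted through, and what remains of the two face weights collapses, by separability, onto a single pair of pants of the ambient phase algebra acting on the surviving legs; for $\mu_c^\mathrm{hor}$ the roles of the vertical and horizontal gluings are exchanged, so that it is instead the phase algebra $B$ that is multiplied while the composite bimodule legs $X\otimes_B X'$ and $Y\otimes_B Y'$ are carried along. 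In both cases one checks that the resulting family is continuous in $c$ and compatible with the area-additivity built into \eqref{eq:RFA-rel1}.

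For the units I would identify the unit of $\mu_c^\mathrm{ver}$ with the identity 2-morphism $\mathrm{id}_X\colon X\Longrightarrow X$ and the unit of $\mu_c^\mathrm{hor}$ with the identity 2-morphism on the unit 1-morphism $\mathrm{id}_A=A$, as prescribed by the definitions \eqref{eq:vertical-composition} and \eqref{eq:horizontal-composition}. Under the operator--state correspondence \eqref{eq:invariant-family}, $\mathrm{id}_X$ is the value of $\Zc_{\textrm{\tiny YM}}$ on the cylinder over the circle carrying the single defect line $X$; running the state sum as above collapses this to $\pi\circ\mathrm{coev}^X_a$ valued in $\ctimes_B\bar X\otimes_A X$, where one uses the cyclic rotation $\ctimes_A(X\otimes_B\bar X)\simeq\ctimes_B(\bar X\otimes_A X)$. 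Likewise the identity on $\mathrm{id}_A$ is the value of $\Zc_{\textrm{\tiny YM}}$ on a disk crossed by the trivial defect, which the state sum evaluates to $\iota\circ\eta^A_a$ in $\ctimes_A A\otimes_A A\simeq\Cl(G)$. The unit laws $\mu_c^\mathrm{ver}\circ(\mathrm{id}_X\otimes\phi_a)=\phi_{a+c}$ and $\mu_c^\mathrm{hor}\circ(\mathrm{id}_{\mathrm{id}_A}\otimes\phi'_a)=\phi'_{a+c}$ then follow from the duality relations \eqref{eq:bimodule-duality-relation} and the unitality part of \eqref{eq:RFA-rel1}.

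The main obstacle I anticipate is the careful handling of the projections $\pi$ and embeddings $\iota$ onto the cyclic and relative tensor products: these are images of idempotents assembled from the Frobenius coproduct and product and the bimodule actions, and the bulk of the argument consists in showing that, after the face contractions, these idempotents commute past the defect legs and can be absorbed --- using precisely strong separability and the Frobenius relation \eqref{eq:RFA-rel2} --- so that the answer is a genuinely well-defined morphism into $\ctimes_A Z\otimes_B\bar X$, respectively into $\ctimes_A(Y\otimes_B Y')\otimes_C\overline{X\otimes_B X'}$, rather than merely into the ambient tensor product before projection. Keeping track of the (suppressed) area parameters so that all relevant zero-area limits exist is a secondary but routine point, handled exactly as in \cite[Section~4]{Runkel:2018aqft}.
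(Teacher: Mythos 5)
Your proposal takes essentially the same route as the paper: the paper likewise picks cell decompositions of the two defect-decorated pairs of pants, applies the state sum construction face by face, and simplifies the result using the definitions of $\iota$ and $\pi$ exactly as in the cylinder computation \eqref{eq:cylinder-defects-computation}, the unit claims being the straightforward disk evaluations. The only small imprecision is that the unit of the vertical composition is the state assigned to a \emph{disk} crossed by the defect line rather than to a cylinder over the defect circle (whose value is an operator, not a state), but since you evaluate it to $\pi\circ\mathrm{coev}^X_a$ and check the unit law anyway, this does not affect the argument.
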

\begin{proof}
	Consider the cell decompositions
	\begin{center}
		\tikzfig{pants-vertical-decomp}
		\qquad
		and 
		\qquad
		\tikzfig{pants-horizontal-decomp}
	\end{center}
	To the first one the state sum construction assigns the morphism
	\begin{align}
		\tikzfig{horizontal-composition-calc}
		\label{eq:horizontal-composition-calc}
	\end{align}
	which is $\mu_c^\mathrm{ver}$ after simplifying the expression
        using the definitions of $\iota$ and $\pi$ as in the
        calculation of \eqref{eq:cylinder-defects-computation}.
 For the second pair of pants the morphism is
	\begin{align}
		\tikzfig{vertical-composition-calc}
		\label{eq:vertical-composition-calc}
	\end{align}
	which can be similarly disentangled to give $\mu_c^\mathrm{hor}$.
\end{proof}

Let $\Hom_{A|B}^\mathrm{fam}(X,Y)$ denote the families of bimodule morphisms 
$\{\psi_a:X\longrightarrow Y\}_{a\in\Rb_{>0}}$ which satisfy the invariance property
\begin{align}
	Q_{b}^Y\circ\psi_a=\psi_a\circ Q_{b}^X=\psi_{a+b} \ .
	\label{eq:module-morphism-invariance}
\end{align}
The composition of two families of bimodule morphisms
$\{\xi_a:Y\longrightarrow Z\}_{a\in\Rb_{>0}}\in \Hom_{A|B}^\mathrm{fam}(Y,Z)$ and
$\{\psi_b:X\longrightarrow Y\}_{b\in\Rb_{>0}}\in \Hom_{A|B}^\mathrm{fam}(X,Y)$ is defined via the pointwise composition
\begin{align}
	\left\{ \xi_a \right\}_{a\in\Rb_{>0}}\circ\left\{ \psi_b \right\}_{b\in\Rb_{>0}}:=
	\left\{ \xi_a\circ\psi_b \right\}_{a+b\in\Rb_{>0}}  \ \in \
  \Hom_{A|B}^\mathrm{fam}(X,Z) \ .
	\label{eq:composition-of-families-of-morphisms}
\end{align}
This composition is well defined, as it is independent of the choice of parameters $a$ and $b$.
The unit for this composition is the family defined via the 
identity morphisms of the respective bimodules.
Similarly we define the relative tensor product 
of two families $\left\{ \psi_a:X\longrightarrow Y
\right\}_{a\in\Rb_{>0}}\in \Hom_{A|B}^\mathrm{fam}(X,Y)$ and 
$\left\{ \psi'_b:X'\longrightarrow Y' \right\}_{b\in\Rb_{>0}}\in \Hom_{B|C}^\mathrm{fam}(X',Y')$
pointwise by
\begin{align}
	\left\{ \psi_a \right\}_{a\in\Rb_{>0}}\otimes_B 
	\left\{ \psi'_b \right\}_{b\in\Rb_{>0}}:=
	\left\{ \psi_a\otimes_B\psi'_b \right\}_{a+b\in\Rb_{>0}} \ \in
  \ \Hom_{A|C}^\mathrm{fam}(X\otimes_B X', Y\otimes_B Y') \ ,
	\label{eq:tensor-product-of-families-of-morphisms}
\end{align}
which is again well defined.
The units for this tensor product are the families $\left\{
  P_a^A:A\longrightarrow A \right\}_{a\in\Rb_{>0}}$ in $\Hom_{A|A}^{\mathrm{fam}}(A,A)$.

We then have an analogue of \cite[Lemma~3.9]{Davydov:2011dt} given by
\begin{lemma}\label{lem:FSmaps}
	The two maps
	\begin{equation}
		\begin{tikzcd}
			\sfH^{\mathrm{inv}}(\bar{X}\otimes_A Y)\ar[bend left = 25]{r}{\Fc}& 
			\Hom^\mathrm{fam}_{A|B}(X,Y)\ar[bend left =
                        25]{l}{\Sc} \ ,
		\end{tikzcd}
		\label{eq:states-vs-bimodule-morphisms-iso}
	\end{equation}
	given by
	\begin{align}
		\tikzfig{F-S-maps}
		\label{eq:F-S-maps}
	\end{align}
	for $\{\phi_a\}_{a\in\Rb_{>0}}\in \sfH^{\mathrm{inv}}(\bar{X}\otimes_A Y)=\Hom_{\Cb}(\Cb,\sfH^{\mathrm{inv}}(\bar{X}\otimes_A Y))$ and $\{\psi_a\}_{a\in\Rb_{>0}}\in \Hom^\mathrm{fam}_{A|B}(X,Y)$,
	are inverse to each other.
	$\Fc$ sends vertical compositions to compositions of families of bimodule morphisms,
	 horizontal compositions to the relative tensor product of families of 
	bimodule morphisms, and units to units.
	\label{lem:2-morphisms-in-B}
\end{lemma}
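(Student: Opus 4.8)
The plan is to prove Lemma~\ref{lem:FSmaps} by the same diagram chase that establishes \cite[Lemma~3.9]{Davydov:2011dt} for topological defects, with the extra work lying entirely in keeping track of the area parameters. First I would check that the two prescriptions in \eqref{eq:F-S-maps} land in the stated sets. Given an invariant family $\{\phi_a\}\in\sfH^{\mathrm{inv}}(\bar X\otimes_A Y)$, the family $\Fc(\{\phi_a\})$ obtained by bending one leg of $\phi_a$ along the evaluation $\beta^X$ of the dual bimodule is a family of $A$--$B$-bimodule morphisms $X\longrightarrow Y$: left $A$-linearity and right $B$-linearity are proved by sliding the two actions past $\beta^X$ using the compatibility relations \eqref{eq:bimodule-duality-compatibility}, which converts the $X$-actions into $Y$-actions through $\phi_a$. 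The invariance property \eqref{eq:module-morphism-invariance}, namely $Q_b^Y\circ\psi_a=\psi_a\circ Q_b^X=\psi_{a+b}$, follows from $\Cc_b\circ\phi_a=\phi_{a+b}$ in \eqref{eq:invariant-family} together with the explicit form \eqref{eq:cylinder-defects-computation} of the cylinder morphism $\Cc_b$. Symmetrically, $\Sc(\{\psi_a\})$, obtained by capping $\psi_a$ off with the coevaluation $\gamma^X$ and composing with the projection and embedding onto $\ctimes_B\bar X\otimes_A X$, is an invariant family because \eqref{eq:module-morphism-invariance} for $\psi$ turns into $\Cc_b\circ\Sc(\{\psi_a\})_a=\Sc(\{\psi_a\})_{a+b}$.

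Next I would show $\Sc\circ\Fc=\id$ and $\Fc\circ\Sc=\id$. In each composite, two of the four duality morphisms of $X$ meet and are removed by the zig-zag identities \eqref{eq:bimodule-duality-relation}; what remains is the original family composed with an additional cylinder (respectively an additional $Q^X$) carrying the area of the eliminated duality morphisms, and this surplus area is absorbed using \eqref{eq:invariant-family} (respectively \eqref{eq:module-morphism-invariance}). This gives back the family one started with, so $\Fc$ and $\Sc$ are mutually inverse bijections.

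For the compatibility statements I would substitute the formula for $\mu_c^{\mathrm{ver}}$ from \eqref{eq:mu-vertical-horizontal} (Lemma~\ref{lem:mu-vertical-horizontal}) into $\Fc$ applied to the vertical composite $\{\phi_a\}\circ^{\mathrm{ver}}\{\varpi_b\}$, bend down the external $X$- and $Z$-legs along the duality morphisms, and collapse the internal $Y$-line by a zig-zag identity; the result is the pointwise composition $\{\Fc(\{\varpi_b\})_b\circ\Fc(\{\phi_a\})_a\}$, i.e.\ \eqref{eq:composition-of-families-of-morphisms}. The horizontal case is analogous with the pair of pants \eqref{eq:pair-of-pantsh}: the two $B$-lines are glued, and after composing with the idempotent projections and embeddings onto $X\otimes_BX'$ and $Y\otimes_BY'$ one recognises the relative tensor product $\Fc(\{\phi_a\})\otimes_B\Fc(\{\phi'_b\})$ of \eqref{eq:tensor-product-of-families-of-morphisms}. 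Applying $\Fc$ to the vertical unit $\{\pi\circ\mathrm{coev}_a^X\}$ and to the horizontal unit $\{\iota\circ\eta^A_a\}$ from Lemma~\ref{lem:mu-vertical-horizontal} and using the snake identities once more produces the identity families $\{Q_a^X\}$ and $\{P_a^A\}$, which are the units of \eqref{eq:composition-of-families-of-morphisms} and \eqref{eq:tensor-product-of-families-of-morphisms}.

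\textbf{Main obstacle.} The conceptual content is the ordinary state--operator correspondence and is identical to the topological prototype; the real subtlety is the area bookkeeping. Every duality morphism, cylinder and pair of pants carries a positive area, so none of the graphical moves are equalities on the nose: they hold only after redistributing areas and absorbing the surplus via the invariance properties \eqref{eq:invariant-family} and \eqref{eq:module-morphism-invariance}. One must in addition track the maps $Q^X$ and the projections and embeddings onto the relative and cyclic tensor products, verifying that $\Fc$ and $\Sc$ respect the images of the relevant idempotents so that the two sides of \eqref{eq:states-vs-bimodule-morphisms-iso} are matched correctly rather than merely as spaces of families of morphisms between the underlying Hilbert spaces. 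These are the places where the argument genuinely goes beyond \cite[Lemma~3.9]{Davydov:2011dt}.
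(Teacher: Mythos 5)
Your proposal is correct and follows essentially the same route as the paper: a graphical-calculus verification that $\Sc\circ\Fc=\id$ and $\Fc\circ\Sc=\id$ using the duality relations, the compatibility of duality with the actions, and the invariance properties \eqref{eq:invariant-family} and \eqref{eq:module-morphism-invariance}, followed by checking compatibility with the compositions via the explicit morphisms of Lemma~\ref{lem:mu-vertical-horizontal} (the paper phrases this check on the $\Sc$ side and uses cyclicity of $\ctimes_B\bar{X}\otimes_A Y\otimes_B Y'\otimes_C\bar{X'}$ for the horizontal case, but this is the same computation). Your additional explicit verification of well-definedness and of the units statement, and your emphasis on the area bookkeeping, match what the paper leaves implicit.
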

\begin{proof}
	We first look at the composition $\Sc\circ \Fc$:
	\begin{align}
		\tikzfig{SF-comp}
		\label{eq:SF-comp}
	\end{align}
	where we used the definition of $\Cc_{b+c}$ and the invariance property of 
	the family $\left\{ \phi_a \right\}_{a\in\Rb_{a>0}}$.
	Then we look at the composition $\Fc\circ \Sc$:
	\begin{align}
		\tikzfig{FS-comp}
		\label{eq:FS-comp}
	\end{align}
	where we used the definition of the projector $\Dc_0$, the compatibility
	of the duality morphisms with the actions, and the invariance property.

	Next we show the compatibility of $\Fc$ and $\Sc$ with the vertical (horizontal)
	composition of invariant families and the composition (relative tensor product)
	of families of morphisms.
	For the remainder of this proof we do not write out the parameters of the families.
	For the vertical composition we have
	\begin{align}
		\tikzfig{S-vertical-compat}\ .
		\label{eq:S-vertical-compat}
	\end{align}
	For the horizontal composition we have
	\begingroup
\allowdisplaybreaks
		\begin{align}
		&\tikzfig{S-horizontal-compat-1}\nonumber\\[4pt]
		& \hspace{2.5cm} \tikzfig{S-horizontal-compat-2}
		\label{eq:S-horizontal-compat}
	\end{align}
	\endgroup
	where we used the cyclicity of the tensor product
	$\ctimes_B\bar{X}\otimes_A Y\otimes_B Y'\otimes_C \bar{X'}$
	together with the fact that the relative tensor product of morphisms can be expressed
	using the projections $\pi$ and embeddings~$\iota$.
\end{proof}

Lemma~\ref{lem:FSmaps} gives two equivalent ways of thinking about the 2-morphisms in the 
bicategory of topological defects.
Sometimes it is easier to work with families of bimodule morphisms as their
weak-coupling limit exists more frequently, for example the families $\{\eta_a^A\}_{a\in\Rb_{>0}}$,
$\big\{ Q_{a}^X \big\}_{a\in\Rb_{>0}}$ and
$\big\{ P_a^{A} \big\}_{a\in\Rb_{>0}}$, and we can compute with the
limits instead of the families.
On the other hand, in general the weak-coupling limit of an invariant family
may not exist: Take for example the Frobenius algebra $A=\bigoplus_{k\in\N}\,\Cb\,e_k$ with orthonormal basis $\left\{ e_k \right\}_{k\in\N}$,
product $\mu(e_j\otimes e_k)=\delta_{jk} \, e_k$ and unit
$\sum_{k\in\N}\, \e^{-a\,k^2}\,e_k$, and consider $A$ as a
bimodule over itself. Then the family of endomorphisms
$\left\{ \phi_a \right\}_{a\in\Rb_{>0}}$ of $A$ given by $\phi_a(e_k)=\e^{-a\,k^2}\,k^2\,e_k$
clearly does not have a weak-coupling limit.

Accordingly we can now give a working definition.
\begin{definition}
	The \textsl{topological defect bicategory $\Bscr_{\textsf{\tiny YM}}$ of
          two-dimensional Yang-Mills theories} has:
	\begin{itemize}
		\item[(a)] \ \underline{Objects:} Hilbert spaces $L^2(G)$ for $G$ compact
                  semi-simple Lie groups with regularised Frobenius algebra structure given by \eqref{eq:L2G-RFA-3};
		\item[(b)] \ \underline{1-morphisms:} Transmissive bimodules with duals
                  between regularised Frobenius algebras $L^2(G)$ and
                  $L^2(H)$; and
		\item[(c)] \ \underline{2-morphisms:} Invariant families of bimodule morphisms.
	\end{itemize}
	\label{def:defect-bicat}
\end{definition}

In order to apply techniques from \cite{Carqueville:2012orb} later on
we will need
\begin{proposition}
	The bicategory $\Bscr_{\textsf{\tiny YM}}$ is idempotent complete; that is, its morphism categories are idempotent complete.
	\label{prop:B-idempotent-complete}
\end{proposition}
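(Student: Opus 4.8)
The plan is to show that for every pair of objects $A=L^2(G)$ and $B=L^2(H)$ the hom-category $\Bscr_{\textsf{\tiny YM}}(A,B)$ has split idempotents, i.e.\ that every idempotent $2$-morphism of $\Bscr_{\textsf{\tiny YM}}$ splits through a transmissive bimodule with duals. By Lemma~\ref{lem:FSmaps} the $2$-morphisms between $1$-morphisms $X,Y\colon A\longrightarrow B$ are equivalently described by the invariant families of bimodule morphisms in $\Hom_{A|B}^{\mathrm{fam}}(X,Y)$, compatibly with vertical and horizontal composition, so it suffices to split an idempotent family $\{\pi_a\colon X\longrightarrow X\}_{a\in\Rb_{>0}}$. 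By \eqref{eq:module-morphism-invariance} and the definition of vertical composition of families, such a $\pi$ satisfies $\pi_a\circ\pi_b=\pi_{a+b}=Q^X_b\circ\pi_a=\pi_a\circ Q^X_b$ for all $a,b\in\Rb_{>0}$; in particular the relation $Q^X_a\circ\pi_c=\pi_c\circ Q^X_a=\pi_{a+c}$ shows that the members of the family pairwise commute and that $\pi$ is right-continuous on $(0,\infty)$.

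First I would build the underlying Hilbert space of the retract. Set $Y_\circ:=\bigcup_{a>0}\pi_a(X)\subseteq X$ and let $Y$ be its closure, with inclusion $i\colon Y\hookrightarrow X$. For $v=\pi_c(w)\in Y_\circ$ one has $\pi_a(v)=\pi_{a+c}(w)\longrightarrow v$ as $a\longrightarrow 0$, so $\{\pi_a\}$ acts as an approximate identity on $Y_\circ$; granting the boundedness claim discussed below, the family $\{\pi_a|_Y\}$ then converges in the strong operator topology to a bounded idempotent $p\colon X\longrightarrow Y$ with $p\circ i=\id_Y$. The $A$--$B$-action on $Y$ is defined by $\rho^Y_{a,b}:=p\circ\rho^X_{a,b}\circ(\id_A\otimes i\otimes\id_B)$; associativity, the module axioms, transmissivity, and $\lim_{a,b\to0}Q^Y_{a,b}=\id_Y$ are all inherited from $X$ together with $p\circ i=\id_Y$. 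A short computation with the relations above shows that $\pi_a$ restricts on $Y$ to $Q^Y_a$, so that $p$ and $i$ assemble into invariant families of bimodule morphisms with $i\circ p=\{\pi_a\}$ and $p\circ i=\{Q^Y_a\}$, the latter being the identity $2$-morphism of $Y$.

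Second I would produce a dual for $Y$. Transporting $\{\pi_a\}$ along the duality morphisms of $X$ (equivalently, applying $\Fc$ and using the compatibility \eqref{eq:bimodule-duality-compatibility}) yields an idempotent family $\{\bar\pi_a\colon\bar X\longrightarrow\bar X\}$, and I would take $\bar Y$ to be its retract, constructed exactly as above, with projection $\bar p$ and inclusion $\bar i$. The duality data for $(Y,\bar Y)$ is obtained by pre- and post-composing $\beta^X$ and $\gamma^X$ with $p,i,\bar p,\bar i$; the relations \eqref{eq:bimodule-duality-relation} and the compatibility \eqref{eq:bimodule-duality-compatibility} for $Y$ then reduce to those for $X$ together with the splitting identities. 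Hence $Y$ is again a transmissive bimodule with duals, so an object of $\Bscr_{\textsf{\tiny YM}}(A,B)$, and the families $p,i$ exhibit the splitting $i\circ p=\{\pi_a\}$, $p\circ i=\id_Y$. As $A,B$ were arbitrary, this proves the proposition; it is exactly what is required to invoke the orbifold completion of \cite{Carqueville:2012orb} in the following sections.

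The main obstacle is the functional-analytic point in the first step: invariant families need not admit a weak-coupling limit, as illustrated by the example $\phi_a(e_k)=\e^{-a\,k^2}\,k^2\,e_k$ recalled before Definition~\ref{def:defect-bicat}, so the existence of $p=\lim_{a\to0}\pi_a$ must be extracted from idempotency rather than assumed. The delicate part is to upgrade the pointwise bound $\sup_{a\in(0,1]}\|\pi_a v\|<\infty$ (valid on the dense subspace $Y_\circ$, where $\pi_a v$ converges) to uniform boundedness of $\{\pi_a|_Y\}_{a\in(0,1]}$; I expect this to follow from the semigroup relation $\pi_a\circ\pi_b=\pi_{a+b}$ together with $Q^X_b\circ\pi_a=\pi_{a+b}$ and a Banach--Steinhaus argument, equivalently by showing directly that $Y_\circ$ is already closed and that the restricted family extends continuously to $a=0$. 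Once this is in place, the remaining verifications --- that the composites above are well defined, are area-independent in the required sense, and satisfy the stated identities --- are routine, since composition of families of bimodule morphisms and the relative tensor product are computed pointwise in the area parameters.
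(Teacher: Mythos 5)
Your overall strategy (take $Y$ to be the closure of $\bigcup_{a>0}\im(\psi_a)$, give it the induced bimodule structure, and split the idempotent through it) is the same as the paper's, but there is a genuine gap at the central analytic step, and you have flagged it without closing it. You try to obtain the projection $p\colon X\longrightarrow Y$ as a weak-coupling limit of the idempotent family itself. Your convergence argument $\pi_a(v)=\pi_{a+c}(w)\longrightarrow v$ only works on the dense subspace $Y_\circ\subset Y$, so even granting uniform boundedness it produces convergence of $\pi_a$ \emph{on $Y$} (where in fact $\pi_a|_Y=Q^X_a|_Y$, so the limit is essentially the inclusion), not a bounded map defined on all of $X$ with values in $Y$; nothing in the hypotheses guarantees that $\lim_{a\to 0}\pi_a$ exists on all of $X$, and the semigroup relation $\pi_a\circ\pi_b=\pi_{a+b}$ by itself does not force it (one can cook up block-diagonal idempotent families of the form $Q_a\circ q$ with $q$ an unbounded algebraic idempotent, for which $\|\pi_a\|\longrightarrow\infty$ as $a\to0$). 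Your fallback "show that $Y_\circ$ is already closed" also fails in general: already for the trivial idempotent $\psi_a=P_a$ on $L^2(G)$ the union of the images of the $P_a$ is dense but not closed.

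The paper's proof avoids the issue entirely, and this is the fix you need: since $X$ is a Hilbert space and $Y$ is a closed subspace, one simply takes $p\colon X\longrightarrow Y$ to be the orthogonal projection and $e\colon Y\longrightarrow X$ the inclusion --- no limit of the family is ever taken. The splitting 2-morphisms are then the \emph{families} $\pi=\{p\circ\psi_a\}_{a\in\Rb_{>0}}$ and $\iota=\{\psi_a\circ e\}_{a\in\Rb_{>0}}$, which are automatically bounded and invariant; one checks $\iota\circ\pi=\psi$ directly, and $\pi\circ\iota=\{Q^Y_a\}=\id_Y$ using precisely the identity $\psi_a|_Y=Q^X_a|_Y$ that you also derive. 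In other words, the operator-theoretic obstacle you single out as the "main obstacle" is an artifact of insisting on a single limiting map $p$ rather than working with invariant families, which is all that Definition~\ref{def:defect-bicat} requires. The remainder of your outline --- the induced action $p\circ\rho^X\circ(\id\otimes e\otimes\id)$ on $Y$, and the transport of duality data to exhibit $\bar Y$ (a point the paper leaves implicit) --- is fine once $p$ and $e$ are taken as above.
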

\begin{proof}
	Let $\psi=\left\{ \psi_a: X\longrightarrow X \right\}_{a\in\Rb_{>0}}$ be an idempotent on
        an $A$--$B$-bimodule $X$, that is it obeys
	$\psi_a\circ\psi_b=\psi_{a+b}$.
	Let $Y$ be the closure of the subspace $\bigcup_{a\in\Rb_{>0}}\,\im(\psi_a)$,
        $p:X\longrightarrow Y$ the projection and 
	$e:Y\longrightarrow X$ the embedding of the subspace $Y\subset X$.
	The Hilbert space $Y$ becomes an $A$--$B$-bimodule via the induced action
	$p\circ\rho^X\circ(\id_A\otimes e \otimes \id_B)$.
	Since $\psi_a$ is an intertwiner, the action on $X$ indeed restricts to $Y$.
	Set $\pi:=\{p_a=p\circ\psi_a:X\longrightarrow Y\}_{a\in\Rb_{>0}}$ and
        $\iota:=\{e_a=\psi_a\circ e:Y\longrightarrow X\}_{a\in\Rb_{>0}}$.
	Then $\iota\circ\pi=\psi$ and $\pi\circ\iota=\id_Y$; the first
        equation is clear from the definition of $p$ and $e$, while
        the second equation follows from $\psi_a(y)=Q_{a}^Y(y)$ for $y\in Y$ by the definition of $Y$.
\end{proof}

\begin{example}\label{ex:bimodule-Wilson}
Wilson lines can be described by bimodules over $L^2(G)$ as follows.
Let $V$ be a representation of $G$ and consider
$V\otimes L^2(G)$ with the commuting left and right actions
\begin{align}
	\begin{aligned}
		\psi\cdot(v\otimes f) =\left[ x\longmapsto\int_G\,\dd
                  y \ \psi(y)\,(y\cdot v)\, f(y^{-1}\,x) \right]
                \qquad \mbox{and} \qquad
	(v\otimes f) \cdot\psi=v\otimes (f*\psi)\ ,
	\end{aligned}
	\label{eq:Wilson-line-action-1}
\end{align}
for $v\in V$ and $\psi,f\in L^2(G)$.
Similarly, $L^2(G)\otimes V$ is a bimodule via
\begin{align}
	\begin{aligned}
		\psi\cdot(f\otimes v)=(\psi*f)\otimes v \qquad
                \mbox{and} \qquad
		(f\otimes v)\cdot\psi=\left[ x\longmapsto\int_G\,\dd y \
                  (y^{-1}\cdot v)\, f(x\,y^{-1})\, \psi(y) \right]\ .
	\end{aligned}
	\label{eq:Wilson-line-action-2}
\end{align}
The dual of $V\otimes L^2(G)$ is $L^2(G)\otimes V^{*}$, where $V^*$ is the
dual of $V$. 
If $G$ is connected then Wilson lines are {\it not} transmissive, 
so they are not 1-morphisms in $\Bscr_{\textsf{\tiny YM}}$.
Nevertheless we will need these for disconnected gauge groups.
For further details on Wilson lines see \cite[Proposition~5.10]{Runkel:2018aqft}.
\end{example}

\begin{example}\label{ex:bimodule-twisted}
The twisted bimodules $L_{\varphi}=L^2(G)$ for $\varphi\in\Out(G)$ with
action given in \eqref{eq:twisted-bimodule-action}
are transmissive bimodules over $L^2(G)$, so they are 1-morphisms in $\Bscr_{\textsf{\tiny YM}}$.
\end{example}

\subsection{Frobenius algebras from symmetry defects}\label{Sec:bimodule-from-Hopf}

Next we define regularised Frobenius algebras and their bimodules in
the category
$\Hilb$, starting from a Lie group and its outer automorphism group. 
	Let $G$ be a compact semi-simple Lie group and
        $\Gamma<\Out(G)$ a subgroup of outer automorphisms of $G$.
	We will sometimes use the notation $L=L^2(G)$, $H=L^2(\Gamma)$
        and $K=L^2(G\rtimes\Gamma)$ for brevity.

	The group $\Out(G)$ is finite and the algebra 
	$L^2(\Gamma)$ is isomorphic to 
	the group algebra of $\Gamma$, which has the structure of a Hopf algebra 
	with coproduct 
	\begin{align}
\tikzfig{Hopf-coproduct} \qquad \phi\longmapsto \Delta_H(\phi)=\big[
          (\gamma,\kappa)\mapsto |\Gamma|\, \phi_\gamma\,\delta_{\gamma,\kappa} \big]
		=:\phi_{(1)}\otimes\phi_{(2)}
		\label{eq:coproduct-Hopf}
	\end{align}
	and antipode
\begin{align}
\tikzfig{antipode} \qquad S(\delta_\gamma)=\frac1{|\Gamma|}\,
  \delta_{\gamma^{-1}} \ ,
\end{align}
where $\delta_\gamma(\kappa)=\delta_{\gamma,\kappa}$. 
	The algebra $L^2(\Gamma)$ acts on $L^2(G)$ via
	\begin{align}
\tikzfig{left-action}	 \qquad \phi\cdot f=\frac{1}{|\Gamma|}\,
          \sum_{\gamma\in\Gamma}\, \phi_\gamma\, f\circ \gamma^{-1}
		\label{eq:Gamma-acts-on-L}
	\end{align}
	for $\phi\in L^2(\Gamma)$ and $f\in L^2(G)$.
	 
Using the
	Hopf algebra structure of $L^2(\Gamma)$
	we endow $L^2(G)\otimes L^2(\Gamma)$ with the structure of a
        regularised Frobenius algebra with unit
	\begin{equation}
		\tikzfig{unit-LH} \ ,
		\label{unit-L2G-o-L2Gamma}
	\end{equation}
	and product
	\begin{align}
\tikzfig{semidirect-product}	\qquad (f\otimes\phi)*(g\otimes\psi)=
		\big(f*(\phi_{(1)}\cdot g)\big)\otimes\big(\phi_{(2)}*\psi\big)
		\label{eq:product-L2G-o-L2Gamma-draw}
	\end{align}
	for $f,g\in L^2(G)$ and $\phi,\psi\in L^2(\Gamma)$.
	We define the coproduct and counit to be the adjoint operators of the product and unit respectively.
	We denote this Frobenius algebra by $L^2(G)\rtimes
        L^2(\Gamma)$.

	The map 
	\begin{align}
\tikzfig{L-to-LH} \qquad L^2(G)\longrightarrow L^2(G)\rtimes L^2(\Gamma) \
          , \quad f\longmapsto f\otimes |\Gamma|\,\delta_{\id_G}
	\label{eq:L2Ginclusion}
	\end{align}
	is a homomorphism of regularised Frobenius algebras. 
	Using this morphism we obtain an $L^2(G)$--$L^2(G)$-bimodule structure on
	$L^2(G)\rtimes L^2(\Gamma)$.

\begin{proposition}\label{prop:bimodule-from-Hopf}
	\begin{enumerate}
		\item \label{prop:bimodule-from-Hopf:1}
	The map
	\begin{align}
		\Phi:L\rtimes H \longrightarrow K \ , \quad 
			f\otimes\phi\longmapsto\big[ (x,\gamma)\mapsto
          f(x)\,\phi_\gamma \big]
	\label{eq:L2G-o-L2Gamma-to-L2G-rtimes-Gamma}
	\end{align}
	is an isomorphism of regularised Frobenius algebras in $\Hilb$. The map \eqref{eq:L2Ginclusion} endows
	$K$ with the structure of a transmissive $L$--$L$-bimodule. 
		\item \label{prop:bimodule-from-Hopf:3}
	The $L$--$L$-bimodule $K$ 
	is a strongly separable symmetric Frobenius algebra 
	in $\Bscr_{\textsf{\tiny YM}}(L,L)$ via the structure morphisms
	\begin{align}
		\bar{\mu}{}^K&:=\big\{ K\otimes_{L} K\xrightarrow{ \ \iota
                           \ } K\otimes K\xrightarrow{ \ \mu_a \ }K
                           \big\}_{a\in\Rb_{>0}} \ , \nonumber \\[4pt]
		\bar{\Delta}{}^K&:=\big\{ K\xrightarrow{ \ \Delta_a \
                              }K\otimes K\xrightarrow{ \ \pi \ }
                              K\otimes_{L}K \big\}_{a\in\Rb_{>0}} \ ,
                              \nonumber \\[4pt]
		\bar{\eta}{}^K&:=\big\{ L\xrightarrow{ \ \bar{\eta}_a \ }
                            K \big\}_{a\in\Rb_{>0}} \ , \quad
                            \bar{\eta}_a(f)=
                            P_a^L(f)\otimes|\Gamma|\,\delta_{\id_{G}}
                            \ , \nonumber \\[4pt]
		\bar{\varepsilon}{}^K&:=\big\{ K\xrightarrow{ \
                                   \bar{\varepsilon}_a \ } L
                                   \big\}_{a\in\Rb_{>0}} \ , \quad
                                   \bar{\varepsilon}_a(f\otimes\phi)=
                                   P_a^L(f) \,
                                   \frac1{|\Gamma|}\,\phi_{\id_{G}}
                                   \ .
		\label{eq:prop:bimodule-from-Hopf}
	\end{align}
\item \label{prop:bimodule-from-Hopf:5}
The $L$--$L$-bimodule $M:=\bigoplus_{\varphi\in\Gamma}\,L_{\varphi}$ is a strongly separable symmetric Frobenius algebra 
	in $\Bscr_{\textsf{\tiny YM}}(L,L)$ via the structure morphisms
	\begin{align}
		\bar{\mu}{}^M&:=\Big\{ M\otimes_{L} M\xrightarrow{ \ \iota
                           \ } M\otimes
                           M\xrightarrow{\sum\limits_{\varphi,\omega\in\Gamma}\,\mu_a^{\varphi,\omega}}M
                           \Big\}_{a\in \Rb_{>0}} \ , \nonumber \\[4pt]
		\bar{\Delta}{}^M&:=\Big\{
                              M\xrightarrow{\frac1{|\Gamma|}\,\sum\limits_{\varphi,\omega\in\Gamma}\,\Delta_a^{\varphi,\omega}}M\otimes
                              M\xrightarrow{ \ \pi \ } M\otimes_{L}M
                              \Big\}_{a\in \Rb_{>0}} \ , \nonumber \\[4pt]
		\bar{\eta}{}^M&:=\Big\{ L\xrightarrow{ \ P^L_a \
                                }L=L_{\id_G}\xhookrightarrow{ \ ~~ \
                                }\bigoplus_{\varphi\in\Gamma}\,L_{\varphi}=M
                                \Big\}_{a\in \Rb_{>0}} \ ,
                                \nonumber \\[4pt]
		\bar{\varepsilon}{}^M&:=\Big\{
                                       M=\bigoplus_{\varphi\in\Gamma}\,L_{\varphi}\twoheadrightarrow
                                       L_{\id_G}=L \xrightarrow{
                                       \ P^L_a \ }L\Big\}_{a\in
                                       \Rb_{>0}} \ ,
		\label{eq:prop:bimodule-from-Hopf-M}
	\end{align}
	where
\begin{align}
\tikzfig{mu-ab} \qquad \tikzfig{Delta-ab} \qquad \tikzfig{a-star} \ .
\end{align}
	\item
	The map
	\begin{align}
		\Psi:\bigoplus_{\varphi\in\Gamma}\,L_{\varphi}\longrightarrow
                K \ , \quad 
\sum_{\varphi\in\Gamma}\,f_{\varphi}\longmapsto\big[ (x,\gamma)\mapsto f_{\gamma}\big(\gamma^{-1}(x)\big) \big]
		\label{eq:prop:outG-defect}
	\end{align}
	is an isomorphism of $L$--$L$-bimodules as well as of 
Frobenius algebras in $\Bscr_{\textsf{\tiny YM}}(L,L)$. 
		\label{prop:bimodule-from-Hopf:6}
	\end{enumerate}
\end{proposition}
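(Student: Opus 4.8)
The plan is to reduce everything to two ingredients: identities in the regularised Frobenius algebra $L^2(G\rtimes\Gamma)$, and the two explicit maps $\Phi$ and $\Psi$, through which the rest of the structure is transported. Accordingly I would prove the four statements in the order: first \eqref{prop:bimodule-from-Hopf:1}, then the bimodule isomorphism in \eqref{prop:bimodule-from-Hopf:6}, then \eqref{prop:bimodule-from-Hopf:3}, and finally \eqref{prop:bimodule-from-Hopf:5} together with the remaining (Frobenius-algebra) half of \eqref{prop:bimodule-from-Hopf:6}.

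For \eqref{prop:bimodule-from-Hopf:1}: since $\Gamma$ is finite one has $L^2(G\rtimes\Gamma)\cong L^2(G)\otimes L^2(\Gamma)$ as Hilbert spaces, and $\Phi$ is precisely this identification, hence a bijective isometry. It then suffices to verify that $\Phi$ intertwines the products and the units; the coproducts and counits follow automatically, being the adjoints of these on both sides while $\Phi$ is unitary. Compatibility with the products is a direct computation comparing the twisted convolution \eqref{eq:product-L2G-o-L2Gamma-draw} with convolution on $G\rtimes\Gamma$, where the factor $\phi_{(1)}\cdot g$ reproduces exactly the $\gamma$-twist in the multiplication law of the semi-direct product. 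For the units, one unwinds \eqref{unit-L2G-o-L2Gamma} to $\eta^L_a(1)\otimes|\Gamma|\,\delta_{\id_G}$, whose image under $\Phi$ is the function supported on the identity component $G\times\{\id_G\}$ restricting there to $|\Gamma|\,\eta^L_a(1)$; this equals $\eta^K_a(1)=\sum_{\beta\in\widehat{G\rtimes\Gamma}}\dim\beta\,\exp(-a\,C_2(\beta)/2)\,\chi_\beta$ because $C_2(\beta)$ is the common value $C_2(\alpha)$ of the $\Gamma$-orbit of $G$-representations $\alpha$ occurring in $\beta|_G$ — all equal as $\Gamma$ preserves the Killing form (Section~\ref{Sec:Symmetry}) — so that $\eta^K_a(1)$ is the heat kernel of $G\rtimes\Gamma$ based at the identity, which vanishes off the identity component and restricts there to that of $G$, the factor $|\Gamma|$ being fixed by the total mass. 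Hence $\Phi$ is an isomorphism of regularised Frobenius algebras. Composing \eqref{eq:L2Ginclusion} with $\Phi$ yields the extension-by-zero inclusion $L^2(G)\hookrightarrow L^2(G\rtimes\Gamma)$, a homomorphism of regularised Frobenius algebras, making $K$ an $L$--$L$-bimodule whose transmissivity follows from the associativity and unitality relations \eqref{eq:RFA-rel1}.

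Turning to \eqref{prop:bimodule-from-Hopf:6}: writing $M=\bigoplus_{\varphi\in\Gamma}L^2(G)$ and $K=\bigoplus_{\gamma\in\Gamma}L^2(G)$ by components, $\Psi$ carries the $\varphi$-summand onto the $\gamma=\varphi$-summand by the unitary $f_\varphi\mapsto f_\varphi\circ\varphi^{-1}$, so it is a bijective isometry; a short computation with the action formulas shows it intertwines the left and right $L$-actions, the twist by $\varphi^*$ in \eqref{eq:twisted-bimodule-action} being transported to the conjugation built into multiplication in $G\rtimes\Gamma$ on the $\gamma=\varphi$-component, and $\Psi$ is thus an isomorphism of $L$--$L$-bimodules. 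For \eqref{prop:bimodule-from-Hopf:3}, one first checks the structure morphisms \eqref{eq:prop:bimodule-from-Hopf} are invariant families of bimodule morphisms ($\mu^K_a$ is $L$-balanced by associativity in $K$, $\pi\circ\Delta^K_a$ lands in $K\otimes_L K$ by the Frobenius relation in $K$, and $\bar\eta^K,\bar\varepsilon^K$ are manifestly such), and then each axiom of a strongly separable symmetric Frobenius algebra object in $\Bscr_{\textsf{\tiny YM}}(L,L)$ — associativity, (co)unitality, the Frobenius relation, symmetry of $\bar\varepsilon^K\circ\bar\mu^K$, and strong separability with $\tau=P$ — is obtained by inserting $\pi\circ\iota=\id$, using that $\iota\circ\pi$ is the defining idempotent of $K\otimes_L K$, and reducing to the corresponding identity for the regularised Frobenius algebra $K\cong L^2(G\rtimes\Gamma)$ of \eqref{prop:bimodule-from-Hopf:1}, which like every $L^2$ of a compact group is strongly separable and symmetric with $\tau=P$; this is the same manipulation as in \eqref{eq:cylinder-defects-computation} and Lemma~\ref{lem:mu-vertical-horizontal}. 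Finally \eqref{prop:bimodule-from-Hopf:5} and the Frobenius-algebra half of \eqref{prop:bimodule-from-Hopf:6} follow by transporting the Frobenius structure of $K$ along the bimodule isomorphism $\Psi$ and checking componentwise that the transported maps are exactly \eqref{eq:prop:bimodule-from-Hopf-M} — the factors $|\Gamma|^{\pm1}$ and $\tfrac1{|\Gamma|}$ appearing in \eqref{eq:prop:bimodule-from-Hopf} and \eqref{eq:prop:bimodule-from-Hopf-M} being precisely those needed for this compatibility. (One could instead verify \eqref{prop:bimodule-from-Hopf:5} by hand, the relations for the twisted products $\mu^{\varphi,\omega}_a$ and coproducts $\Delta^{\varphi,\omega}_a$ amounting to the orthogonality and fusion relations \eqref{Eq: twining Characters} for twining characters.)

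\textbf{The main obstacle} lies in \eqref{prop:bimodule-from-Hopf:1}: pinning down the regularised Frobenius algebra structure on $L^2(G\rtimes\Gamma)$ for the disconnected group $G\rtimes\Gamma$ — in particular that $\eta^K_a(1)$ is supported on the identity component and restricts there to $|\Gamma|$ times the heat kernel of $G$ — and then carrying the normalisation factors $|\Gamma|^{\pm1}$ faithfully through all the graphical-calculus reductions, so that $\Phi$ and $\Psi$ come out as honest isomorphisms rather than merely isomorphisms up to a scalar.
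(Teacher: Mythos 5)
Your proposal is correct and follows essentially the same route as the paper: Part~1 by checking that $\Phi$ intertwines the convolution products and units (with coproducts and counits coming along by adjointness/symmetry of the structures), Part~2 by reducing the Frobenius axioms in $\Bscr_{\textsf{\tiny YM}}(L,L)$ to those of the regularised Frobenius algebra $K$ via $\pi\circ\iota=\id$ and absorption of the idempotent $\iota\circ\pi$ into the product, and Parts~3--4 by verifying that $\Psi$ intertwines the $L$--$L$-actions and then transporting the Frobenius structure of $K$ along $\Psi$, checking componentwise (with the $|\Gamma|^{\pm1}$ normalisations) that it reproduces \eqref{eq:prop:bimodule-from-Hopf-M}. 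The one genuine divergence is your treatment of the unit in Part~1: you propose to identify $\eta^K_a$ explicitly as the heat kernel of the disconnected group $G\rtimes\Gamma$ and argue that it is supported on the identity component, restricting there to $|\Gamma|$ times the heat kernel of $G$ --- this is true (the Casimir is $\Gamma$-invariant, so it acts as a scalar on each irreducible of $G\rtimes\Gamma$ and the heat semigroup preserves connected components), but it is precisely the step you flag as your main obstacle and it requires extra analysis to make rigorous. The paper sidesteps it entirely: it only verifies that $\Phi\big(\eta^L_a\otimes|\Gamma|\,\delta_{\id_G}\big)$ acts as a unit for the convolution product on $K$ in the $a\to 0$ limit and then invokes uniqueness of the unit of a regularised algebra, so no explicit formula for $\eta^K_a$ or support argument is needed; you may want to adopt that shortcut.
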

\begin{proof}
	For Part~\ref{prop:bimodule-from-Hopf:1}, we note that
	 $\Delta_H(\delta_{\varphi})=|\Gamma|\,\delta_{\varphi}\otimes\delta_{\varphi}$
	and $|\Gamma|\,\delta_{\varphi}\cdot f=f\circ\varphi^{-1}$.
	Using
	\eqref{eq:product-L2G-o-L2Gamma-draw}
	we first compute
	\begin{align}
		\Phi\big((f\otimes\delta_{\varphi})*(g\otimes\phi)\big)(x,\omega)&= 
		\Phi\big((f*|\Gamma|\,\delta_\varphi\cdot
                                                                                   g)\otimes(\delta_\varphi*\phi)\big)(x,\omega)
                                                                                   \nonumber
                                                                                   \\[4pt]
          &= 
		\big(f*(g\circ\varphi^{-1})\big)(x) \,
            \frac1{|\Gamma|}\, \phi_{\varphi^{-1}\,\omega} \ .
		\label{eq:LH-mult1}
	\end{align}
	Then we compute
	\begin{align}
		\big(\Phi(f\otimes\delta_{\varphi})*\Phi(g\otimes\phi)\big)(x,\omega)&=
		\int_{G}\, \dd y \ \frac1{|\Gamma|}\, \sum_{\gamma\in\Gamma}\,
		f(y)\,\delta_{\varphi}(\gamma)\,g\big(\gamma^{-1}(y^{-1}\,x)\big)\,\phi_{\gamma^{-1}\,\omega}
                                                                                       \nonumber
                                                                                       \\[4pt]
          &=
		\frac1{|\Gamma|}\,
            \big(f*(g\circ\varphi^{-1})\big)(x)\,\phi_{\varphi^{-1}\,\omega}
            \ ,
		\label{eq:LH-mult2}
	\end{align} 
	which agrees with \eqref{eq:LH-mult1}.

	Next we show that $\Phi\big(\eta_a^{L^2(G)}\otimes|\Gamma|\,\delta_{\id_G}\big)$ is
	a unit for this multiplication, which by uniqueness of the unit of a regularised algebra is precisely $\eta_a^{L^2(G\rtimes\Gamma)}$.
For $F=f\otimes\phi\in K$ we compute
\begin{align}
		\Big(\Phi\big(\eta_a^{L^2(G)}\otimes|\Gamma|\,\delta_{\id_G}\big)*(f\otimes\phi)\Big)(x,\omega)&=
		\int_{G}\,\dd y \ \frac1{|\Gamma|}\, \sum_{\gamma\in\Gamma}\,
		\eta_a^{L^2(G)}(y)\,|\Gamma|\,\delta_{\id_G}(\gamma)\,
		f\big(\gamma^{-1}(y^{-1}\,x)\big)\,\phi_{\gamma^{-1}\,\omega}
                                                                                                                 \nonumber
                                                                                                                 \\[4pt]
&=
		\big(\eta_a^{L^2(G)}*f\big)(x)\,\phi_\omega \ ,
\end{align}
and
\begin{align}
		\Big((f\otimes\phi)*\Phi\big(\eta_a^{L^2(G)}\otimes|\Gamma|\,\delta_{\id_G}\big)\Big)(x,\omega)&=
		\int_{G}\, \dd y \ \frac1{|\Gamma|}\,\sum_{\gamma\in\Gamma}\,
		f(y)\,\phi_\gamma\,\eta_a^{L^2(G)}\big(\gamma^{-1}(y^{-1}\,x)\big)\,|\Gamma|\,\delta_{\id_G}\big(\gamma^{-1}\,\omega\big)
                                                                                                                \nonumber
                                                                                                                \\[4pt]
&=
		\big(f\ast\eta_a^{L^2(G)}\big)(x)\,\phi_\omega \ ,
		\label{eq:LH-unit}
	\end{align}
both of which are equal to $F(x,\omega)$ in the $a\longrightarrow0$ limit, showing that
	$\Phi\big(\eta_a^{L^2(G)}\otimes|\Gamma|\,\delta_{\id_G}\big)$ is the unit of $K$.

	Part~\ref{prop:bimodule-from-Hopf:3} follows from the fact
        that $K$ is a strongly separable symmetric Frobenius algebra and from Part~\ref{prop:bimodule-from-Hopf:1}.
	We only present the computation which shows associativity.
	Let us compute the $a+b+c$ component of the family $\bar{\mu}{}^K\circ(\bar{\mu}{}^K\otimes_L \id_K)$:
	\begin{align}
		\tikzfig{mu-bar-associative}
		\label{eq:mu-bar-associative}
	\end{align}
	where we used associativity of the relative tensor product and 
	 $\mu_c\circ \Dc_0=\mu_c$.
	Computing the $a+b+c$ component of the family 
	$\bar{\mu}{}^K\circ( \id_K \otimes_L \bar{\mu}{}^K)$ in a similar way gives the same family.

For Parts~\ref{prop:bimodule-from-Hopf:5}~and~\ref{prop:bimodule-from-Hopf:6},
        the verification that the isomorphism is an $L$--$L$-bimodule morphism
	follows from a direct computation:
	For the left $L$-action we have
	\begin{align}
		\Psi\Big( g\cdot\sum_{\varphi\in\Gamma}\,f_{\varphi} \Big)&=
		\Psi\Big(
                                                                             \sum_{\varphi\in\Gamma}\,f_{\varphi}*(g\circ\varphi)\Big)
                                                                             \nonumber
                                                                             \\[4pt]
&=
		\Psi\Big( \sum_{\varphi\in\Gamma}\,\big(
  f_{\varphi}\circ\varphi^{-1}\big)*g\Big) \nonumber
          \\[4pt] &=
		\big[ (x,\gamma)\longmapsto \big(
                    (f_{\gamma}\circ\gamma^{-1})*g\big)(x) \big]
                    \nonumber \\[4pt]
&= g\cdot\Psi\Big(\sum_{\varphi\in\Gamma}\,f_{\varphi} \Big) \ .
		\label{eq:plus-L-K-iso1}
	\end{align}
	That $\Psi$ commutes with the right $L$-action can be similarly shown using 
	\begin{align}
		(f\otimes\phi)\cdot g=\big[ (x,\gamma)\longmapsto
          \big(f*(g\circ\gamma^{-1})\big)(x)\, \phi_\gamma \big] \ .
		\label{eq:plus-L-K-iso2}
	\end{align}

 Finally, we compare the Frobenius algebra structure on $M$ defined in
 \eqref{eq:prop:bimodule-from-Hopf-M} with the structure transported from
        $K$ via $\Psi$. Here we show that the product and coproduct agree.
        Since the $a\longrightarrow0$ limits of these families exist, 
        it is enough to check $\Psi\circ P_a^{M}=P_a^{K}\circ\Psi$ 
        and compare the limits; this indeed holds, as it is given by
        scaling basis elements by factors $\e^{-a\,C_2(\alpha)/2}$. For the product we compute
	\begin{align}
		\Psi^{-1}\bigg( \Psi\Big(
          \sum_{\varphi\in\Gamma}\,f_{\varphi} \Big)&*\Psi\Big(
                                                     \sum_{\omega\in\Gamma}\,g_{\omega}
                                                     \big) \bigg)
                                                     \nonumber \\[4pt]
&=
		\Psi^{-1}\bigg( \Big[
  (x,\gamma)\longmapsto\int_{G}\, \dd y \ \frac1{|\Gamma|}\,
  \sum_{\kappa\in\Gamma}\,
  f_{\kappa}\big(\kappa^{-1}(y)\big)\,g_{\kappa^{-1}\,\gamma}\big(\gamma^{-1}\,\kappa\,\kappa^{-1}(y^{-1}\,x)\big)
  \Big] \bigg) \nonumber \\[4pt]
& \hspace{3cm} =
		\Big[ (x,\gamma)\longmapsto\int_{G}\, \dd z \
  \frac1{|\Gamma|}\, \sum_{\sigma\in\Gamma}\,
  f_{\gamma\,\sigma^{-1}}\big(\sigma(z)\big)\,g_{\sigma}\big(z^{-1}\,x\big)
  \Big] \ ,
		\label{eq:transported-mu-M}
	\end{align}
	where in the last step we changed summation and integration variables $(y,\kappa)=(1,\gamma)\,{\textrm{\tiny$\bullet$}}\,(z,\sigma^{-1})$.
	This is exactly the product defined in \eqref{eq:prop:bimodule-from-Hopf-M}.
For the coproduct we compute
	\begin{align}
			\big(\Psi^{-1}\otimes\Psi^{-1}\big)\bigg[
          \Delta\bigg( \Psi\Big( \sum_{\varphi\in\Gamma}\, f_{\varphi}
          \Big) \bigg) \bigg] &=
			\big(\Psi^{-1}\otimes\Psi^{-1}\big)\Big( \big[
                                \big( (x,\gamma)\,,\,(y,\kappa)\big)
                                \longmapsto f_{\gamma\,\kappa}\big(
                                (\gamma\,\kappa)^{-1}\,(x\,\gamma(y))\big)
                                \big] \Big) \nonumber \\[4pt]
			&=\big[ \big(
                          (x,\gamma)\,,\,(y,\kappa)\big)\longmapsto
                          f_{\gamma\,\kappa}\big(\kappa^{-1}(x)\,y\big)
                          \big] \ ,
		\label{eq:transported-Delta-M}
	\end{align}
	which is exactly the coproduct described in \eqref{eq:prop:bimodule-from-Hopf-M}.
\end{proof}

\subsection{Gauging the \texorpdfstring{$\Out(G)$}{Out(G)}-symmetry}
\label{Sec:Orbifolding-Out-G}

We shall now compute the orbifold of two-dimensional Yang-Mills theory with gauge group $G$ using a defect corresponding to the $\Out(G)$-symmetry. We do this by introducing labels 
for defect junctions, and then compute the image of the projector which is the state space of the orbifold theory.
In order to verify that the orbifold theory is again a two-dimensional Yang-Mills theory 
--- with a different gauge group --- we compute the orbifold theory on the generators of
the category of bordisms with area and without defects $\Bord{\textrm{area}}$, which is given
by the commutative Frobenius algebra
structure on the state space.

The class $[\Phi]\in H^3(\Out(G),\Cb^{\times})$ obtained from the associator of a morphism category of the topological defect bicategory $\Bscr_{\textsf{\tiny YM}}$ is trivial, $[\Phi]=1$, so we can gauge any subgroup 
$\Gamma < \Out(G)$~\cite[Section~3]{FFSR:2009orb}, which we now fix.
In order to compute the orbifold theory of two-dimensional Yang-Mills theory with defect 
$M=\bigoplus_{\varphi\in\Gamma}\,L_{\varphi}$, we need to 
give labels for junctions of the defect labeled with $M$.
It is enough to give the labels for trivalent junctions because, owing
to the fact that $M$ is a strongly separable symmetric Frobenius algebra 
	in $\Bscr_{\textsf{\tiny YM}}(L,L)$, the value of the area-dependent quantum field theory
on a surface with a defect network is invariant under certain changes of
the defect network, which allow us to define junction fields with higher valency.

To a trivalent junction with an ingoing defect labeled with 
$L_{\varphi}$ and two outgoing defects labeled with $L_{\omega}$ and $L_{\gamma}$,
as in Figure~\ref{fig:trivalent-junctions}~$a)$,
we assign the family
\begin{align}
	\Big\lbrace\frac1{|\Gamma|}\,\delta_{\varphi,\gamma\,\omega}\,\eta_a^{L^2(G)}\Big\rbrace_{a
  \in \R_{>0}} \ \in \ \ 
	\ctimes_{L}\bar{L}_{\varphi}\otimes_{L} {L}_{\omega}
	\otimes_{L}L_{\varphi\,\omega}\simeq C\ell^2(G)\ ,
	\label{eq:trivalent-in-out-out}
\end{align}
where $a=a_1+a_2+a_3$ is the total area of the three individual surface
components.
\begin{figure}[htb]
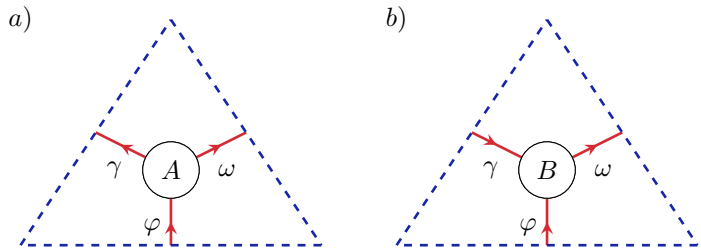

\small
	\centering
	\tikzfig{trivalent-junctions}
	\caption{\small Triangles with trivalent junctions.}
	\label{fig:trivalent-junctions}
\normalsize
\end{figure}
Similarly, to a trivalent junction with ingoing defect lines labeled with $L_{\varphi}$
and $L_{\omega}$ and an outgoing defect line labeled with $L_{\gamma}$,
as in Figure~\ref{fig:trivalent-junctions}~$b)$,
we assign
the family 
\begin{align}
	\big\lbrace
  \delta_{\gamma\,\varphi,\omega}\,\eta_a^{L^2(G)}\big\rbrace_{a \in
  \R_{>0}} \ \in \ \ 
	\ctimes_{L}\bar{L}_{\varphi}\otimes_{L} \bar{L}_{\omega}
	\otimes_{L}L_{\varphi\,\omega}\simeq C\ell^2(G)\ .
	\label{eq:trivalent-in-in-out}
\end{align}
Here the chosen families correspond to the vertical identity morphisms
after horizontally composing the two equidirectional defects in the
bicategory of topological defects $\Bscr_{\textsf{\tiny YM}}$ from Section~\ref{Sec:Defect-Bicategory}.

From \cite[Section~5.3]{Runkel:2018aqft} we get 
\begin{lemma}
	The twisted sectors of the orbifold theory are given by
	\begin{align}
		\Hc=\bigoplus_{\varphi\in\Gamma}\,\Hc_{\varphi} \ ,
		\label{eq:lem:twisted-sectors}
	\end{align}
	where $\Hc_{\varphi}= \ \ctimes_{L}L_{\varphi}$
	is the space of square-integrable functions on $G$ which are invariant under
        twisted conjugation by $\varphi\in\Gamma$,
	see Section~\ref{Sec:Orbifold}.
	\label{lem:twisted-sectors}
\end{lemma}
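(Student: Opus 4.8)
The plan is to combine the general description of orbifold state spaces in the state sum construction \cite[Section~5.3]{Runkel:2018aqft} with the explicit Frobenius algebra structure on $M=\bigoplus_{\varphi\in\Gamma}\,L_{\varphi}$ established in Proposition~\ref{prop:bimodule-from-Hopf}. For a strongly separable symmetric Frobenius algebra $M$ in $\Bscr_{\textsf{\tiny YM}}(L,L)$, the orbifold theory assigns to the circle $S^1$ the state space obtained by wrapping the orbifold defect $M$ around $S^1$; by the state sum formula \eqref{eq:state-sum-state-space} for a surface with defect lines, and using the junction fields \eqref{eq:trivalent-in-out-out}--\eqref{eq:trivalent-in-in-out} on the dual triangulation, this state space is precisely the cyclic tensor product $\ctimes_L M$ defined in Section~\ref{Sec:State-Sum-Intro}.

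First I would use the decomposition $M=\bigoplus_{\varphi\in\Gamma}\,L_{\varphi}$ of $M$ as an $L$--$L$-bimodule. The cyclic tensor product $\ctimes_L M$ is the image of the idempotent built from the structure morphisms of $L$ and the action morphisms of $M$ that identifies the left and right $L$-actions; since this idempotent is additive in the bimodule, and the twisted actions \eqref{eq:twisted-bimodule-action} do not mix distinct $\varphi\in\Gamma$, it decomposes as the direct sum of the idempotents for the individual summands $L_{\varphi}$. This yields \eqref{eq:lem:twisted-sectors} with $\Hc_{\varphi}=\ctimes_L L_{\varphi}$.

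It then remains to identify $\ctimes_L L_{\varphi}$ explicitly. Here I would evaluate the cyclic idempotent on $L_{\varphi}=L^2(G)$ with the twisted action \eqref{eq:twisted-bimodule-action}: identifying the two $L^2(G)$-actions via the convolution product and coproduct on $L^2(G)$, one finds that the resulting projector is (the zero area limit of) the orthogonal projection of $L^2(G)$ onto the subspace of functions invariant under the twisted conjugation $x\longmapsto g\,x\,\varphi(g^{-1})$ for $g\in G$. This is exactly the state space $Z_{\textrm{\tiny YM}}\big(S^1,G;\varphi\big)$ of square-integrable functions on $G$ invariant under twisted conjugation by $\varphi$, as described in Section~\ref{Sec:Orbifold}; equivalently it is spanned by the twining characters $\chi_\alpha^\varphi$ associated to fixed point representations $\varphi^*\alpha=\alpha$. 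Hence $\Hc_{\varphi}\cong Z_{\textrm{\tiny YM}}\big(S^1,G;\varphi\big)$, proving the lemma.

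The only genuinely computational step is the evaluation of the cyclic idempotent on $L_{\varphi}$, which reduces to the same character manipulations — the orthonormality and twisted fusion relations \eqref{Eq: twining Characters} — as those used in Section~\ref{Sec:Lattice}; keeping track of the area parameters and of the weak-coupling limits inside the idempotent is the main bookkeeping obstacle, but no new conceptual input is required beyond \cite{Runkel:2018aqft}.
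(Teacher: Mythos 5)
Your proposal is correct and takes essentially the same route as the paper: in the text this lemma is not proved at all but simply quoted from \cite[Section~5.3]{Runkel:2018aqft}, and your reconstruction --- identifying the twisted-sector space with $\ctimes_L M$ via the state sum construction, splitting the cyclic idempotent over the direct sum $M=\bigoplus_{\varphi\in\Gamma}L_{\varphi}$, and computing its image on $L_{\varphi}$ as the $\varphi$-twisted-conjugation invariant functions spanned by the twining characters --- is precisely the content of that citation. No gaps; the only caveat is the routine one you already flag, namely checking that the zero-area limits of the relevant idempotents exist, which holds for these bimodules.
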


We will need the morphisms
\begin{align}
	\tikzfig{iotaAB}
	\label{eq:iotaAB} 
\end{align}
\begin{lemma}
The state sum construction assigns to the parts of a cell decomposition
	\begin{center}
		\tikzfig{trivalent-junctions-cut}
	\end{center}
each with total
        area $a$, the respective morphisms
	\begin{align}
		\frac1{|\Gamma|}\,\delta_{\varphi,\gamma\,\omega}\,\iota_A\circ\eta_a^{L^2(G)}
		\qquad\text{and}\qquad
	\delta_{\gamma\,\varphi,\omega}\,\iota_B\circ\eta_a^{L^2(G)} \ .
		\label{eq:lem:trivalent-junctions-cut}
	\end{align}
	\label{lem:trivalent-junctions-cut}
\end{lemma}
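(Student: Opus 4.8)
The plan is to take the two families of junction fields defined in \eqref{eq:trivalent-in-out-out} and \eqref{eq:trivalent-in-in-out}, which are elements of the cyclic/relative tensor product state spaces $\ctimes_{L}\bar{L}_{\varphi}\otimes_{L} {L}_{\omega}\otimes_{L}L_{\varphi\,\omega}\simeq C\ell^2(G)$ (respectively with $\bar L_\omega$ in place of $L_\omega$), and feed them into the state sum recipe for a surface with a defect junction described at the end of Section~\ref{Sec:State-Sum-Intro}. Recall that in that recipe one first cuts out a small disk around the junction, leaving a surface $\Sigma'$ without junctions whose new boundary circle carries the relative tensor product state space \eqref{eq:state-sum-state-space}, and then composes $\funZ(\Sigma')$ with the identity tensored with the junction-field element from $D_0$. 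Here $\Sigma'$ is just the triangle of Figure~\ref{fig:trivalent-junctions}~$a)$ or~$b)$ with a hole, so its cell decomposition consists of the single face carrying the morphism \eqref{eq:face-and-its-weight} together with the inner edges contracted via the duality morphisms $\beta,\gamma$ of the bimodules $L_\varphi$, $L_\omega$, $L_{\varphi\omega}$ and the pairings $\varepsilon_a\circ\mu_b$, exactly as in the sample computation \eqref{eq:cylinder-defects-computation} and in the proof of Lemma~\ref{lem:mu-vertical-horizontal}.

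First I would draw the graphical-calculus diagram for the triangle-with-hole $\Sigma'$: the face weight from \eqref{eq:face-and-its-weight}, the three bimodule legs $L_\varphi$ (ingoing), $L_\omega$, $L_\gamma$, the contractions along the dashed inner edges, and the new circular boundary. Composing with the junction field $\tfrac1{|\Gamma|}\delta_{\varphi,\gamma\omega}\,\eta_a^{L^2(G)}\in\ctimes_{L}\bar L_\varphi\otimes_L L_\omega\otimes_L L_{\varphi\omega}$ plugs $\eta_a^{L^2(G)}$ into the circle legs, so the Kronecker delta $\delta_{\varphi,\gamma\omega}$ factors straight out (it is the only place the labels interact), and what remains is $\eta_a^{L^2(G)}$ post-composed with the embedding of $C\ell^2(G)\simeq\ctimes_L L_\varphi\otimes_L\cdots$ into the relevant ambient space. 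The next step is to simplify this remaining morphism using the definitions of $\iota_A$ (resp.\ $\iota_B$) from \eqref{eq:iotaAB}, the duality relations \eqref{eq:bimodule-duality-relation}, the compatibility \eqref{eq:bimodule-duality-compatibility} of the duality morphisms with the actions, and the snake/zig-zag identities — precisely the moves used to pass from the raw state-sum expression to the clean form in \eqref{eq:cylinder-defects-computation} and \eqref{eq:horizontal-composition-calc}--\eqref{eq:vertical-composition-calc}. Since $L_\varphi=L^2(G)$ as a Hilbert space with action twisted by $\varphi$, all the relative tensor products collapse through the idempotents $\Dc_0$, and one reads off that the face weight composed with the contractions is exactly $\iota_A\circ\eta_a^{L^2(G)}$ (resp.\ $\iota_B\circ\eta_a^{L^2(G)}$), as claimed. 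The case of Figure~\ref{fig:trivalent-junctions}~$b)$ is identical after replacing the outgoing $L_\omega$ by an ingoing one, i.e.\ replacing $L_\omega$ by its dual $\bar L_\omega$, which only changes the delta to $\delta_{\gamma\varphi,\omega}$ and removes the $\tfrac1{|\Gamma|}$ (the normalisation is absorbed differently because one less $1/|\Gamma|$ enters from the coproduct-side duality).

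The main obstacle I expect is purely bookkeeping: keeping the twisted actions $\rho_{a,b}$ of the $L_\varphi$'s straight through the graphical moves, and tracking where the factors of $|\Gamma|$ originate — in particular making the normalisations $\tfrac1{|\Gamma|}\delta_{\varphi,\gamma\omega}$ versus $\delta_{\gamma\varphi,\omega}$ come out correctly, which hinges on how many of the three legs are "outgoing" (hence contracted with a coevaluation $\gamma^{L_\bullet}_{a,b}$ carrying no $|\Gamma|$) versus "ingoing". Once the diagram is drawn carefully this is a routine unwinding, and no genuinely new idea beyond the technique already displayed in \eqref{eq:cylinder-defects-computation} and Lemma~\ref{lem:mu-vertical-horizontal} is needed; the only subtlety is that one must invoke the invariance property \eqref{eq:invariant-family} of the junction-field families and the zero-area limits of the $Q$-morphisms (which exist since the $L_\varphi$ are left and right modules) to legitimately distribute the area parameter $a=a_1+a_2+a_3$ over the three surface pieces and land on a single $\eta_a^{L^2(G)}$.
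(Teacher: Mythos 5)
Your proposal follows essentially the same route as the paper's proof: evaluate the state sum on the cut triangle with the junction field from \eqref{eq:trivalent-in-out-out} (resp.\ \eqref{eq:trivalent-in-in-out}) inserted, and simplify the resulting diagram using the duality relations and the fact that $\iota_A$ (resp.\ $\iota_B$) cancels the idempotents $\Dc_0$ assigned to cylinders with parallel defect lines. One small correction: the factor $\tfrac{1}{|\Gamma|}$ in case $a)$ and its absence in case $b)$ is not generated by the contraction step or any ``coproduct-side duality''; it is simply carried along from the chosen junction-field labels \eqref{eq:trivalent-in-out-out} and \eqref{eq:trivalent-in-in-out}, and in both cases the state-sum evaluation itself produces exactly $\iota_A\circ\eta_a^{L^2(G)}$ resp.\ $\iota_B\circ\eta_a^{L^2(G)}$ with no additional numerical factors.
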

\begin{proof}
	Let $\gamma:=\varphi\,\omega^{-1}$.
The morphism assigned to the part of the cell decomposition $a)$ is
\begin{align}
	\tikzfig{trivalent-junctions-cut-mor-a}
	\label{eq:trivalent-junctions-cut-mor-a}
\end{align}
Now we use the fact that $\iota_A$ 
cancels the idempotents $\Dc_0$ assigned to cylinders with parallel defect lines to obtain $\frac1{|\Gamma|}\,\iota_A\circ\eta^{L^2(G)}_a$.
One similarly computes the morphism associated to $b)$.
\end{proof}

Next we turn to the projector of \cite[Section~1]{Brunner:2013orb} whose image is 
the state space of the orbifold theory. 
The projector $\Pc$ is obtained by applying the defect area-dependent
quantum field theory on the 
cylinders in Figure~\ref{Fig:Defect-action} and taking the weak-coupling limit.
\begin{proposition}
	\begin{enumerate}
		\item 
	The projector $\Pc$ is given by
	\begin{align}
		\Pc=\frac1{|\Gamma|}\,\sum_{\omega\in\Gamma}\,\big(\omega^{-1}\big)^{*}:\Hc\longrightarrow
          \Hc\ ,
		\label{eq:prop:orbifold-projector}
	\end{align}
	which implements an action of $\Gamma$ on $\Hc$.
	\label{prop:orbifold-projector:1}
		\item
	The image of $\Pc$ is the subspace $\Hc^{\Gamma}$ of
        $\Gamma$-invariants under this action and there is an isomorphism
	\begin{align}
		\Hc^{\Gamma}\simeq C\ell^2(G\rtimes\Gamma)\ .
		\label{eq:prop:orbifold-projector-image}
	\end{align}
	\label{prop:orbifold-projector:2}
	\end{enumerate}
	\label{prop:orbifold-projector}
\end{proposition}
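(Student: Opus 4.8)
The plan is to establish Part~1 by running the state sum construction on the cylinders of Figure~\ref{Fig:Defect-action}, and Part~2 by a standard averaging argument followed by the identification of $\Hc^{\Gamma}$ with class functions on the group extension. For Part~1, I would fix a cell decomposition of the cylinder over $S^1$ carrying the transversal $M$-defect (responsible for the twisted sectors $\Hc=\bigoplus_{\varphi\in\Gamma}\Hc_{\varphi}$ of Lemma~\ref{lem:twisted-sectors}) together with a transmissive $\omega$-defect wound once around the circle, the two meeting at trivalent junctions of the types treated in Lemma~\ref{lem:trivalent-junctions-cut}. Decomposing $M=\bigoplus_{\varphi\in\Gamma}L_{\varphi}$, the Kronecker deltas in the junction families \eqref{eq:trivalent-in-out-out}--\eqref{eq:trivalent-in-in-out} force the outgoing transversal component to be $L_{\omega\varphi\omega^{-1}}$, so the $\omega$-summand of this configuration is a map $\Hc_{\varphi}\to\Hc_{\omega\varphi\omega^{-1}}$.

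Next I would assign to the faces the weights \eqref{eq:face-and-its-weight}, contract the internal edges with the pairings $\varepsilon_{a}\circ\mu_{b}$ and the bimodule duality morphisms exactly as in the model computation \eqref{eq:cylinder-defects-computation}, and insert the junction data of Lemma~\ref{lem:trivalent-junctions-cut}. Using the explicit twisted action \eqref{eq:twisted-bimodule-action}, the averaged action \eqref{eq:Gamma-acts-on-L} of $L^{2}(\Gamma)$ on $L^{2}(G)$ and the Hopf coproduct \eqref{eq:coproduct-Hopf}, the resulting family on $\Hc_{\varphi}$ collapses to $\{P_{a}^{L}\circ(f\mapsto f\circ\omega^{-1})\}_{a\in\Rb_{>0}}$; since $\omega$ preserves the quadratic Casimir it commutes with $P_{a}^{L}$, and $\lim_{a\to0}P_{a}^{L}=\id$, so the weak-coupling limit is $(\omega^{-1})^{*}$. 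Summing over $\omega\in\Gamma$ with the normalisation $\tfrac1{|\Gamma|}$ carried by the junction data then yields $\Pc=\tfrac1{|\Gamma|}\sum_{\omega\in\Gamma}(\omega^{-1})^{*}$. A one-line change of variables $h\mapsto\omega^{-1}(h)$ shows that each $(\omega^{-1})^{*}$ sends $\Hc_{\varphi}$ into $\Hc_{\omega\varphi\omega^{-1}}$ and that $\omega\mapsto(\omega^{-1})^{*}$ is a left $\Gamma$-action on $\Hc$; consequently $\Pc^{2}=\Pc$, as averaging a group action over a finite group is always idempotent (this also follows abstractly from \cite{Brunner:2013orb}). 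This proves Part~1.

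For Part~2, the equality $\im\Pc=\Hc^{\Gamma}$ is the familiar statement that the averaging projector of a finite group action has image the invariants: $\Pc$ restricts to the identity on $\Hc^{\Gamma}$ while $(\sigma^{-1})^{*}\circ\Pc=\Pc$ for all $\sigma\in\Gamma$ forces $\im\Pc\subseteq\Hc^{\Gamma}$. To identify $\Hc^{\Gamma}$ with $\Cl(G\rtimes\Gamma)$, I would define $\Theta\colon\Hc^{\Gamma}\to\Cl(G\rtimes\Gamma)$ by $\Theta\big((f_{\varphi})_{\varphi\in\Gamma}\big)=\big[(g,\varphi)\mapsto f_{\varphi}(g)\big]$ and check that it is well defined with a two-sided inverse given by restriction $F\mapsto\big(F(\,\cdot\,,\varphi)\big)_{\varphi\in\Gamma}$. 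Here invariance of $F$ under conjugation by $(h,\mathrm{id}_{G})$ is exactly the defining condition of $\Hc_{\varphi}$ from Lemma~\ref{lem:twisted-sectors}, while invariance under conjugation by $(1,\omega)$ --- which sends $(g,\varphi)$ to $(\omega(g),\omega\varphi\omega^{-1})$ --- is precisely the $\Gamma$-invariance relation $f_{\omega\varphi\omega^{-1}}=f_{\varphi}\circ\omega^{-1}$; since $G\rtimes\Gamma$ is generated by $G\times\{\mathrm{id}_{G}\}$ and $\{1\}\times\Gamma$, the two conditions together give full conjugation invariance. As the index set $\Gamma$ is finite, $\Theta$ and its inverse are bounded, so $\Theta$ is the asserted Hilbert-space isomorphism; equivalently one can transport the statement through the Frobenius-algebra isomorphism $\Psi$ of \eqref{eq:prop:outG-defect}, under which $\Hc=\ctimes_{L}M\cong\ctimes_{L}K$ becomes the space of functions on $G\rtimes\Gamma$ invariant under $G\times\{\mathrm{id}_{G}\}$-conjugation (this is what the cyclic tensor product over the inclusion \eqref{eq:L2Ginclusion} computes) and the $\Gamma$-action becomes conjugation by $\{1\}\times\Gamma$, so that $(\ctimes_{L}K)^{\Gamma}=\Cl(G\rtimes\Gamma)$.

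The main obstacle is the bookkeeping in Part~1: keeping track of the several powers of $|\Gamma|$ --- from the Hopf coproduct \eqref{eq:coproduct-Hopf}, the averaged action \eqref{eq:Gamma-acts-on-L}, the junction normalisations \eqref{eq:trivalent-in-out-out}--\eqref{eq:trivalent-in-in-out}, and the isomorphism $\Psi$ --- through the contraction of the cell decomposition, so that the projector comes out as $\tfrac1{|\Gamma|}\sum_{\omega}(\omega^{-1})^{*}$ with no stray constant, together with verifying that the weak-coupling limits used genuinely exist on the relevant closed subspaces. The conceptual skeleton --- orbifold cylinder yields the averaging projector, its image is the invariants, and $\Gamma$-invariant twisted sectors reassemble into class functions on $G\rtimes\Gamma$ --- is routine and mirrors the physical discussion of Section~\ref{Sec:Orbifold}.
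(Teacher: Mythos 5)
Your proposal is correct and takes essentially the same route as the paper: Part~1 is the state sum evaluation of the cylinders of Figure~\ref{Fig:Defect-action} with the junction data of Lemma~\ref{lem:trivalent-junctions-cut}, summed over $\varphi,\omega\in\Gamma$ in the weak-coupling limit, and Part~2 uses exactly the paper's mutually inverse maps $h\longmapsto\big[(x,\alpha)\mapsto h_{\alpha}(x)\big]$ and $f\longmapsto\big(f(\,\cdot\,,\varphi)\big)_{\varphi\in\Gamma}$. Your small variations --- deducing full conjugation invariance from the generators $G\times\{\id_G\}$ and $\{1\}\times\Gamma$ instead of the paper's direct computation, and spelling out the standard averaging argument for $\im\Pc=\Hc^{\Gamma}$ --- are cosmetic rather than a different method.
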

\begin{proof}
	For Part~\ref{prop:orbifold-projector:1}, take a cell
        decomposition of the cylinder in
        Figure~\ref{Fig:Defect-action} where we cut along the dashed lines: 
	\begin{center}
		\tikzfig{cylinder-decomp-2}
	\end{center}
Using Lemma~\ref{lem:trivalent-junctions-cut}, the morphism assigned
by the state sum area-dependent quantum field theory to the cylinder is
	\begingroup
	\allowdisplaybreaks
	\begin{align}
		&\tikzfig{orbifold-projector-simplified-1}\nonumber\\[4pt]
		& \hspace{1cm} \tikzfig{orbifold-projector-simplified-2}
		\label{eq:orbifold-projector-simplified}
	\end{align}
	\endgroup
	Summing over $\varphi,\omega\in\Gamma$ we get \eqref{eq:prop:orbifold-projector}.

	For Part~\ref{prop:orbifold-projector:2}, let $h\in\Hc^{\Gamma}$ and write $h=\sum_{\varphi\in\Gamma}\,h_{\varphi}$
	for its components. 
	We define the map
	\begin{align}
		\Psi:\Hc^{\Gamma}\longrightarrow C\ell^2(G\rtimes
                                   \Gamma) \ , \quad
		h\longmapsto \big[ (x,\alpha)\mapsto h_{\alpha}(x) \big] \ .
		\label{eq:H-Gamma-Cl-iso}
	\end{align}
We show that the image of the map $\Psi$ indeed lands in $C\ell^2(G\rtimes \Gamma)$.
	Since $h$ is invariant under the action of $\Gamma$ we have
	\begin{align}
	\omega\cdot h=\sum_{\varphi\in\Gamma}\,\omega\cdot h_{\varphi}
		=\sum_{\gamma\in\Gamma}\,h_{\gamma}=h \ .
		\label{eq:hinvariant}
	\end{align}
	Because $\omega\cdot h_{\varphi}\in\Hc_{\omega\,\varphi\,\omega^{-1}}$,
	we get 
	\begin{align}
		h_{\varphi}\circ
          \omega^{-1}=\omega\cdot h_{\varphi}=h_{\omega\,\varphi\,\omega^{-1}}
          \ .
		\label{eq:hinvariant-comp}
	\end{align}
	Let $(x,\varphi),(y,\omega)\in G\rtimes\Gamma$ and $f:=\Psi(h)$.
	Then 
	\begin{align}
		f\big( (x,\varphi)\,{\textrm{\tiny$\bullet$}}\,(y,\omega)\,{\textrm{\tiny$\bullet$}}\,(x,\varphi)^{-1} \big)&=
		f\big( x\,\varphi(y)\,(\varphi\,\omega\,\varphi^{-1})(x^{-1})\,,\,\varphi\,\omega\,\varphi^{-1} \big)
		\nonumber \\[4pt]
&:=
		h_{\varphi\,\omega\,\varphi^{-1}}\big(
                x\,\varphi(y)\,(\varphi\,\omega\,\varphi^{-1})(x^{-1})
                \big) \nonumber \\[4pt]
		&=
                h_{\varphi\,\omega\,\varphi^{-1}}\big(\varphi(y)\big)
                \nonumber \\[4pt]
&=
		h_{\omega}(y) \nonumber \\[4pt]
		&=:
		f(y,\omega) \ ,
		\label{eq:Psi-lands-in-Cl}
	\end{align}
	where in the third equality we used the twisted conjugation property of elements in
	$\Hc_{\varphi\,\omega\,\varphi^{-1}}$, and in the fourth
        equality we used \eqref{eq:hinvariant-comp}.

	Now we define 
	\begin{align}
		\Phi:C\ell^2(G\rtimes\Gamma)&\longrightarrow\Hc^{\Gamma}
                                              \ , \quad
		f\longmapsto h^f=\sum_{\varphi\in\Gamma}\,h^f_{\varphi}\ ,
		\label{eq:Cl-H-Gamma-iso}
	\end{align}
	where $h^f_{\varphi}(x)=f(x,\varphi)$.
We show that the image of the map $\Phi$ indeed lands in $\Hc^{\Gamma}$.
	First we show that $h^f_{\varphi}\in\Hc_{\varphi}$ for $\varphi\in\Gamma$. 
	Let $x,y\in G$ and compute
	\begin{align}
		h^f_{\varphi}\big( x\,y\,\varphi(x^{-1}) \big)
		:=
		f\big( x\,y\,\varphi(x^{-1}),\varphi \big) 
		=f\big( (x,\id_G)\,{\textrm{\tiny$\bullet$}}\,(y,\varphi)\,{\textrm{\tiny$\bullet$}}\,(x,\id_G)^{-1}
                  \big) 
		=
		f(y,\varphi) 
=h^f_{\varphi}(y) \ ,
		\label{eq:Phi-lands-in-H-Gamma-1}
	\end{align}
	where in the third equality we used the twisted conjugation invariance of $f$.
	Next we show that $h^f$ is $\Gamma$-invariant:
	\begin{align}
		(\omega\cdot h^f)(x)&=\sum_{\varphi\in\Gamma}\,(\omega\cdot
                h^f_{\varphi})(x) \nonumber \\[4pt]
&=
		\sum_{\varphi\in\Gamma}\,
                h^f_{\varphi}\big(\omega^{-1}(x)\big) \nonumber \\[4pt]
		&=\sum_{\varphi\in\Gamma}\, f\big(
                \omega^{-1}(x),\varphi \big) \nonumber \\[4pt]
		&=
		\sum_{\varphi\in\Gamma}\, f\big(
                (1,\omega)\,{\textrm{\tiny$\bullet$}}\,(\omega^{-1}(x),\varphi)\,{\textrm{\tiny$\bullet$}}\,(1,\omega)^{-1}\big)
                \nonumber \\[4pt]
		&=\sum_{\varphi\in\Gamma}\,
                f\big(x,\omega\,\varphi\,\omega^{-1}\big) \nonumber \\[4pt]
		&=\sum_{\varphi'\in\Gamma}\, f\big(x,\varphi'\big)
                \nonumber \\[4pt]
&=h^f(x) \ ,
		\label{eq:Phi-lands-in-H-Gamma-2}
	\end{align}
where in the fourth equality we used again the twisted conjugation
invariance of $f$ and changed summation variable in the sixth equality.
	
	Clearly the maps $\Psi$ and $\Phi$ are inverse to each other.
\end{proof}

\begin{theorem}
	The orbifold theory of two-dimensional Yang-Mills theory with
        gauge group $G$ and with orbifold defect
\begin{align}
\bigoplus_{\varphi\in\Gamma}\,L_{\varphi}
\end{align}
is two-dimensional
        Yang-Mills theory with gauge group $G\rtimes\Gamma$.
	\label{thm:orbifold-2dYM}
\end{theorem}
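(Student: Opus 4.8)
The plan is to reduce the statement to a comparison of commutative regularised Frobenius algebras. Recall from the discussion following \eqref{eq:RFA-relations} and \cite[Section~3.2]{Runkel:2018aqft} that an area-dependent quantum field theory without defects is determined, up to isomorphism, by the commutative regularised Frobenius algebra it assigns to $S^1$, equivalently by the bounded operators it assigns to the generating bordisms of $\Bord{\textrm{area}}$: the cup, the cap, and the two pairs of pants, each of arbitrary area. In contrast to the topological case, the area-regularised operations must be matched, not merely the underlying Frobenius structure. Hence it suffices to produce an isomorphism between the orbifold theory on $S^1$ and the commutative regularised Frobenius algebra $\Cl(G\rtimes\Gamma)$ of class functions on $G\rtimes\Gamma$ --- the centre of the regularised Frobenius algebra $L^2(G\rtimes\Gamma)$ obtained from \eqref{eq:L2G-RFA-3} with $G$ replaced by $G\rtimes\Gamma$ --- since by definition the latter is two-dimensional Yang-Mills theory with gauge group $G\rtimes\Gamma$. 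The generalised orbifold by $M=\bigoplus_{\varphi\in\Gamma}\,L_\varphi$ is well posed because the associator class $[\Phi]\in H^3(\Out(G),\Cb^{\times})$ is trivial and, by Proposition~\ref{prop:bimodule-from-Hopf}, $M$ is a strongly separable symmetric Frobenius algebra in $\Bscr_{\textsf{\tiny YM}}(L,L)$; this is precisely the input required by \cite{FFSR:2009orb,Carqueville:2012orb,Brunner:2013orb}, and in particular it ensures that the trivalent junction fields \eqref{eq:trivalent-in-out-out} and \eqref{eq:trivalent-in-in-out} glue consistently into junction fields of arbitrary valency.

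The underlying Hilbert space of the orbifold theory on $S^1$ is already in hand: by Lemma~\ref{lem:twisted-sectors} the twisted sectors are $\Hc=\bigoplus_{\varphi\in\Gamma}\,\ctimes_L L_\varphi$, and by Proposition~\ref{prop:orbifold-projector} the state space of the orbifold theory is the image of the orbifold projector $\Pc=\frac1{|\Gamma|}\,\sum_{\omega\in\Gamma}\,(\omega^{-1})^{*}$, namely $\Hc^{\Gamma}$, together with the explicit mutually inverse isomorphisms $\Hc^{\Gamma}\cong\Cl(G\rtimes\Gamma)$ of \eqref{eq:H-Gamma-Cl-iso} and \eqref{eq:Cl-H-Gamma-iso}. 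So it remains to transport the algebraic structure along this identification.

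For the structure morphisms I would compute, through the state sum construction of Section~\ref{Sec:State-Sum-Intro}, the operators the orbifold theory assigns to the disk and to the two pairs of pants, evaluated on a cell decomposition whose dual graph is a defect network all of whose edges carry $M$ and all of whose trivalent vertices carry the junction fields \eqref{eq:trivalent-in-out-out} and \eqref{eq:trivalent-in-in-out}. The decisive tool is the isomorphism of Frobenius algebras $M\xrightarrow{\ \sim\ }K$ in $\Bscr_{\textsf{\tiny YM}}(L,L)$ of \eqref{eq:prop:outG-defect}, combined with the isomorphism of regularised Frobenius algebras $K\cong L^2(G\rtimes\Gamma)$ of \eqref{eq:L2G-o-L2Gamma-to-L2G-rtimes-Gamma}. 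Under the first isomorphism the structure morphisms $\bar{\mu}^{M},\bar{\Delta}^{M},\bar{\eta}^{M},\bar{\varepsilon}^{M}$ of \eqref{eq:prop:bimodule-from-Hopf-M} correspond to $\bar{\mu}^{K},\bar{\Delta}^{K},\bar{\eta}^{K},\bar{\varepsilon}^{K}$ of \eqref{eq:prop:bimodule-from-Hopf}, and under the second these become the area-regularised convolution product, coproduct, unit and counit of $L^2(G\rtimes\Gamma)$ read off from \eqref{eq:L2G-RFA-3}; likewise the bimodule duality pairings used to contract internal edges become the pairings $\varepsilon_a\circ\mu_b$ of $L^2(G\rtimes\Gamma)$, and the prefactors $\eta_a^{L^2(G)}$ appearing in the junction fields are promoted to $\eta_a^{L^2(G\rtimes\Gamma)}$ (compare Lemma~\ref{lem:trivalent-junctions-cut}, where $\iota_A,\iota_B$ absorb the cylinder idempotents). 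Feeding this dictionary into the state sum recipe --- triangle weights \eqref{eq:face-and-its-weight} contracted along dashed edges by $\varepsilon_a\circ\mu_b$, followed by the projections and embeddings onto the cyclic and relative tensor products --- reproduces verbatim the Runkel state sum presentation of two-dimensional Yang-Mills theory with gauge group $G\rtimes\Gamma$ evaluated on the same bordism. Hence the orbifold theory and $\mathcal{Z}_{\textrm{\tiny YM}}$ for $G\rtimes\Gamma$ agree on the generators of $\Bord{\textrm{area}}$, and therefore as symmetric monoidal functors.

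The main obstacle will be the bookkeeping of area parameters: the orbifold state sum distributes areas among triangle weights and junction fields, and one must verify that the resulting families of bounded operators are continuous in the total area and recover the characteristic dependence $\e^{-a\,C_2(\alpha)/2}$ of two-dimensional Yang-Mills theory with gauge group $G\rtimes\Gamma$. As in the proof of Proposition~\ref{prop:bimodule-from-Hopf}, every family involved differs from its $a\to0$ limit only by the idempotent $P_a$, and the transport isomorphism $M\cong K$ intertwines the $P_a$'s up to the expected rescaling --- the discrepancy amounts to scaling the character basis by $\e^{-a\,C_2(\alpha)/2}$ --- so one may carry out the comparison with the weak-coupling limits and reinstate the area regularisation at the end. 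In this sense no genuinely new estimate is needed; the substance of the argument is the orchestration of Proposition~\ref{prop:bimodule-from-Hopf}, Proposition~\ref{prop:orbifold-projector}, Lemma~\ref{lem:twisted-sectors} and Lemma~\ref{lem:trivalent-junctions-cut} established above.
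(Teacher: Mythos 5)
Your overall skeleton coincides with the paper's: identify the orbifold state space via Proposition~\ref{prop:orbifold-projector}, then pin down the theory by evaluating it on the generators of $\Bord{\mathrm{area}}$. The gap is in the step you call the "decisive tool". An isomorphism $M\simeq K$ of Frobenius algebras \emph{in} $\Bscr_{\textsf{\tiny YM}}(L,L)$ together with $K\simeq L^2(G\rtimes\Gamma)$ in $\Hilb$ does not make the orbifold state sum "verbatim" the Runkel--Szegedy state sum for the gauge group $G\rtimes\Gamma$. The two constructions use genuinely different data: the orbifold is still a state sum for the phase $L=L^2(G)$, so the face weights and the inner-edge contractions $\varepsilon_a\circ\mu_b$ are those of $L^2(G)$, the defect lines are contracted with the $L$-bimodule duality morphisms of $M$, and the junction fields live in cyclic/relative tensor products over $L$; the Frobenius structure that $K$ carries as a 1-morphism in $\Bscr_{\textsf{\tiny YM}}(L,L)$ has product $K\otimes_L K\to K$ and unit/counit valued in $L$ (see \eqref{eq:prop:bimodule-from-Hopf}), which is not the regularised Frobenius structure of $L^2(G\rtimes\Gamma)$ over $\Cb$ that defines Yang--Mills theory with gauge group $G\rtimes\Gamma$. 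Bridging these — showing that a fine $M$-network inside the $L$-theory reproduces the $K$-theory — is exactly the content of the theorem, and the paper supplies it by explicitly evaluating the cup, cap and pair of pants with the chosen cell decompositions and the junction fields \eqref{eq:trivalent-in-out-out}, \eqref{eq:trivalent-in-in-out}, using Lemma~\ref{lem:trivalent-junctions-cut}.

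Your "verbatim" claim is moreover literally false as stated: the explicit pair-of-pants computation gives \eqref{eq:orbifold-pants-value2}, which matches the convolution product \eqref{eq:convprod} on $\Cl(G\rtimes\Gamma)$ only after rescaling, so the identification $\Cl(G\rtimes\Gamma)\xrightarrow{\ \simeq\ }\Hc^{\Gamma}$ is multiplication by $|\Gamma|$ rather than the map of \eqref{eq:Cl-H-Gamma-iso} on the nose. The normalisation factors $1/|\Gamma|$ in the junction fields and in $\bar{\Delta}{}^M$ of \eqref{eq:prop:bimodule-from-Hopf-M} are precisely the bookkeeping your dictionary elides (your remark that "$\eta_a^{L^2(G)}$ is promoted to $\eta_a^{L^2(G\rtimes\Gamma)}$" is an instance of this unproven promotion). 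If you want to avoid the paper's direct computation, you would instead need to prove a general statement of the form "the generalised orbifold of a state-sum area-dependent theory by the bimodule obtained by inducing along an algebra map $L\to K$ is the state-sum theory of $K$", which is not available in the regularised setting and whose proof would amount to the same calculation; as it stands, your argument assumes the conclusion at its key step.
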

\begin{proof}
	We need to compute the regularised Frobenius algebra structure on $C\ell^2(G\rtimes \Out(G))$
	given by the orbifold theory. This is done by computing the 
	orbifold theory on the generators of $\Bord{\mathrm{area}}$.
	The computations are similar to those for the cylinder in the proof of
	Proposition~\ref{prop:orbifold-projector}, so here we provide less details.

	For the cup with area $a$ we pick the defect network 
	and cell decomposition
	\begin{center}
		\tikzfig{cup-decomp-2}
	\end{center}
where we identify the two dashed edges on the two sides. The value of
the state sum area-dependent quantum field theory on this is
	\begin{align}
		\tikzfig{orbifold-unit-2}
		\label{eq:orbifold-unit}
	\end{align}
	We similarly obtain
\begin{align}
\tikzfig{orbifold-counit}
\end{align}
for the value of the area-dependent quantum field theory on the cap with the defect network and cell decomposition
	\begin{center}
		\tikzfig{cap-decomp-2}
	\end{center}

	Finally let us turn to the pair of pants with two ingoing
        circles and
        one outgoing circle:
	\begin{center}
		\tikzfig{pants-decomp}
	\end{center}
	To this decomposition the state sum area-dependent quantum
        field theory assigns the morphism 
	\begingroup
	\allowdisplaybreaks
	\begin{align}
	&\tikzfig{orbifold-pants-1}\nonumber\\[4pt]
		& \hspace{3cm} \tikzfig{orbifold-pants-2}
		\label{eq:orbifold-pants}
	\end{align}
	\endgroup
	Let $f,g\in C\ell^2(G\rtimes\Gamma)\simeq \Hc^{\Gamma}$ with components
	$f_{\omega}=f(\,\cdot\,,\omega)$ and
        $g_\omega=g(\,\cdot\,,\omega)$ for $\omega\in\Gamma$. The
        morphism in \eqref{eq:orbifold-pants} acts on these functions as
	\begin{align}
		f\otimes g\longmapsto
		\frac1{|\Gamma|^2}\,\sum_{\varphi,\omega,\varrho,\nu\in\Gamma}\,
		\big(f_{\nu\,\omega\, \nu^{-1}}\circ(\varrho\,\varphi\, \varrho^{-1})\big)*g_{\varrho\,\varphi\, \varrho^{-1}}
		=\sum_{\sigma,\kappa\in\Gamma}\,
		(f_{\kappa}\circ \sigma)*g_{\sigma}
		=\sum_{\sigma,\kappa\in\Gamma}\,
		f_{\sigma^{-1}\,\kappa\,\sigma}*g_{\sigma}
		\label{eq:orbifold-pants-value}
	\end{align}
	which maps 
	\begin{align}
		(x,\gamma)\longmapsto
		\sum_{\sigma,\kappa\in\Gamma}\, \big(f_{\sigma^{-1}\,\kappa\,\sigma}*g_{\sigma}\big) (x)\,\delta_{\kappa\,\sigma,\gamma}
		=\sum_{\sigma\in\Gamma}\, \big(f_{\sigma^{-1} \,
          \gamma}*g_{\sigma}\big)(x) \ ,
		\label{eq:orbifold-pants-value2}
	\end{align}
	as $f_{\sigma^{-1}\,\kappa\,\sigma}*g_{\sigma}\in\Hc_{\kappa\,\sigma}$.
	On the other hand we have
	\begin{align}
		(f*g) (x,\gamma)&=
		\frac1{|\Gamma|}\,\sum_{\kappa\in\Gamma}\,\int_{G}\,
                \dd y \ 
		f(y,\kappa)\,g\big((y,\kappa)^{-1}\,{\textrm{\tiny$\bullet$}}\,(x,\gamma)\big)
                \nonumber \\[4pt]
                &=\frac1{|\Gamma|}\,\sum_{\kappa\in\Gamma}\,\int_{G}\,
                \dd y \ 
		f_{\kappa}(y)\,g_{\kappa^{-1}\,\gamma}\big(\kappa^{-1}(y^{-1}\,x)\big)
                \nonumber \\[4pt]
		&=\frac1{|\Gamma|}\,\sum_{\kappa\in\Gamma}\,\int_{G}\,
                \dd y \ 
		f_{\kappa}(y)\,g_{\gamma\,\kappa^{-1}}\big(y^{-1}\,x\big)
                \nonumber \\[4pt]
		&= \frac1{|\Gamma|}\,\sum_{\sigma\in\Gamma}\,
		\big(f_{\sigma^{-1} \, \gamma}*g_{\sigma}\big)(x) \ .
		\label{eq:convprod}
	\end{align}
	Altogether we have shown that multiplying by $|\Gamma|$ is an
        isomorphism of regularised Frobenius algebras
	$C\ell^2(G\rtimes\Gamma)\xrightarrow{ \ \simeq \ }\Hc^{\Gamma}$.
\end{proof}

\subsection{Orbifold equivalence and the backwards orbifold}
\label{Sec:Backwards-Orbifold}

We can translate the statement of Theorem~\ref{thm:orbifold-2dYM} into an adjoint equivalence in the
orbifold completion $\Bscr_{\textsf{\tiny
    YM}}^\mathrm{orb}$~\cite{Carqueville:2012orb} of the topological
defect bicategory of two-dimensional Yang-Mills theories
$\Bscr_{\textsf{\tiny YM}}$, which is idempotent complete by
Proposition~\ref{prop:B-idempotent-complete}. Since the objects of
$\Bscr_{\textsf{\tiny YM}}$ are bimodules in the symmetric monoidal
category $\Hilb$, and the left and right duality morphisms can be
related by the symmetric braiding, it follows that the defect
bicategory $\Bscr_{\textsf{\tiny YM}}$ is pivotal, similarly to the
case of topological field theories.
\begin{proposition}
There are adjoint equivalences in $\Bscr_{\textsf{\tiny YM}}^\mathrm{orb}$:
	\begin{align}
		{}_{K}K_{L}: \left( L,{}_{L}K_{L} \right)
          \rightleftarrows \left( K,{}_{K}K_{K} \right) :{}_{L}K_{K}
          \qquad \mbox{and} \qquad 	{}_{K}K_{L}:
		\left( L,{}_{L}L_{L} \right)
		\rightleftarrows
		\left( K, {}_{K}K\otimes_{L} K_{K} \right)
		:{}_{L}K_{K}\ .
	\end{align}
\label{prop:adjoint-equiv}
\end{proposition}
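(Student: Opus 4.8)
The plan is to realise both equivalences through the single adjoint pair $\bigl({}_K K_L,\,{}_L K_K\bigr)$ of $1$-morphisms of $\Bscr_{\textsf{\tiny YM}}$, where ${}_K K_L$ is $K=L^2(G\rtimes\Gamma)$ regarded as a transmissive $K$--$L$-bimodule with the left regular $K$-action and the right $L$-action pulled back along the algebra map \eqref{eq:L2Ginclusion}, and ${}_L K_K$ is its dual. Since $\Bscr_{\textsf{\tiny YM}}$ is pivotal and, by Proposition~\ref{prop:B-idempotent-complete}, idempotent complete, the orbifold completion $\Bscr_{\textsf{\tiny YM}}^\mathrm{orb}$ of \cite{Carqueville:2012orb} is available. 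First I would record the two Frobenius algebras canonically attached to this pair: the relative tensor product ${}_L K_K\otimes_K{}_K K_L$ is $\cong{}_L K_L$ via $K\otimes_K K\cong K$, and carries a Frobenius structure induced by the adjunction (co)evaluations; this is precisely the algebra $M=\bigoplus_{\varphi\in\Gamma}L_{\varphi}$ of Proposition~\ref{prop:bimodule-from-Hopf}, the matching of the adjunction structure maps with $\bar{\mu}{}^M,\bar{\Delta}{}^M,\dots$ being part of the content of that proposition. Dually, ${}_K K_L\otimes_L{}_L K_K={}_K K\otimes_L K_K$ is a strongly separable symmetric Frobenius algebra in $\Bscr_{\textsf{\tiny YM}}(K,K)$ — its strong separability being equivalent to that of $M$, i.e.\ to ${}_K K_L$ being a \emph{separable} $1$-morphism (Proposition~\ref{prop:bimodule-from-Hopf}, with $\tau_a=P_a$). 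In particular all four objects appearing in the statement are well defined.

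Next I would equip ${}_K K_L$ and ${}_L K_K$ with the module structures over these algebras induced by the pivotal (co)evaluations: on ${}_K K_L$ the right ${}_L K_L$-action, and on ${}_L K_K$ the left ${}_L K_L$-action, are simply multiplication in $K$, whereas the left (respectively right) ${}_K K\otimes_L K_K$-action is obtained by contracting two of the three factors with the counit of $M$. With these structures ${}_K K_L$ becomes a $1$-morphism $(L,{}_L K_L)\to(K,{}_K K_K)$ as well as $(L,{}_L L_L)\to(K,{}_K K\otimes_L K_K)$, and ${}_L K_K$ the reverse $1$-morphisms. The adjoint-equivalence data then reduce to four computations of relative tensor products in $\Bscr_{\textsf{\tiny YM}}^\mathrm{orb}$: ${}_K K_L\otimes_{{}_L K_L}{}_L K_K\cong{}_K K_K$ and ${}_L K_K\otimes_{{}_K K_K}{}_K K_L\cong{}_L K_L$ for the first pair, and ${}_K K_L\otimes_{{}_L L_L}{}_L K_K\cong{}_K K\otimes_L K_K$ and ${}_L K_K\otimes_{{}_K K\otimes_L K_K}{}_K K_L\cong{}_L L_L$ for the second. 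The first three are immediate from the definitions of the module structures (in particular $K\otimes_K K\cong K$ and the triviality of relative tensor product over a unit algebra), and the triangle identities for the two adjunctions follow from the zigzag identities of the pivotal structure of $\Bscr_{\textsf{\tiny YM}}$ together with the strong separability of $M$ and of ${}_K K\otimes_L K_K$, exactly as in \cite{Carqueville:2012orb}. In this way the first equivalence is simply the lift of Theorem~\ref{thm:orbifold-2dYM} to $\Bscr_{\textsf{\tiny YM}}^\mathrm{orb}$.

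The substantive step — and the one I expect to be the main obstacle — is the remaining isomorphism ${}_L K_K\otimes_{{}_K K\otimes_L K_K}{}_K K_L\cong{}_L L_L$: one must show that the relative tensor product over the \emph{large} algebra ${}_K K\otimes_L K_K=K\otimes_L K$ collapses the $L$--$L$-bimodule ${}_L K_K\otimes_K{}_K K_L\cong K$ all the way down to $L$. This is the categorical incarnation of the reverse-orbifold computation of Section~\ref{sec:reverseorb}: writing this relative tensor product as the image of the idempotent built from $\Delta_{K\otimes_L K}$ and the two $K\otimes_L K$-actions, I would identify $K\otimes_L K$ with the pullback along $G\rtimes\Gamma\to\Gamma$ of the algebra $L^2(\Gamma)$ and use the Hopf-algebra formulas of Section~\ref{Sec:bimodule-from-Hopf}, together with the regular-representation character identity $\chi_{L^2(\Gamma)}(\kappa)=|\Gamma|\,\delta_{\kappa,\id_G}$, to show that this idempotent is the projection of $L^2(G\rtimes\Gamma)$ onto the subspace on which the $\Gamma$-grading is trivial, namely $L^2(G)=L$. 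The computation is best carried out in the graphical calculus, in close parallel with the proof of Proposition~\ref{prop:bimodule-from-Hopf}; the only additional care needed is to track the area parameters and to check that the weak-coupling limits of the $P_a$-families playing the role of identity $2$-morphisms exist, so that the invariant families representing the unit and counit $2$-morphisms are genuinely obtained from these limits.
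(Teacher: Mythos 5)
Your strategy is sound, but it is genuinely different from — and much heavier than — the paper's argument. The paper's entire proof consists of the observation ${}_{L}K_{L}\simeq {}_{L}K\otimes_{K}K_{L}$, which exhibits both orbifolding algebras as the two canonical Frobenius algebras $X^{\dagger}\otimes X$ and $X\otimes X^{\dagger}$ attached to the single dualisable 1-morphism $X={}_{K}K_{L}$ and its adjoint ${}_{L}K_{K}$, followed by an appeal to \cite[Proposition~4.4]{Carqueville:2012orb}, which states that such a pair furnishes adjoint equivalences of exactly the form claimed in the orbifold completion. In particular, everything you propose to verify by hand — the module structures, the four relative tensor products, and the triangle identities — is already packaged in that general result; most notably, the step you single out as the main obstacle, ${}_{L}K_{K}\otimes_{K\otimes_{L}K}{}_{K}K_{L}\cong {}_{L}L_{L}$, is in the cited framework a purely formal consequence of the adjunction zig-zag identities together with the strong separability of $K\otimes_{L}K$ (obtained from \cite[Proposition~4.3]{Carqueville:2012orb}), so no Hopf-algebra or regular-representation-character computation is needed there; that group-theoretic input has already been spent in Proposition~\ref{prop:bimodule-from-Hopf}. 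Your hands-on route should nevertheless go through and amounts to unpacking the cited proposition in this example — the idempotent computation you sketch is the categorical shadow of the reverse-orbifold calculation of Section~\ref{sec:reverseorb} — and what it buys is explicitness at the cost of redundancy (note also that your triangle identities are anyway deferred to \cite{Carqueville:2012orb}, so you do not fully avoid the citation). Two points you currently elide would need care if you carry this out: the idempotent in your key step must be built from the adjunction-induced Frobenius structure on $K\otimes_{L}K$, not an ad hoc one; and your claim that the adjunction-induced structure on ${}_{L}K\otimes_{K}K_{L}$ coincides with the structure of Proposition~\ref{prop:bimodule-from-Hopf} (i.e.\ with $M$) is not literally part of that proposition and requires a short separate check.
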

\begin{proof}
	Observe that ${}_{L}K_{L}\simeq {}_{L}K\otimes_{K}K_{L}$.
	Then use \cite[Proposition~4.4]{Carqueville:2012orb} get the second adjoint equivalence.
\end{proof}

By \cite[Proposition~4.3]{Carqueville:2012orb}, $K\otimes_{L^2(G)} K$ has 
the structure of a strongly separable symmetric Frobenius algebra, 
which we will describe in more detail now.
We will see that $K\otimes_{L^2(G)} K$ is a Wilson line defect,
which is isomorphic to a defect coming from a group symmetry exactly when $\Gamma$ is abelian.
\begin{proposition}
	\begin{enumerate}
	\item $K\otimes_L K\simeq K\otimes H$ as $K$--$K$-bimodules.
			The left $K$-action is by multiplication on the $K$ factor and the
			right $K$-action is
			\begin{align}
				\tikzfig{KH-right-action}
				\label{eq:KH-right-action}
			\end{align}
			where we used the isomorphism 
			$\Phi:L\rtimes H \xrightarrow{~\simeq~} K$
                        from Part~\ref{prop:bimodule-from-Hopf:1}
			of Proposition~\ref{prop:bimodule-from-Hopf}.
			Denote this $K$--$K$-bimodule by $(K\otimes H)_\mathrm{ind}$.
			\label{prop:reverse-orbifold-bimod:1}
		\item Consider $H=L^2(\Gamma)$ as a left $\Gamma$-module with action given by
			$(\gamma\cdot\phi)_\omega=\phi_{\omega\,\gamma}$ for 
			$\omega,\gamma\in\Gamma$ and $\phi\in H$, and 
			consider the pullback of $H$ along the projection 
			$G\rtimes\Gamma\longrightarrow\Gamma$ which we denote again by $H$.
			Write $(K\otimes H)_\mathrm{Wilson}$ for the $K$--$K$-bimodule 
			structure given by Example~\ref{ex:bimodule-Wilson}.
			The left $K$-action is by multiplication on the 
			$K$ factor and the right $K$-action is given by
			\begin{align}
				\tikzfig{KH-Wilson-action}
				\label{eq:KH-Wilson-action}
			\end{align}
			\label{prop:reverse-orbifold-bimod:2}
		\item The map 
			\begin{align}
				\tikzfig{KH-iso-Ind-Wil}
				\label{eq:KH-iso-Ind-Wil}
			\end{align}
			is a $K$--$K$-bimodule isomorphism $\Psi:(K\otimes H)_\mathrm{ind}\longrightarrow(K\otimes H)_\mathrm{Wilson}$.
			\label{prop:reverse-orbifold-bimod:3}
		\item The bimodule $(K\otimes H)_\mathrm{Wilson}$ is a direct sum of invertible
			bimodules if and only if $\Gamma$ is abelian.
			\label{prop:reverse-orbifold-bimod:4}
	\end{enumerate}
	\label{prop:reverse-orbifold-bimod}
\end{proposition}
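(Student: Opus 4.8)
The plan is to establish the four parts in sequence, reducing all of them to explicit manipulations with functions on $G\rtimes\Gamma$, together with one structural fact about Wilson line bimodules that is needed only for Part~\ref{prop:reverse-orbifold-bimod:4}. For Part~\ref{prop:reverse-orbifold-bimod:1} I would first note that, because $G$ is normal in $G\rtimes\Gamma$ with quotient $\Gamma$, the inclusion $L\hookrightarrow K$ of \eqref{eq:L2Ginclusion} exhibits $K$ as a free right $L$-module on the cosets: concretely $K\simeq\bigoplus_{\gamma\in\Gamma}L_{\gamma^{-1}}$ as right $L$-modules, the $\gamma$-summand being the functions supported on $G\times\{\gamma\}$ with the right action twisted by the automorphism $\gamma^{-1}$ exactly as in \eqref{eq:twisted-bimodule-action}. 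Combining this with the isomorphism $L_{\varphi}\otimes_L N\simeq{}^{\varphi}N$ (the $\varphi$-twist of the left $L$-module $N$), and with the explicit description of the relative tensor product as the image of the idempotent underlying \eqref{eq:rel-tensor-prod-action}, gives a vector-space isomorphism $K\otimes_L K\simeq\bigoplus_{\gamma\in\Gamma}K\simeq K\otimes H$. I would then write this isomorphism out directly --- sending $k\otimes_L k'$ to the image of $k'$ decomposed along the $\Gamma$-grading, with a left factor built from $k$ --- and verify from the multiplication \eqref{eq:product-L2G-o-L2Gamma-draw}, transported through $\Phi$ of Part~\ref{prop:bimodule-from-Hopf:1} of Proposition~\ref{prop:bimodule-from-Hopf}, that the left $K$-action is multiplication on the $K$-factor and the right $K$-action is precisely \eqref{eq:KH-right-action}. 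Part~\ref{prop:reverse-orbifold-bimod:2} is then obtained by specialising Example~\ref{ex:bimodule-Wilson} to the representation $V=H$ pulled back along $G\rtimes\Gamma\longrightarrow\Gamma$: since the $G\rtimes\Gamma$-action on $V$ factors through the finite quotient $\Gamma$, the integral in \eqref{eq:Wilson-line-action-2} collapses to the finite sum displayed in \eqref{eq:KH-Wilson-action}, and the left action is manifestly multiplication on the $K$-factor.

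For Part~\ref{prop:reverse-orbifold-bimod:3} I would spell out the candidate morphism $\Psi$ of \eqref{eq:KH-iso-Ind-Wil} as a concrete map $K\otimes H\longrightarrow K\otimes H$ given in terms of group elements of $G\rtimes\Gamma$ and $\Gamma$ (built from the coproduct of $H$, hence only a shift of the $\Gamma$-grading), and check three things: that it is an isometric isomorphism, its inverse being of the same shape with the grading shift reversed; that it is left $K$-linear, which is immediate since both bimodule structures act by multiplication on the $K$-factor; and that it intertwines the right actions \eqref{eq:KH-right-action} and \eqref{eq:KH-Wilson-action}. Only the last check requires computation, and it reduces --- after transporting through $\Phi$ --- to the compatibility of the $\Gamma$-action on $L$ in \eqref{eq:Gamma-acts-on-L} with the coproduct \eqref{eq:coproduct-Hopf} of $H$; I would carry it out in the graphical calculus. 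One finally notes that $\Psi$ is an isomorphism of $2$-morphisms in $\Bscr_{\textsf{\tiny YM}}$, i.e.\ of invariant families of bimodule morphisms, which is automatic because $\Psi$ commutes with the maps $Q$.

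For Part~\ref{prop:reverse-orbifold-bimod:4} the key input is that a Wilson line bimodule $W_V$ over $K$ built from $V\in\Rep(G\rtimes\Gamma)$ as in Example~\ref{ex:bimodule-Wilson} satisfies $W_V\otimes_K W_{V'}\simeq W_{V\otimes V'}$ with $W_{\Cb}\simeq K$, so $W_V$ is invertible in $\Bscr_{\textsf{\tiny YM}}(K,K)$ exactly when $V$ has a tensor inverse in $\Rep(G\rtimes\Gamma)$, i.e.\ when $\dim V=1$; moreover $W_V$ is indecomposable when $V$ is irreducible, because $\End_{K|K}(W_V)\simeq\End_{G\rtimes\Gamma}(V)=\Cb$. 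By Peter--Weyl for the finite group $\Gamma$, $H=L^2(\Gamma)\simeq\bigoplus_{c\in\widehat{\Gamma}}V_c^{\oplus\dim V_c}$ as $\Gamma$-representations, so by Parts~\ref{prop:reverse-orbifold-bimod:2} and~\ref{prop:reverse-orbifold-bimod:3} we have $(K\otimes H)_{\mathrm{Wilson}}\simeq\bigoplus_{c\in\widehat{\Gamma}}W_{V_c}^{\oplus\dim V_c}$. If $\Gamma$ is abelian then every $V_c$ is one-dimensional, so each $W_{V_c}$ is invertible and the direct sum is as claimed. If $\Gamma$ is non-abelian then some $V_c$ has $\dim V_c>1$; since $W_{V_c}$ is indecomposable and occurs as a summand, a Krull--Schmidt argument shows any decomposition of $(K\otimes H)_{\mathrm{Wilson}}$ into indecomposables contains a copy of $W_{V_c}$, which is not invertible, so $(K\otimes H)_{\mathrm{Wilson}}$ is not a direct sum of invertible bimodules.

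The main obstacle I expect is twofold. First, the careful bookkeeping of the automorphism twists threaded through Parts~\ref{prop:reverse-orbifold-bimod:1} and~\ref{prop:reverse-orbifold-bimod:3}, in particular keeping straight whether one twists by $\gamma$ or $\gamma^{-1}$ on the left versus the right tensor factor. Second, establishing cleanly in the regularised, area-dependent setting the two facts about Wilson lines used in Part~\ref{prop:reverse-orbifold-bimod:4} --- the tensor-product formula $W_V\otimes_K W_{V'}\simeq W_{V\otimes V'}$ and the computation of $\End_{K|K}(W_V)$ --- together with checking that the relevant Wilson lines, namely those for representations pulled back from $\Gamma$, are genuinely transmissive and hence $1$-morphisms in $\Bscr_{\textsf{\tiny YM}}$.
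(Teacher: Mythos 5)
Your overall plan is sound and Parts~\ref{prop:reverse-orbifold-bimod:2} and~\ref{prop:reverse-orbifold-bimod:3} proceed exactly as in the paper (collapse of the integral in \eqref{eq:Wilson-line-action-2} via the delta-functions, then a direct check that $\Psi$ intertwines the two right $K$-actions, left-linearity being immediate). Where you diverge is Part~\ref{prop:reverse-orbifold-bimod:1}: the paper does not pass through the decomposition $K\simeq\bigoplus_{\gamma\in\Gamma}L_{\gamma}$ and the rule $L_{\varphi}\otimes_L N\simeq{}^{\varphi}N$; instead it writes down an explicit factorisation $\Dc_0^{K,K}=\iota\circ\pi$ of the limit idempotent defining $K\otimes_L K$, with $\pi\circ\iota=\id_{K\otimes H}$, and then reads off the induced right action directly from $\pi\circ\rho\circ(\id\otimes\iota\otimes\id)$, so the identification of the underlying space and the action formula \eqref{eq:KH-right-action} come out of one computation. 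Your route gets the Hilbert-space identification more structurally (at the cost of first justifying $L_{\varphi}\otimes_L N\simeq{}^{\varphi}N$ in the regularised setting and of the $\gamma$-versus-$\gamma^{-1}$ bookkeeping you flag), but you then still have to verify the action formula by hand, so there is no real saving; the paper's splitting argument is the more economical one here. For Part~\ref{prop:reverse-orbifold-bimod:4} you are more explicit than the paper, which only invokes the Peter--Weyl decomposition $L^2(\Gamma)\simeq\bigoplus_{c\in\widehat{\Gamma}}V_c^{\oplus v_c}$ and the fact that all irreducibles of $\Gamma$ are one-dimensional iff $\Gamma$ is abelian, leaving the invertibility criterion for the Wilson summands implicit. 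Your two supporting facts --- the fusion rule $W_V\otimes_K W_{V'}\simeq W_{V\otimes V'}$ and $\End_{K|K}(W_V)\simeq\End_{G\rtimes\Gamma}(V)$ together with a Krull--Schmidt argument --- would indeed close this, but they are exactly the points you defer, and they require genuine work in the regularised setting (the fusion computation is another idempotent-splitting of the type in Part~\ref{prop:reverse-orbifold-bimod:1}, and the endomorphism computation should be phrased for invariant families of bimodule morphisms as in Lemma~\ref{lem:2-morphisms-in-B}). Note that for the ``abelian'' direction you can bypass the general fusion formula: for a one-dimensional character the Wilson bimodule is just $K$ with the right action twisted by the corresponding algebra automorphism of $K$, hence invertible with inverse given by the conjugate character, in complete analogy with the bimodules $L_{\varphi}$. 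With those deferred lemmas supplied, your argument is correct and in fact somewhat more complete than the published proof of Part~\ref{prop:reverse-orbifold-bimod:4}.
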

\begin{proof}
	For Part~\ref{prop:reverse-orbifold-bimod:1},
	the idempotent $\Dc_0^{K,K}$ projecting onto the relative tensor product is the weak-coupling limit
	\begin{align}
		\tikzfig{idempotKLK}
		\label{eq:idempotKLK}
	\end{align}
	We can factorise this as $\Dc_0^{K,K}=\iota\circ\pi$, where
	\begin{align}
		\tikzfig{iota-pi-KLK}
		\label{eq:iota-pi-KLK}
	\end{align}
	We first show $\pi\circ\iota=\id_{K\otimes H}$:
	\begin{align}
		\tikzfig{pi-iota-id-KLK-easy} \ .
		\label{eq:pi-iota-id-KLK}
	\end{align}
	Now we show $\iota\circ\pi=\Dc_0^{K,K}$:
	\begin{align}
		\tikzfig{iota-pi-D0-KLK} \ .
		\label{eq:iota-pi-D0-KLK}
	\end{align}
	The induced action can be computed in a similar way as
	\begin{align}
		\tikzfig{induced-KK-action}
		\label{eq:induced-KK-action}
	\end{align}
	and after writing out the right action of $L$ on $K$ we get \eqref{eq:KH-right-action}.

	For Part~\ref{prop:reverse-orbifold-bimod:2}
	we compute the right $K$-action on $(K\otimes H)_\mathrm{Wilson}$
	for $f,g\in L$ and
        $\delta_{\varphi},\delta_{\omega},\delta_{\gamma}\in H$ with $\varphi,\omega,\gamma\in\Gamma$:
	\begin{align}
		& (f\otimes\delta_{\varphi}\otimes\delta_{\omega})\cdot(g\otimes\delta_{\gamma})
          \\ & \hspace{3cm} =
		\left[ (x,\sigma)\longmapsto \int_{G}\, \dd y \
                                                                                                \frac1{|\Gamma|}\, \sum_{\kappa\in\Gamma}\,
		f\big(x\,(\sigma\,\kappa^{-1})(y^{-1})\big)\,\delta_{\varphi}\big(\sigma\,\kappa^{-1}\big)\,
		\big(\kappa^{-1}\cdot\delta_{\omega})\, g(y)\,\delta_{\gamma}(\kappa)\right]\\[4pt]
		& \hspace{3cm} =\left[ (x,\sigma)\longmapsto \int_{G}\, \dd y \ \frac1{|\Gamma|}\,\sum_{\kappa\in\Gamma}\,
		f\big(x\,\varphi(y^{-1})\big)\,\delta_{\varphi\,\gamma}(\sigma)\,
		\big(\gamma^{-1}\cdot\delta_{\omega}\big)\, g(y)\,\delta_{\gamma}(\kappa)\right]\\[4pt]
		& \hspace{3cm} =\left[ (x,\sigma)\longmapsto
                  \frac1{|\Gamma|}\,\int_{G}\,\dd y \ 
		f\big(x\,\varphi(y^{-1})\big)\delta_{\varphi\,\gamma}(\sigma)\,
		\delta_{\omega\,\gamma}\, g(y)\right]\\[4pt]
		& \hspace{3cm} =\left[ (x,\sigma)\longmapsto
                  \frac1{|\Gamma|}\,\int_{G}\, \dd z \ 
		f\big(x\,z^{-1}\big)\,\delta_{\varphi\,\gamma}(\sigma)\,
		\delta_{\omega\,\gamma}\, g\big(\varphi^{-1}(z)\big)\right]\\[4pt]
		& \hspace{3cm} =\big(f*(\delta_{\varphi}\cdot
                  g)\big)\otimes\delta_{\varphi\,\gamma}\otimes\delta_{\omega\,\gamma}
                  \ ,
	\label{eq:KH-Wilson-action-cmputation}
	\end{align}
	where in the first step we used the delta-functions, in the
        second step we used the action of $\Gamma$ on $H$, and finally
        the invariance of the integral.
	This is exactly the right $K$-action in \eqref{eq:KH-Wilson-action}.

	For Part~\ref{prop:reverse-orbifold-bimod:3}, we use
        Part~\ref{prop:reverse-orbifold-bimod:1} to read off the right $K$-action
        on $(K\otimes H)_\mathrm{ind}$ to be
	\begin{align}
		(f\otimes\delta_{\varphi}\otimes\delta_{\omega})\cdot(g\otimes\delta_{\gamma})
		=\big(f*(\delta_{\varphi\,\omega}\cdot
          g)\big)\otimes\delta_{\varphi}\otimes\delta_{\omega\,\gamma}
          \ .
		\label{eq:KH-induced-action-computation}
	\end{align}
	Since $\Psi$ obviously commutes with the left $K$-actions, we only need to show that $\Psi$ commutes with the right $K$-actions:
	\begin{align}
		\Psi(f\otimes\delta_{\varphi}\otimes\delta_{\omega})
                \cdot (g\otimes\delta_{\gamma})
		&=( f\otimes\delta_{\varphi\,\omega}\otimes\delta_{\omega})\cdot(g\otimes\delta_{\gamma})
		=\big(f*(\delta_{\varphi\,\omega}\cdot
                g)\big)\otimes\delta_{\varphi\,\omega\,\gamma}\otimes\delta_{\omega\,\gamma}
                \ , \\[4pt]
		\Psi\big( (f\otimes\delta_{\varphi}\otimes\delta_{\omega})\cdot(g\otimes\delta_{\gamma}) \big)
		&=
		\Psi\big( (f*(\delta_{\varphi\,\omega}\cdot g))\otimes\delta_{\varphi}\otimes\delta_{\omega\,\gamma} \big)
		=\big(f*(\delta_{\varphi\,\omega}\cdot
                g)\big)\otimes\delta_{\varphi\,\omega\,\gamma}\otimes\delta_{\omega\,\gamma}
                \ .
		\label{eq:KH-intertwiner-computation}
	\end{align}
	Clearly $\Psi$ is an isomorphism.

	Part~\ref{prop:reverse-orbifold-bimod:4} follows from the facts that 
	 $L^2(\Gamma)\simeq
         \bigoplus_{c\in\widehat{\Gamma}}\,V_c^{\oplus v_c}$ as $\Gamma$-modules, where
         $v_c=\dim V_c$ and $V_c$ is a representative of the conjugacy
         class $c$, and that
        all simple $\Gamma$-modules are one-dimensional if and only if $\Gamma$ is abelian.
\end{proof}

As a consequence of Propositions~\ref{prop:adjoint-equiv} and~\ref{prop:reverse-orbifold-bimod} we get the defect for the backwards orbifold:
\begin{theorem}
	The orbifold theory of two-dimensional Yang-Mills theory with
        gauge group $G\rtimes\Gamma$ and with orbifold
	Wilson line defect 
	\begin{align}
		L^2(G\rtimes \Gamma)\otimes L^2(\Gamma)
		\label{eq:thm:backwards-orbifold-2dYM}
	\end{align}
	is two-dimensional Yang-Mills theory with gauge group $G$.
	\label{thm:backwards-orbifold-2dYM}
\end{theorem}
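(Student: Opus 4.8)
The plan is to obtain the theorem as a formal consequence of Proposition~\ref{prop:adjoint-equiv} and Proposition~\ref{prop:reverse-orbifold-bimod} together with the orbifold-completion machinery of \cite{Carqueville:2012orb}. First I would make precise the meaning of ``the orbifold theory of two-dimensional Yang-Mills theory with gauge group $G\rtimes\Gamma$ with orbifold Wilson line defect $L^2(G\rtimes\Gamma)\otimes L^2(\Gamma)$'': it is the generalised orbifold in $\Bscr_{\textsf{\tiny YM}}^{\mathrm{orb}}$ attached to the object $\big(K,(K\otimes H)_{\mathrm{Wilson}}\big)$, where $K=L^2(G\rtimes\Gamma)$, the representation $H=L^2(\Gamma)$ is pulled back along the projection $G\rtimes\Gamma\longrightarrow\Gamma$, and $(K\otimes H)_{\mathrm{Wilson}}$ is the $K$--$K$-bimodule of Example~\ref{ex:bimodule-Wilson} equipped with the strongly separable symmetric Frobenius algebra structure whose multiplication and comultiplication are the pointwise product and coproduct on $L^2(\Gamma)$, i.e.\ the junction fields~\eqref{Eq. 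Junction fields}. Since $K\otimes H=L^2(G\rtimes\Gamma)\otimes L^2(\Gamma)$, the theorem reduces to the claim that this object is equivalent, in $\Bscr_{\textsf{\tiny YM}}^{\mathrm{orb}}$, to the trivial object $\big(L,{}_{L}L_{L}\big)$ with $L=L^2(G)$, which by definition is ordinary Yang-Mills theory with gauge group $G$.

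The key input is the second adjoint equivalence of Proposition~\ref{prop:adjoint-equiv}, which identifies $\big(L,{}_{L}L_{L}\big)$ with $\big(K,{}_{K}K\otimes_{L}K_{K}\big)$ in $\Bscr_{\textsf{\tiny YM}}^{\mathrm{orb}}$, where $K\otimes_{L}K$ carries the canonical separable symmetric Frobenius algebra structure produced by \cite[Proposition~4.3]{Carqueville:2012orb} from the duality datum of the $1$-morphism ${}_{K}K_{L}$. By Proposition~\ref{prop:reverse-orbifold-bimod}\,(\ref{prop:reverse-orbifold-bimod:1}--\ref{prop:reverse-orbifold-bimod:3}) the isomorphism $\Psi$ identifies $K\otimes_{L}K$ with $(K\otimes H)_{\mathrm{Wilson}}$ as $K$--$K$-bimodules. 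What still needs to be checked is that $\Psi$ intertwines the two Frobenius algebra structures: the canonical one on $K\otimes_{L}K$ and the Wilson line one on $(K\otimes H)_{\mathrm{Wilson}}$ dictated by~\eqref{Eq. Junction fields}. I would do this by unwinding the multiplication and comultiplication of $K\otimes_{L}K$ in terms of the idempotent $\Dc_0^{K,K}$ of~\eqref{eq:idempotKLK} and its factorisation $\iota\circ\pi$ from~\eqref{eq:iota-pi-KLK}, and comparing with the pointwise operations on $L^2(\Gamma)$ pulled back to $G\rtimes\Gamma$; this is a graphical-calculus computation of the same flavour as those in the proof of Proposition~\ref{prop:reverse-orbifold-bimod}, and I expect it to be the main technical obstacle.

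Granting this, the proof concludes as follows. Because $\Bscr_{\textsf{\tiny YM}}$ is idempotent complete (Proposition~\ref{prop:B-idempotent-complete}) and pivotal, the results of \cite{Carqueville:2012orb} apply and an equivalence in the orbifold completion induces an equivalence of the corresponding generalised orbifold area-dependent quantum field theories with defects; here one must also confirm that this part of \cite{Carqueville:2012orb}, formulated for topological field theories, transfers verbatim to the area-dependent setting of \cite{Runkel:2018aqft} once $2$-morphisms are taken to be invariant families of bimodule morphisms. It then follows that the generalised orbifold of $\big(K,(K\otimes H)_{\mathrm{Wilson}}\big)$ equals that of $\big(L,{}_{L}L_{L}\big)$, namely Yang-Mills theory with gauge group $G$, proving the theorem. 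As an independent cross-check one can compute this orbifold theory directly on the generators of $\Bord{\textrm{area}}$ --- the cup, cap and pair of pants --- exactly as in the proof of Theorem~\ref{thm:orbifold-2dYM}, verifying that the induced regularised Frobenius algebra on the image of the orbifold projector is $\Cl(G)$, which also reproduces the lattice computation of Section~\ref{sec:reverseorb}.
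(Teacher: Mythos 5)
Your proposal follows essentially the same route as the paper: there the theorem is stated as an immediate consequence of Propositions~\ref{prop:adjoint-equiv} and~\ref{prop:reverse-orbifold-bimod} together with the orbifold completion machinery of \cite{Carqueville:2012orb}, exactly the chain of identifications you outline. The extra compatibility check you single out as the main technical obstacle is not performed in the paper either, because the Frobenius algebra structure on the Wilson line defect $L^2(G\rtimes\Gamma)\otimes L^2(\Gamma)$ is understood to be the canonical one on $K\otimes_{L}K$ from \cite[Proposition~4.3]{Carqueville:2012orb} transported along the bimodule isomorphism $\Psi$, so no further verification against the junction fields of Section~\ref{sec:reverseorb} is required for the statement as given.
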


\newpage

\phantomsection
\addcontentsline{toc}{section}{References}

\bibliographystyle{halphamod}

\end{document}